\documentclass[12pt]{amsart}
\usepackage[a4paper,margin=2.5cm]{geometry}
\usepackage[T1]{fontenc}
\usepackage[english]{babel}
\usepackage{microtype}
\usepackage{tikz}
\usepackage{tikz-cd}
\usepackage{amsmath,amssymb,amsthm,amscd,mathtools}
\usepackage{mathrsfs}
\usepackage{enumitem}
\usepackage{xcolor}
\usepackage{xcolor}
\definecolor{citeblue}{RGB}{0,70,140}
\usepackage[
  colorlinks=true,
  linkcolor=citeblue,
  citecolor=citeblue,
  urlcolor=citeblue
]{hyperref}
\usepackage{etoolbox}

\makeatletter
\patchcmd{\@setaddresses}
  {\par\addvspace\bigskipamount\indent}
  {\par\addvspace\bigskipamount\noindent}
  {}{}
\makeatother
\makeatletter
\def\l@subsection{\@tocline{2}{0pt}{3pc}{6pc}{}}
\makeatother
\makeatletter
\renewcommand\section{\@startsection{section}{1}%
  \z@{.7\linespacing\@plus\linespacing}{.5\linespacing}%
  {\normalfont\bfseries\centering}}

\renewcommand\subsection{\@startsection{subsection}{2}%
  \z@{.5\linespacing\@plus.7\linespacing}{-.5em}%
  {\normalfont\bfseries}}
\makeatother

\numberwithin{equation}{section}

\newtheorem{theorem}{Theorem}[section]
\newtheorem{prop}[theorem]{Proposition}
\newtheorem{lemma}[theorem]{Lemma}
\newtheorem{corollary}[theorem]{Corollary}
\newtheorem{definition}[theorem]{Definition}
\newtheorem*{remark}{Remark}

\DeclareMathOperator{\Det}{det}

\newcommand{\CS}{\operatorname{CS}}
\newcommand{\Tor}{\operatorname{Tor}}

\newcommand{\rk}{\operatorname{rk}}

\title[Functional--Integral Construction of Toral Chern--Simons Theory]{A Rigorous Functional--Integral Construction of Toral Chern--Simons Theory}
\author{Daniel Galviz}
\address{\noindent  YAU MATHEMATICAL SCIENCES CENTER AND DEPARTMENT OF MATHEMATICS, TSINGHUA
UNIVERSITY, BEIJING, CHINA.}
\date{}

\begin{document}

\begin{abstract}
We construct the functional integral of Abelian Chern--Simons theory with toral gauge group $\mathbb T=\mathfrak t/\Lambda \cong U(1)^n$ at level $K$, where $K:\Lambda\times\Lambda\to\mathbb Z$ is an even, integral, nondegenerate symmetric bilinear form, by exact zeta-regularized Gaussian evaluation of the formal quotient integral over connections modulo gauge. For closed $3$-manifolds, this yields a topological invariant; for manifolds with boundary, the relative functional integral produces the canonical boundary state. The resulting theory satisfies the required axioms of a $(2+1)$-dimensional TQFT.
\end{abstract}
\maketitle
{\scriptsize
\setlength{\parskip}{0pt}
\hypersetup{linkcolor=black}
\tableofcontents
}
\section{Introduction}
Chern--Simons theory admits several complementary mathematical descriptions.
Following Witten \cite{Witten:1988}, one is led to the formal oscillatory quotient integral
\[
Z_X
\sim
\int_{\mathcal A_P/\mathcal G_P}
\exp\!\bigl(2\pi i\,\operatorname{CS}_{X,P,K}(\Theta)\bigr)\,[D\Theta].
\]
In the Abelian case, one also has a rigorous construction by geometric quantization that leads
topological quantum field theory \cite{Manoliu1,Galviz2}. In the toral case, one further has
a categorical description in terms of pointed modular tensor categories and the associated
Reshetikhin--Turaev theories determined by the corresponding finite quadratic data
\cite{Reshetikhin:1991,Galviz1,Galviz3}. While the underlying theory is well established, Abelian Chern–Simons theory remains a fundamental example in which the path integral can be made mathematically explicit and compared with modern TQFT and quantization frameworks.
 
For the rank-one theory \(\mathbb T=U(1)\), Manoliu proved that the formal
functional-integral description may be made rigorous and agrees with the Abelian
Chern--Simons TQFT \cite{Manoliu2,Manoliu3}: the quotient integral is evaluated exactly by
Gaussian methods, both for closed manifolds and in the boundary theory.

The purpose of this paper is to carry out the corresponding construction for a compact torus $\mathbb T=\mathfrak t/\Lambda \cong U(1)^n$ equipped with an even, integral, nondegenerate symmetric bilinear form $K:\Lambda\times\Lambda\to\mathbb Z.$ This is the natural higher-rank Abelian analogue of  $U(1)$ Chern--Simons functional-integral treatment. 

Our starting point is the formal quotient integral over the space of gauge equivalence classes of connections on
$P$, that is, $\mathcal A_P/\mathcal G_P$. However, we do not attempt to define a  measure on this infinite-dimensional quotient. The rigorous object is instead the exact
zeta-regularized Gaussian evaluation of the corresponding oscillatory integral, as in \cite{Schwarz:1978cn}. The key simplification is that toral Chern--Simons theory remains Gaussian. After translation
by a flat  connection, the action becomes a quadratic functional on
$\Omega^1(X;\mathfrak t)$, and Hodge theory reduces the formal quotient integral to three
pieces: a harmonic sector, a gauge sector, and a nondegenerate coexact Gaussian sector.
The new higher-rank ingredient is the finite-dimensional lattice form $K$. Its spectral
factorization produces the universal factor $|\det K|^{m_X}$ and modifies the eta-correction
by the signature $\sigma(K)$.

For a closed oriented $3$-manifold $X$, the result is a topological invariant of the form
\[
Z^{\mathrm{CS}}_X(\mathbb T,K)
=
\frac{1}{\#\Tor H^2(X;\Lambda)}
\sum_{p\in \Tor H^2(X;\Lambda)}
|\det K|^{m_X}\,
e^{2\pi i\,\CS_{X,P,K}(\Theta_p)}
\int_{\mathcal M_{X,p}(\mathbb T)}
T_X(\mathfrak t)^{1/2},
\]
after the standard eta-correction. Thus the closed theory is expressed, exactly as in the geometric quantization framework \cite{Galviz2}, in terms of flat-connection phases and analytic torsion, with the finite-dimensional toral factor controlled by $K$. Compare this formula with the recent Deligne–Beilinson cohomology construction of the partition function for  $U(1)^n$ Chern–Simons theory on closed oriented 3--manifolds \cite{Kim2024}; unlike that approach, which relies on a formal normalization of the functional integral, the present construction is derived directly from the Gaussian evaluation of the quotient path integral and extends naturally to manifolds with boundary. It would be interesting to understand the precise relation between the two formalisms.

For manifolds with  boundary, the functional integral is relative: one fixes the
boundary connection and evaluates the resulting Gaussian quotient integral. The outcome is
not merely a scalar function on boundary fields. Rather, after the torsion correction, one
obtains the canonical boundary state
\[
|\det K|^{m_X}\,\sigma_{X,p}\otimes \mu_{X,p},
\]
where $\sigma_{X,p}$ is the toral Chern--Simons section and $\mu_{X,p}$ is the torsion
half-density on the corresponding Bohr--Sommerfeld leaf. In particular, the boundary value
defined by the functional integral agrees with the toral state produced by geometric
quantization in real polarization.

The main result of the paper is therefore a rigorous functional-integral construction of toral
Chern--Simons theory which recovers the same closed partition functions, boundary state
spaces, and bordism vectors as the geometric-quantization theory \cite{Galviz2}.

The organization of this paper is as follows. First, we define the exponentiated toral Chern--Simons action
by four-dimensional extension and identify the classical boundary phase space together with
its prequantum line bundle. Second, we perform the closed-sector gauge reduction and the
exact Gaussian evaluation, including the determinant bookkeeping that produces the factor
$|\det K|^{m_X}$. Third, we carry out the relative boundary calculation and identify the
result with the canonical toral boundary state. Finally, we verify the cylinder  and  gluing law, thereby recovering the TQFT structure in the sense of Atiyah \cite{Atiyah:1989vu}.

\textbf{Acknowledgements.} I would like to thank Nicolai Reshetikhin for many helpful conversations and suggestions, for introducing me to Abelian Chern–Simons theory, and for his support throughout the development of this project.
%==========================================================
\section{Functional-Integral Construction of Toral Chern--Simons Theory}
%==========================================================

In this section we formulate a toral  functional-integral construction for Abelian Chern--Simons theory, now with gauge group $\mathbb T=\mathfrak t/\Lambda\cong U(1)^n$ and level given by an even, integral, nondegenerate symmetric bilinear form $K\colon \Lambda\times \Lambda\to \mathbb Z.$

The construction is rigorous in the same sense as in the rank-one case \cite{Manoliu3}:
the quotient integral over \(\mathcal A_P/\mathcal G_P\) is not taken as a primitive
measure-theoretic object; instead, the functional integral is defined by exact Gaussian
evaluation of the corresponding oscillatory quotient integral, together with zeta-regularized
determinants \cite{Seeley67,Ray1971,Schwarz1978}, eta-invariants \cite{Atiyah:1975jf}, and Ray--Singer torsion \cite{Ray1971}. The boundary prequantum line,
Bohr--Sommerfeld leaves, half-densities, and gluing normalization  were constructed already in the toral
construction via real polarization \cite{Galviz2}.

Throughout this work, \(X\) denotes a compact connected oriented \(3\)-manifold, possibly with nonempty
boundary \(\Sigma=\partial X\).  If \(X\) has boundary, all Riemannian metrics on \(X\) are assumed to be product metrics
near \(\Sigma\). All determinants and eta-invariants are understood in the zeta-regularized sense. All principal \(\mathbb T\)-bundles considered below have torsion characteristic class. For a closed manifold, the phrase ``rigorous functional integral'' means the exact
Gaussian evaluation of the formal oscillatory quotient integral.

\subsection{Classical setup}

Let \(E\) be a finite-dimensional real vector space. We write
\[
\Det E:=\bigwedge^{\dim E}E
\]
for its determinant line. If \(C^\bullet\) is a finite cochain complex, we write
\[
\Det H^\bullet(C^\bullet)
:=
\bigotimes_q\bigl(\Det H^q(C^\bullet)\bigr)^{(-1)^q}.
\]
If \(X\) is a compact Riemannian manifold with boundary \(\Sigma\), we use the standard
Hodge identifications
\[
H^q(X;\mathbb R)\cong \mathcal H^q_{\mathrm{abs}}(X),
\qquad
H^q(X,\Sigma;\mathbb R)\cong \mathcal H^q_{\mathrm{rel}}(X),
\]
where \(\mathcal H^q_{\mathrm{abs}}(X)\) denotes harmonic \(q\)-forms satisfying absolute
boundary conditions, and \(\mathcal H^q_{\mathrm{rel}}(X)\) denotes harmonic \(q\)-forms
satisfying relative boundary conditions \cite{Manoliu3,BruningMa}. If \(X\) is closed, the Hodge decomposition reads
\begin{equation}
\label{eq:closed-hodge-toral}
\Omega^1(X;\mathfrak t)
=
\mathcal H^1_{\mathrm{abs}}(X;\mathfrak t)\oplus
d\Omega^0(X;\mathfrak t)\oplus
d^*\Omega^2(X;\mathfrak t).
\end{equation}
If \(X\) has nonempty boundary, the corresponding relative decomposition is
\begin{equation}
\label{eq:relative-hodge-toral}
\Omega^1_{\mathrm{rel}}(X;\mathfrak t)
=
\mathcal H^1_{\mathrm{rel}}(X;\mathfrak t)\oplus
d\Omega^0_0(X;\mathfrak t)\oplus
d^*\Omega^2_{\mathrm{abs}}(X;\mathfrak t),
\end{equation}
where \(\Omega^0_0(X;\mathfrak t)\) denotes the smooth \(\mathfrak t\)-valued functions
vanishing on \(\Sigma\). We write $T_X(\mathfrak t)$ for the Ray--Singer torsion of the trivial \(\mathfrak t\)-local system on \(X\) \cite{Ray1971,Muller1978,Cheeger1979}. 

For closed \(X\), its square root induces a translation-invariant density on \(H^1(X;\mathfrak t)/H^1(X;\Lambda)\). For manifolds with boundary, its square root induces
a canonical half-density on the relative harmonic torus. Let \(\Sigma\) be a closed oriented surface. Define
\[
\mathcal M_\Sigma(\mathbb T)
:=
H^1(\Sigma;\mathfrak t)/H^1(\Sigma;\Lambda).
\]
Since \(H^1(\Sigma;\Lambda)\) is a full lattice in \(H^1(\Sigma;\mathfrak t)\),
the quotient \(\mathcal M_\Sigma(\mathbb T)\) is a compact torus. The bilinear form \(K\)
induces a \(2\)-form on \(\mathcal M_\Sigma(\mathbb T)\) by
\begin{equation}
\label{eq:symplectic-form-toral}
\omega_{\Sigma,K}([\alpha],[\beta])
:=
\int_\Sigma K(\alpha\wedge \beta),
\qquad
[\alpha],[\beta]\in H^1(\Sigma;\mathfrak t).
\end{equation}

\begin{prop}
\label{prop:phase-space-toral}
The pair \((\mathcal M_\Sigma(\mathbb T),\omega_{\Sigma,K})\) is a compact integral
symplectic torus.
\end{prop}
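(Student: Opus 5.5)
We prove the three properties in turn: well-definedness and closedness, nondegeneracy, and integrality. The torus $\mathcal M_\Sigma(\mathbb T)=H^1(\Sigma;\mathfrak t)/H^1(\Sigma;\Lambda)$ is compact because $H^1(\Sigma;\Lambda)\cong H^1(\Sigma;\mathbb Z)\otimes\Lambda$ is a full lattice; this uses that $H^1(\Sigma;\mathbb Z)$ is free for a closed surface, so the universal coefficient theorem gives $H^1(\Sigma;\mathfrak t)=H^1(\Sigma;\Lambda)\otimes_{\mathbb Z}\mathbb R$.

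\textbf{Well-definedness and closedness.} The pairing \eqref{eq:symplectic-form-toral} depends only on cohomology classes. By Stokes, $\int_\Sigma K(d f\wedge\beta)=\int_\Sigma d\,K(f\,\beta)=0$ for closed $\beta$, since $\partial\Sigma=\emptyset$. Thus $\omega_{\Sigma,K}$ is a constant-coefficient, translation-invariant skew bilinear form on the vector space $H^1(\Sigma;\mathfrak t)$. It therefore descends to a translation-invariant $2$-form on the quotient torus, which is automatically closed.

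\textbf{Nondegeneracy.} Choose a basis $e_1,\dots,e_n$ of $\Lambda$ and a symplectic basis $a_1,\dots,a_g,b_1,\dots,b_g$ of $H^1(\Sigma;\mathbb Z)$, with $\int_\Sigma a_i\wedge b_j=\delta_{ij}$ and all other pairings zero. The classes $a_i\otimes e_k$ and $b_j\otimes e_l$ form a $\mathbb Z$-basis of $H^1(\Sigma;\Lambda)$. In this basis,
\[
\omega_{\Sigma,K}(a_i\otimes e_k,\,b_j\otimes e_l)=\delta_{ij}K(e_k,e_l),
\]
and the form is zero on pairs of $a$'s or pairs of $b$'s. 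So the Gram matrix of $\omega_{\Sigma,K}$ is $J_g\otimes K$, where $J_g$ is the intersection form of $\Sigma$. Its determinant is $\det(J_g)^n\det(K)^{2g}=(\det K)^{2g}$. This is nonzero because $K$ is nondegenerate over $\mathbb Q$, hence over $\mathbb R$. Equivalently, $\omega_{\Sigma,K}$ is the tensor product of two nondegenerate forms, Poincaré duality on $\Sigma$ and $K$ on $\mathfrak t$.

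\textbf{Integrality.} The same Gram matrix has integer entries, since $K$ is $\mathbb Z$-valued on $\Lambda$. Hence $\omega_{\Sigma,K}$ takes integer values on $H^1(\Sigma;\Lambda)\times H^1(\Sigma;\Lambda)$. Now identify $H_2(\mathcal M_\Sigma;\mathbb Z)\cong\bigwedge^2 H^1(\Sigma;\Lambda)$, with the $2$-torus spanned by lattice vectors $\lambda,\mu$ as generator $\lambda\wedge\mu$. The period of the invariant form on that $2$-torus is $\omega_{\Sigma,K}(\lambda,\mu)\in\mathbb Z$. Therefore $[\omega_{\Sigma,K}]$ lies in the image of $H^2(\mathcal M_\Sigma;\mathbb Z)$, so the torus is integral and admits a prequantum line bundle.

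\textbf{Main obstacle.} The only delicate point is the lattice identification $H^1(\Sigma;\Lambda)=H^1(\Sigma;\mathbb Z)\otimes\Lambda$ together with integrality of periods. The remaining steps are linear algebra. Evenness of $K$ is not needed here; it becomes relevant only for the Chern--Simons phase later in the paper.
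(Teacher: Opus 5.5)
Your proof is correct and follows the same route as the paper: alternation and nondegeneracy come from the intersection pairing on $\Sigma$ tensored with the nondegenerate form $K$, and integrality comes from $K$ being $\mathbb Z$-valued on $H^1(\Sigma;\Lambda)$. The rest of what you add makes precise steps the paper states without detail: the explicit Gram matrix $J_g\otimes K$ with determinant $(\det K)^{2g}$, the compactness and closedness checks, and the identification $H_2(\mathcal M_\Sigma(\mathbb T);\mathbb Z)\cong\bigwedge^2 H^1(\Sigma;\Lambda)$, which explains why integer values on the lattice give an integral cohomology class.
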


\begin{proof}
The alternating property follows because the cup-product pairing on \(H^1(\Sigma;\mathbb R)\) is
alternating and \(K\) is symmetric. Nondegeneracy follows from the nondegeneracy of the
intersection pairing on \(H^1(\Sigma;\mathbb R)\) together with the nondegeneracy of \(K\).
If \(u,v\in H^1(\Sigma;\Lambda)\), then \(K(u\wedge v)\in H^2(\Sigma;\mathbb Z)\), so
\[
\int_\Sigma K(u\wedge v)\in \mathbb Z.
\]
Hence \(\omega_{\Sigma,K}\) is integral on the lattice \(H^1(\Sigma;\Lambda)\), and therefore
descends to an integral symplectic form on \(\mathcal M_\Sigma(\mathbb T)\).
\end{proof}

Now let \(X\) be a compact oriented \(3\)-manifold with boundary \(\Sigma\). Let
\[
r_X\colon H^1(X;\mathfrak t)\longrightarrow H^1(\Sigma;\mathfrak t)
\]
be the restriction map, and define
\begin{equation}
\label{eq:boundary-lagrangian}
L_X:=\operatorname{Im}(r_X)\subset H^1(\Sigma;\mathfrak t).
\end{equation}

\begin{prop}
\label{prop:lagrangian-boundary-data-toral}
The subspace \(L_X\subset H^1(\Sigma;\mathfrak t)\) is Lagrangian with respect to
\(\omega_{\Sigma,K}\).
\end{prop}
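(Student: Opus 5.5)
The plan is to reduce the statement to the classical rank-one fact that the image of $H^1(X;\mathbb R)\to H^1(\Sigma;\mathbb R)$ is Lagrangian for the intersection pairing, and then tensor with $\mathfrak t$. Since $H^1(\,\cdot\,;\mathfrak t)=H^1(\,\cdot\,;\mathbb R)\otimes\mathfrak t$ and $r_X=r\otimes\mathrm{id}_{\mathfrak t}$, where $r\colon H^1(X;\mathbb R)\to H^1(\Sigma;\mathbb R)$ is real restriction, we have $L_X=L\otimes\mathfrak t$ with $L=\operatorname{Im}(r)$. Moreover, on decomposable elements the form \eqref{eq:symplectic-form-toral} factors as
\[
\omega_{\Sigma,K}(a\otimes s,\;b\otimes t)=K(s,t)\int_\Sigma a\wedge b ,
\]
with $K$ extended $\mathbb R$-bilinearly to $\mathfrak t$. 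This factorization is used in both steps below.

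\emph{Isotropy.} Let $a=r(\tilde a)$ and $b=r(\tilde b)$ with $\tilde a,\tilde b\in H^1(X;\mathbb R)$, represented by closed forms on $X$. Stokes' theorem gives
\[
\int_\Sigma a\wedge b=\int_X d(\tilde a\wedge\tilde b)=0 .
\]
Hence $L$ is isotropic for the intersection form, and by the factorization $L_X$ is isotropic for $\omega_{\Sigma,K}$.

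\emph{Half dimension.} This is the main step. I would use the long exact sequence of the pair $(X,\Sigma)$ together with Poincaré--Lefschetz duality. In the segment
\[
H^1(X)\xrightarrow{r}H^1(\Sigma)\xrightarrow{\delta}H^2(X,\Sigma),
\]
exactness gives $\dim L=\dim\ker\delta$. Duality identifies $H^2(X,\Sigma)\cong H^1(X)^*$, and under this identification $\delta$ is, up to sign, the transpose of $r$ composed with the duality isomorphism $H^1(\Sigma)\cong H^1(\Sigma)^*$ given by the intersection form. The sign is irrelevant for ranks. Therefore $\operatorname{rk}\delta=\operatorname{rk}r=\dim L$. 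Then
\[
\dim H^1(\Sigma)=\dim\ker\delta+\operatorname{rk}\delta=2\dim L,
\]
which holds even when $\Sigma$ is disconnected. Concretely, $\ker\delta=L$ equals the annihilator $L^{\perp}$ of $L$ under the intersection form, so $L=L^{\perp}$. The point needing care is the compatibility of $\delta$ with the duality pairings, which comes from Stokes: for closed $\tilde c$ on $X$,
\[
\int_X\delta(a)\wedge\tilde c=\pm\int_\Sigma a\wedge r(\tilde c).
\]

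\emph{Lagrangian for $\omega_{\Sigma,K}$.} Take $x\in H^1(\Sigma;\mathfrak t)$ with $\omega_{\Sigma,K}(x,L_X)=0$. Choose a basis $e_i$ of $\mathfrak t$ and write $x=\sum_i x_i\otimes e_i$. For every $\ell\in L$ and every $t\in\mathfrak t$, the condition gives
\[
\sum_i K(e_i,t)\int_\Sigma x_i\wedge\ell=0 .
\]
Since $K$ is nondegenerate on $\mathfrak t$, this forces $\int_\Sigma x_i\wedge\ell=0$ for each $i$. Thus every $x_i$ lies in $L^{\perp}=L$, and so $x\in L_X$. This shows $L_X^{\perp_\omega}=L_X$, so $L_X$ is Lagrangian.
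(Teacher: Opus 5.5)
Your proposal is correct and follows essentially the same route as the paper: isotropy by Stokes, and maximality via exactness $\operatorname{Im}(r_X)=\ker\delta$ combined with Poincar\'e--Lefschetz duality and the nondegeneracy of $K$, which together identify $\ker\delta$ with the symplectic orthogonal. The only difference is organizational. You first prove $L=L^{\perp}$ in rank one, making explicit the Stokes identity $\int_X\delta(a)\wedge\tilde c=\pm\int_\Sigma a\wedge r(\tilde c)$ that the paper leaves implicit, and then pass to $L_X=L\otimes\mathfrak t$ using nondegeneracy of $K$; the paper does both steps at once with $\mathfrak t$-coefficients.
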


\begin{proof}
Let \(a,b\in H^1(X;\mathfrak t)\), represented by closed \(\mathfrak t\)-valued \(1\)-forms
still denoted \(a,b\). Then
\[
\omega_{\Sigma,K}(r_Xa,r_Xb)
=
\int_\Sigma r_X\bigl(K(a\wedge b)\bigr)
=
\int_X d\,K(a\wedge b)
=
0.
\]
Thus \(L_X\) is isotropic. To prove maximality, consider the long exact sequence of the pair \((X,\Sigma)\):
\[
H^1(X,\Sigma;\mathfrak t)\xrightarrow{j^*}
H^1(X;\mathfrak t)\xrightarrow{r_X}
H^1(\Sigma;\mathfrak t)\xrightarrow{\delta}
H^2(X,\Sigma;\mathfrak t).
\]
Exactness gives \(\operatorname{Im}(r_X)=\ker(\delta)\). By Poincar\'e--Lefschetz duality and
the nondegeneracy of \(K\), the symplectic orthogonal of \(L_X\) identifies with \(\ker(\delta)\).
Therefore
\[
L_X^\perp=\ker(\delta)=\operatorname{Im}(r_X)=L_X,
\]
so \(L_X\) is Lagrangian.
\end{proof}

Principal \(\mathbb T\)-bundles over \(X\) are classified by \(H^2(X;\Lambda)\). If
\(P\to X\) is a principal \(\mathbb T\)-bundle, we write \(c(P)\in H^2(X;\Lambda)\) for its
characteristic class.

\begin{lemma}
\label{lem:torsion-flat-bundles-toral}
Let
\[
0\longrightarrow \Lambda\longrightarrow \mathfrak t\longrightarrow \mathbb T\longrightarrow 0
\]
be the defining short exact sequence. Then the connecting morphism
\[
\beta\colon H^1(X;\mathbb T)\longrightarrow H^2(X;\Lambda)
\]
has image \(\operatorname{Tor}H^2(X;\Lambda)\). In particular, a principal \(\mathbb T\)-bundle
admits a flat connection if and only if its characteristic class is torsion.
\end{lemma}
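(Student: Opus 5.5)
We will use the long exact sequence in cohomology attached to the coefficient sequence $0\to\Lambda\to\mathfrak t\to\mathbb T\to 0$, where $\mathbb T$ carries the discrete topology; equivalently, we work with locally constant sheaves. The relevant segment is
\[
H^1(X;\mathfrak t)\longrightarrow H^1(X;\mathbb T)\xrightarrow{\ \beta\ } H^2(X;\Lambda)\xrightarrow{\ \iota_*\ } H^2(X;\mathfrak t).
\]
By exactness, $\operatorname{Im}\beta=\ker\iota_*$. The first task is therefore to identify $\ker\iota_*$ with $\operatorname{Tor}H^2(X;\Lambda)$.

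Since $\Lambda\cong\mathbb Z^n$ is free and $\mathfrak t=\Lambda\otimes_{\mathbb Z}\mathbb R$, the universal coefficient theorem gives $H^2(X;\mathfrak t)\cong H^2(X;\Lambda)\otimes_{\mathbb Z}\mathbb R$. Under this identification $\iota_*$ becomes $c\mapsto c\otimes 1$. Here $X$ is compact, so $H^2(X;\Lambda)$ is finitely generated. The kernel of $A\to A\otimes\mathbb R$ for a finitely generated abelian group $A$ is exactly its torsion subgroup, since $\mathbb R$ is flat over $\mathbb Z$ and $A\cong\mathbb Z^r\oplus\operatorname{Tor}A$. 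This gives $\operatorname{Im}\beta=\operatorname{Tor}H^2(X;\Lambda)$.

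For the "in particular" statement, we use that flat $\mathbb T$-bundles correspond to holonomy homomorphisms $\pi_1(X)\to\mathbb T$, i.e.\ to classes in $H^1(X;\mathbb T_{\delta})=\operatorname{Hom}(H_1(X),\mathbb T)$. The point to check is that the characteristic class of the flat bundle attached to $\rho\in H^1(X;\mathbb T)$ is $\beta(\rho)$. In Čech terms, we choose a good cover and locally constant transition functions $g_{ij}\in\mathbb T$. We lift them to $\tilde g_{ij}\in\mathfrak t$; the coboundary $\tilde g_{ij}+\tilde g_{jk}+\tilde g_{ki}\in\Lambda$ represents $\beta(\rho)$. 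The same cocycle computes $c(P)$ via the sheaf sequence $0\to\Lambda\to\underline{\mathfrak t}\to\underline{\mathbb T}\to0$ of smooth maps, whose connecting map $H^1(X;\underline{\mathbb T})\xrightarrow{\cong}H^2(X;\Lambda)$ classifies bundles because $\underline{\mathfrak t}$ is fine. Naturality of connecting maps under the inclusion of locally constant sheaves into smooth sheaves yields $c(P_\rho)=\beta(\rho)$.

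Hence a bundle admitting a flat connection has torsion class. Conversely, if $c(P)$ is torsion, the first part provides a $\rho$ with $\beta(\rho)=c(P)$, so $P\cong P_\rho$ carries a flat connection.

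The main obstacle is this compatibility of connecting maps, i.e.\ identifying the characteristic class of a flat bundle with $\beta$ of its holonomy. We handle it by the Čech/naturality argument above rather than by Chern–Weil theory. Chern–Weil alone gives only that the real image $\iota_*c(P)=[F/2\pi]$ vanishes for flat $P$, which suffices for one direction but not for the converse.
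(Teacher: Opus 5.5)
Your proof is correct and follows the same route as the paper. Both arguments rest on exactness of the long exact sequence for $0\to\Lambda\to\mathfrak t\to\mathbb T\to 0$ and on identifying flat $\mathbb T$-bundles with classes in $H^1(X;\mathbb T)$.

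Your version is actually more complete than the paper's, in two places:
\begin{enumerate}
\item The paper only invokes torsion-freeness of $H^2(X;\mathfrak t)$. That gives $\operatorname{Tor}H^2(X;\Lambda)\subseteq\operatorname{Im}\beta$ but not the reverse inclusion, which your universal-coefficient step $H^2(X;\mathfrak t)\cong H^2(X;\Lambda)\otimes_{\mathbb Z}\mathbb R$ supplies.
\item The paper asserts the compatibility $c(P_\rho)=\beta(\rho)$ without proof, and your \v{C}ech naturality argument verifies it.
\end{enumerate}
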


\begin{proof}
Since \(\mathfrak t\) is a real vector space, the group \(H^2(X;\mathfrak t)\) is torsion-free.
Thus in the long exact cohomology sequence associated to
\[
0\longrightarrow \Lambda\longrightarrow \mathfrak t\longrightarrow \mathbb T\longrightarrow 0,
\]
the image of \(\beta\) is precisely the torsion subgroup of \(H^2(X;\Lambda)\). But classes in
\(H^1(X;\mathbb T)\) classify flat \(\mathbb T\)-bundles. This proves the claim.
\end{proof}

Fix \(p\in \operatorname{Tor}H^2(X;\Lambda)\). Let \(P\to X\) be a principal \(\mathbb T\)-bundle
with \(c(P)=p\). By Lemma~\ref{lem:torsion-flat-bundles-toral}, \(P\) admits flat connections.

\begin{definition}
\label{def:exp-CS-toral}
Let \(P\to X\) be a principal \(\mathbb T\)-bundle with \(c(P)\in \operatorname{Tor}H^2(X;\Lambda)\),
and let \(\Theta\in \mathcal A_P\) be a connection.

\begin{enumerate}\renewcommand{\labelenumi}{(\roman{enumi})}
\item If \(\partial X=\varnothing\), define the exponentiated toral Chern--Simons action by
\begin{equation}
\label{eq:exp-action-closed-toral}
e^{2\pi i\,\operatorname{CS}_{X,P,K}(\Theta)}
:=
\exp\!\left(
\frac{i}{4\pi}\int_W K(F_{\widetilde\Theta}\wedge F_{\widetilde\Theta})
\right),
\end{equation}
where \(W\) is any compact oriented \(4\)-manifold with \(\partial W=X\), equipped with a
principal \(\mathbb T\)-bundle \(\widetilde P\to W\) and connection \(\widetilde\Theta\) extending
\((P,\Theta)\).

\item If \(\partial X=\Sigma\neq\varnothing\), the same filling formula defines a vector in a
Hermitian line over the boundary field \(\Theta|_\Sigma\). The resulting line bundle over
\(\mathcal M_\Sigma(\mathbb T)\) is denoted
\[
L_{\Sigma,K}\longrightarrow \mathcal M_\Sigma(\mathbb T)
\]
and called the toral Chern--Simons prequantum line bundle.
\end{enumerate}
\end{definition}

This is the toral analogue of the exponentiated Chern--Simons action defined by four-dimensional extension; compare \cite{FreedCS} and, in the present toral setting \cite{Galviz2}.

\begin{prop}
\label{prop:well-defined-exp-action-toral}
The phase \eqref{eq:exp-action-closed-toral} is well defined. The bundle
\(L_{\Sigma,K}\to \mathcal M_\Sigma(\mathbb T)\) is well defined and carries a unitary
connection of curvature \(-2\pi i\,\omega_{\Sigma,K}\).
\end{prop}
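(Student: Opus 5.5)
The plan has three parts: existence of fillings, independence of the filling, and construction of the line bundle with its connection.

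\emph{Existence of fillings.} We need a filling $(W,\widetilde P,\widetilde\Theta)$ with $\partial W=X$. Since $\Omega^{SO}_3=0$, every closed oriented $3$-manifold bounds some $W$. The bundle extends if and only if $c(P)$ lies in the image of $H^2(W;\Lambda)\to H^2(X;\Lambda)$. As $\Lambda\cong\mathbb Z^n$, this reduces to the rank-one problem componentwise. There one uses $\Omega_3^{SO}(B\mathbb T)=\Omega_3^{SO}(K(\Lambda,2))=0$, which follows from the Atiyah--Hirzebruch spectral sequence: $H_3(K(\mathbb Z^n,2);\mathbb Z)=0$ and $H_*(K(\mathbb Z^n,2))$ vanishes in the relevant odd degrees. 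So the pair $(X,P)$ bounds $(W,\widetilde P)$. Any connection extends via a partition of unity, using the product structure near the boundary.

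\emph{Independence of the filling.} Take two fillings $(W_1,\widetilde\Theta_1)$ and $(W_2,\widetilde\Theta_2)$ and glue them into the closed $4$-manifold $Z=W_1\cup_X\overline{W_2}$, carrying a bundle $Q$ and a connection $\Theta_Z$. The ratio of the two phases is
\[
\exp\Bigl(\tfrac{i}{4\pi}\int_Z K(F\wedge F)\Bigr).
\]
By Chern--Weil, $\tfrac{1}{2\pi}F$ represents the image of $c(Q)\in H^2(Z;\Lambda)$ in $H^2(Z;\mathfrak t)$. Hence
\[
\tfrac{1}{4\pi}\int_Z K(F\wedge F)=\pi\,\langle K(c(Q)\cup c(Q)),[Z]\rangle .
\]
The integer $\langle K(c\cup c),[Z]\rangle$ is even. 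To see this, write it as $\sum K_{ij}\langle c_i\cup c_j,[Z]\rangle$ in a basis of $\Lambda$. The diagonal terms $K_{ii}$ are even, and each off-diagonal pair contributes $2K_{ij}$. The phase is therefore $e^{2\pi i\cdot\text{integer}}=1$. This is exactly where evenness of $K$ is used, since the intersection form on $Z$ need not be even.

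\emph{Boundary case and the line bundle.} When $\partial X=\Sigma$, I would follow Freed's construction \cite{FreedCS}. For a connection $\theta$ on a bundle over $\Sigma$, let $L_\theta$ be the set of functions on fillings $(Y,\widetilde\theta)$ of $(\Sigma,\theta)$ with $\partial Y=\Sigma$, satisfying the equivariance
\[
f(Y',\widetilde\theta')=\exp\Bigl(\tfrac{i}{4\pi}\int_{Y\cup\overline{Y'}}K(F\wedge F)\Bigr)\,f(Y,\widetilde\theta).
\]
The closed case, applied in dimension three, shows that this cocycle is consistent. So $L_\theta$ is a Hermitian line, and the filling formula for $X$ defines an element of it. Gauge equivalences act on $L_\theta$ by phases. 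Restricting to flat $\theta$ and modding out gives a line bundle over $\mathcal M_\Sigma(\mathbb T)$; to check this, I must verify that the action of the stabilizer (constant gauge transformations) is trivial. The same evenness argument, now on $\Sigma\times S^1$, supplies the needed cocycle condition for the action of $H^1(\Sigma;\Lambda)$.

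\emph{Connection and curvature.} The connection is defined by parallel transport along a path $\theta_t$: take the cylinder $\Sigma\times[0,1]$ with connection $\theta_t+0\,dt$ and the induced isomorphism of lines. The curvature is computed by transport around a small loop in $H^1(\Sigma;\mathfrak t)$. Filling by a $2$-parameter family, the holonomy equals $\exp\bigl(\tfrac{i}{4\pi}\int K(F\wedge F)\bigr)$ over $\Sigma\times D^2$. For $F=\alpha\wedge ds+\beta\wedge dt$ this gives
\[
\exp\bigl(-2\pi i\textstyle\int\omega_{\Sigma,K}\bigr)
\]
after normalization, with $\omega_{\Sigma,K}$ as in \eqref{eq:symplectic-form-toral}. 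The main obstacle is tracking the $2\pi$ and sign normalizations in this step, namely that $\mathfrak t$ is normalized so that $\Lambda$ gives holonomy periods $2\pi$. I would first calibrate against the rank-one case, checking that $K=(2)$ reproduces Manoliu's level-$1$ curvature. Integrality, Proposition~\ref{prop:phase-space-toral}, confirms consistency with Weil's theorem.
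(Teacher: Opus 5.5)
Your proposal is correct and follows the same route as the paper: glue the two fillings into a closed $4$-manifold, apply Chern--Weil, and use evenness of $K$. Beyond that, it supplies details the paper only asserts, namely existence of fillings via $\Omega^{SO}_3(K(\Lambda,2))=0$, the diagonal/off-diagonal parity count, and the Freed-style line and holonomy computation behind the ``standard transgression formula''. One slip to fix: in your cocycle for $L_\theta$ the factor should be the closed $3$-manifold phase $e^{2\pi i\,\CS}$ of $Y\cup\overline{Y'}$ (itself defined by a $4$-dimensional filling), because $\int_{Y\cup\overline{Y'}}K(F\wedge F)$ as written integrates a $4$-form over a $3$-manifold.
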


\begin{proof}
Suppose \((W_1,\widetilde P_1,\widetilde\Theta_1)\) and
\((W_2,\widetilde P_2,\widetilde\Theta_2)\) are two fillings of \((X,P,\Theta)\). Gluing
\(W_1\) to \(-W_2\) along \(X\) produces a closed oriented \(4\)-manifold \(M\), a principal
\(\mathbb T\)-bundle \(\widehat P\to M\), and a connection \(\widehat\Theta\). The ratio of the
two candidate phases is
\[
\exp\!\left(
\frac{i}{4\pi}\int_M K(F_{\widehat\Theta}\wedge F_{\widehat\Theta})
\right).
\]
By Chern--Weil theory,
\[
\frac{1}{(2\pi)^2}\int_M K(F_{\widehat\Theta}\wedge F_{\widehat\Theta})
=
K(c(\widehat P),c(\widehat P))[M].
\]
Since \(K\) is integral and even on \(\Lambda\), the right-hand side lies in \(2\mathbb Z\), so
the ratio is \(1\). Hence the phase is independent of the chosen filling. The same comparison-of-fillings argument yields the boundary line bundle. The standard
transgression formula gives its curvature:
\[
F_{\nabla^{L_{\Sigma,K}}}=-2\pi i\,\omega_{\Sigma,K}.
\]
\end{proof}

Fix \(p\in \operatorname{Tor}H^2(X;\Lambda)\), a principal \(\mathbb T\)-bundle \(P\to X\) with
\(c(P)=p\), and a flat connection \(\Theta_p\in \mathcal A_P\).

\begin{definition}
If \(X\) is closed, define
\[
\mathcal M_{X,p}(\mathbb T)
:=
\{\Theta\in \mathcal A_P \mid F_\Theta=0\}/\mathcal G_P.
\]
If \(X\) has boundary and \(\eta\) is a flat boundary value extending over \(X\), define
\[
\mathcal M_{X,p}(\eta;\mathbb T)
:=
\{\Theta\in \mathcal A_P:\Theta|_\Sigma=\eta,\ F_\Theta=0\}/\mathcal G_P^\partial,
\]
where
\[
\mathcal G_P^\partial:=\{u\in \mathcal G_P:u|_\Sigma=1\}.
\]
\end{definition}

\begin{prop}
\label{prop:moduli-torsor-toral}
(i)If \(X\) is closed, then  \(\mathcal M_{X,p}(\mathbb T)\) is a torsor for $H^1(X;\mathfrak t)/H^1(X;\Lambda).$

\noindent(ii) If \(\partial X=\Sigma\neq\varnothing\) and \(\eta\) is extendable, then \(\mathcal M_{X,p}(\eta;\mathbb T)\) is a torsor for \\
$H^1(X,\Sigma;\mathfrak t)/H^1(X,\Sigma;\Lambda).$
\end{prop}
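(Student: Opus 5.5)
The plan is to prove both parts the same way: first show that the flat connections on \(P\) form an affine space, then identify the gauge action on that space, and finally check that the induced action of the appropriate cohomology torus is free and transitive.

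For (i), fix the flat connection \(\Theta_p\). Since \(\mathbb T\) is abelian, every connection has the form \(\Theta_p+a\) with \(a\in\Omega^1(X;\mathfrak t)\), and \(F_{\Theta_p+a}=da\). So the flat connections form the affine space \(\Theta_p+Z^1(X;\mathfrak t)\) of closed forms. Next, write a gauge transformation as a map \(u\colon X\to\mathbb T\). It acts by \(\Theta\mapsto\Theta+u^{-1}du\), where \(u^{-1}du\) is the closed \(\mathfrak t\)-valued \(1\)-form \(u^*\theta\) and \(\theta\) is the Maurer--Cartan form of \(\mathbb T\).

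The key step is to identify the subgroup \(\{u^{-1}du\}\subset Z^1(X;\mathfrak t)\). I claim it is exactly the set of closed forms whose de Rham class lies in the image of \(H^1(X;\Lambda)\to H^1(X;\mathfrak t)\). To see this, use the exact sequence \(0\to\Lambda\to\mathfrak t\to\mathbb T\to0\) at the level of maps. On one hand, \([X,\mathbb T]\cong H^1(X;\Lambda)\), and the class of \(u^{-1}du\) is the image of this homotopy class. On the other hand, null-homotopic \(u\) lift to \(\mathfrak t\), say \(u=\exp(f)\), and for these \(u^{-1}du=df\), which ranges over all of \(d\Omega^0(X;\mathfrak t)\). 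Combining the two, the map \(Z^1/\{u^{-1}du\}\) factors as
\[
Z^1(X;\mathfrak t)/d\Omega^0 = H^1(X;\mathfrak t)\ \longrightarrow\ H^1(X;\mathfrak t)/H^1(X;\Lambda).
\]
Hence \([a]\cdot[\Theta_p+b]:=[\Theta_p+a+b]\) is a well-defined, free and transitive action. It is independent of the choice of \(\Theta_p\) because the action is by translation. I expect the main obstacle to be the justification that the periods of \(u^{-1}du\) are exactly the lattice-valued classes, together with the fact that \(H^1(X;\Lambda)\) is a full lattice in \(H^1(X;\mathfrak t)\), which follows from the universal coefficient theorem with \(\Lambda\cong\mathbb Z^n\). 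To show that every lattice class is realized, I would integrate a closed form with \(\Lambda\)-periods and exponentiate, \(u(x)=\exp\int_{x_0}^x a\).

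For (ii), I would repeat the argument with boundary conditions. Flat connections with \(\Theta|_\Sigma=\eta\) are \(\Theta_0+a\), where \(a\) is closed with vanishing pullback to \(\Sigma\). These forms compute the relative cohomology \(H^1(X,\Sigma;\mathfrak t)\) modulo \(d\Omega^0_0(X;\mathfrak t)\), using the relative de Rham complex and the decomposition \eqref{eq:relative-hodge-toral}. The group \(\mathcal G_P^\partial\) consists of maps \(u\colon(X,\Sigma)\to(\mathbb T,1)\). Homotopy classes of such maps give \(H^1(X,\Sigma;\Lambda)\), and the ones lifting to \(\mathfrak t\) with boundary value \(0\) produce exactly \(d\Omega^0_0\). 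The quotient is therefore \(H^1(X,\Sigma;\mathfrak t)/H^1(X,\Sigma;\Lambda)\), acting simply transitively as in (i).

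The one subtlety in (ii) is that \(H^1(X,\Sigma;\Lambda)\to H^1(X,\Sigma;\mathfrak t)\) may have a kernel, namely its torsion. I would address this by reading the quotient as a quotient by the image, which is a full lattice modulo that torsion. I would also verify that the relative periods of \(u^{-1}du\) lie in the image by the same exponentiation argument, now applied to relative cochains.
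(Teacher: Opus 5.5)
Your proposal is correct and follows the same route as the paper's proof. Both write flat connections as \(\Theta_p\) plus closed forms and identify the gauge shifts \(u^{-1}du\) with exactly the closed forms having \(\Lambda\)-valued (relative) periods. You also supply details the paper leaves implicit: lifting null-homotopic \(u\) to \(\mathfrak t\), exponentiating to realize every lattice class, and using the relative de Rham complex in (ii).

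The torsion caveat in (ii) is unnecessary, though harmless. Since \(X\) is connected and \(\Sigma\neq\varnothing\), \(H_0(X,\Sigma;\mathbb Z)=0\). Hence \(H^1(X,\Sigma;\Lambda)\cong\operatorname{Hom}(H_1(X,\Sigma;\mathbb Z),\Lambda)\) is torsion-free and embeds as a full lattice in \(H^1(X,\Sigma;\mathfrak t)\).
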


\begin{proof}
Every connection on \(P\) can be written as
\[
\Theta=\Theta_p+a,\qquad a\in \Omega^1(X;\mathfrak t).
\]
Because \(\mathbb T\) is Abelian, flat condition is equivalent to \(da=0\). Two such closed forms
determine gauge-equivalent flat connections precisely when they differ by a form with
integral periods, that is, by an element of \(H^1(X;\Lambda)\). This proves the closed case.
The boundary case is identical, except that the allowed variations satisfy relative boundary
conditions and one quotients by gauge transformations trivial on the boundary.
\end{proof}

\begin{definition}
\label{def:mX-toral}
Define
\[
m_X
:=
\frac14\Bigl(
\dim H^1(X;\mathbb R)+\dim H^1(X,\partial X;\mathbb R)
-\dim H^0(X;\mathbb R)-\dim H^0(X,\partial X;\mathbb R)
\Bigr).
\]
If \(X\) is closed and connected, this reduces to
\[
m_X=\frac12\bigl(b_1(X)-1\bigr).
\]
\end{definition}

\subsection{Quadratic reduction}

The toral Chern--Simons functional becomes exactly quadratic after translation by a flat
connection. This is exactly the mechanism used in the rank-one case in \cite{Manoliu3}.

\begin{lemma}
\label{lem:quadratic-reduction-toral}
Let \(X\) be closed. For \(\Theta=\Theta_p+a\) with \(a\in \Omega^1(X;\mathfrak t)\),
\begin{equation}
\label{eq:quadratic-reduction-toral}
e^{2\pi i\,\operatorname{CS}_{X,P,K}(\Theta_p+a)}
=
e^{2\pi i\,\operatorname{CS}_{X,P,K}(\Theta_p)}
\exp\!\left(
\frac{i}{4\pi}\int_X K(a\wedge da)
\right).
\end{equation}
If \(X\) has nonempty boundary and \(\Theta=\Theta_{p,\eta}+a\) with
\(a\in \Omega^1_{\mathrm{rel}}(X;\mathfrak t)\), then the same formula holds.
\end{lemma}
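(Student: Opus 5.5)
We prove the formula directly from the four-dimensional filling definition in Definition~\ref{def:exp-CS-toral}, using that the extra term lives on a collar where the bundle is trivialized. Choose a filling \((W,\widetilde P,\widetilde\Theta_p)\) of \((X,P,\Theta_p)\); Proposition~\ref{prop:well-defined-exp-action-toral} guarantees that the phase does not depend on this choice. Extend \(a\) to a \(\mathfrak t\)-valued \(1\)-form \(\widetilde a\) on \(W\): pull \(a\) back along the collar projection \(X\times[0,1)\to X\) and multiply by a cutoff \(\chi(t)\) with \(\chi=1\) near \(t=0\) and \(\chi=0\) for \(t\ge 1/2\). Since \(\mathfrak t\)-valued \(1\)-forms act on \(\mathcal A_{\widetilde P}\) by translation, \(\widetilde\Theta_p+\widetilde a\) is a connection on the same \(\widetilde P\) extending \((P,\Theta_p+a)\). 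It is therefore an admissible filling of \(\Theta_p+a\).

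Because \(\mathbb T\) is Abelian, \(F_{\widetilde\Theta_p+\widetilde a}=F_{\widetilde\Theta_p}+d\widetilde a\). Using the symmetry of \(K\) and the fact that \(2\)-forms commute, we get
\[
K(F\wedge F)=K(F_{\widetilde\Theta_p}\wedge F_{\widetilde\Theta_p})+2K(F_{\widetilde\Theta_p}\wedge d\widetilde a)+K(d\widetilde a\wedge d\widetilde a).
\]
The key identity is
\[
K(d\widetilde a\wedge d\widetilde a)=d\,K(\widetilde a\wedge d\widetilde a).
\]
By Stokes' theorem, \(\int_W K(d\widetilde a\wedge d\widetilde a)=\int_X K(a\wedge da)\), with \(X=\partial W\) carrying the boundary orientation. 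This produces the factor \(\exp\bigl(\tfrac{i}{4\pi}\int_X K(a\wedge da)\bigr)\).

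The cross term is where the flatness of \(\Theta_p\) enters, and I expect the main bookkeeping to be here. The Bianchi identity gives \(dF_{\widetilde\Theta_p}=0\), so
\[
2K(F_{\widetilde\Theta_p}\wedge d\widetilde a)=d\bigl(2K(F_{\widetilde\Theta_p}\wedge\widetilde a)\bigr),
\]
up to sign. Stokes then yields \(2\int_X K(F_{\Theta_p}\wedge a)\), which vanishes because \(F_{\Theta_p}=0\). A cleaner alternative is to choose the filling \(\widetilde\Theta_p\) flat on the collar, pulled back from \(X\); then the cross term vanishes pointwise on \(\operatorname{supp}\widetilde a\). Either way, the first term reproduces \(e^{2\pi i\CS_{X,P,K}(\Theta_p)}\), and the closed case follows.

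For nonempty boundary, the phase is a vector in the line \(L_{\Sigma,K}\) at the boundary field. Since \(a\in\Omega^1_{\mathrm{rel}}(X;\mathfrak t)\) has vanishing pullback to \(\Sigma\), both \(\Theta_{p,\eta}+a\) and \(\Theta_{p,\eta}\) restrict to \(\eta\) on \(\Sigma\). Both phases therefore lie in the same fibre, and the ratio of two filling expressions is a well-defined scalar. The same computation applies. The Stokes boundary terms now involve the corners along \(\Sigma\), but their integrands contain the pullback of \(a\) to \(\Sigma\), which is zero. The only care needed is to choose fillings that agree near \(\Sigma\), so that the comparison in Proposition~\ref{prop:well-defined-exp-action-toral} applies; the relative boundary condition is exactly what makes these corner contributions vanish.
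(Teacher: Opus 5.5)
Your proposal is correct and follows the same route as the paper: extend to a four-dimensional filling, expand $K(F\wedge F)$ for $\widetilde\Theta_p+\widetilde a$, transgress via Stokes, use flatness of $\Theta_p$ on $X$ to kill the mixed term, and use the relative boundary condition to kill the extra boundary contributions. Your bookkeeping is more explicit than the paper's brief sketch, and it is correct: the term quadratic in $d\widetilde a$ is what transgresses to $\int_X K(a\wedge da)$, the mixed term dies by Bianchi together with $F_{\Theta_p}=0$ on $X$, and the $F_{\widetilde\Theta_p}$-quadratic term reproduces $e^{2\pi i\,\CS_{X,P,K}(\Theta_p)}$ rather than vanishing.
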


\begin{proof}
Since \(\mathbb T\) is Abelian,
\[
F_{\Theta_p+a}=F_{\Theta_p}+da,
\qquad
F_{\Theta_p}=0.
\]
Expanding the \(4\)-dimensional Chern--Weil expression and applying the transgression formula to the mixed term
gives
\[
\frac{i}{4\pi}\int_X K(a\wedge da).
\]
The quadratic term in \(F_{\Theta_p}\) vanishes because \(\Theta_p\) is flat. In the boundary
case, the relative boundary condition on \(a\) kills the boundary contribution.
\end{proof}

Thus in every torsion class the action is Gaussian. This is the crucial fact that makes the
stationary-phase computation exact rather than asymptotic.

\subsection{Exact Gaussian evaluation}

\begin{lemma}\label{lem:spectral-factorization-toral}
Let \(A\) be a self-adjoint elliptic operator with discrete spectrum, finite-dimensional kernel,
and zeta-regularized determinant \(\det\nolimits'_\zeta(A)\) \cite{Ray1971}. Let
\(K\colon \mathfrak t\to \mathfrak t\) be an invertible symmetric endomorphism with eigenvalues
\(\lambda_1,\dots,\lambda_n\). Then
\[
\eta(K\otimes A)=\sigma(K)\,\eta(A),
\]
and
\[
\det\nolimits'_\zeta(K\otimes A)
=
|\det K|^{\zeta_A(0)}\,\det\nolimits'_\zeta(A)^n.
\]
\end{lemma}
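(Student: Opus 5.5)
The plan is to diagonalize $K$ and reduce everything to scalar multiples of $A$. Since $K$ is symmetric, choose an orthonormal eigenbasis $e_1,\dots,e_n$ of $\mathfrak t$ (with respect to the inner product used to define adjoints), so that $Ke_j=\lambda_j e_j$. With respect to this basis, $K\otimes A$ splits as the orthogonal direct sum $\bigoplus_{j=1}^n \lambda_j A$. Its spectrum is therefore the union with multiplicities of the sets $\lambda_j\operatorname{spec}(A)$, and its kernel is $\mathfrak t\otimes\ker A$. Because every $\lambda_j\neq 0$, the nonzero spectrum of $K\otimes A$ is the disjoint union of the nonzero spectra of the $\lambda_j A$. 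Consequently the zeta and eta functions are additive:
\[
\zeta_{K\otimes A}(s)=\sum_j \zeta_{\lambda_j A}(s),\qquad \eta_{K\otimes A}(s)=\sum_j \eta_{\lambda_j A}(s).
\]
This holds first for $\operatorname{Re}s\gg0$, and then for the meromorphic continuations by uniqueness.

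Next, handle a single factor $cA$ with $c\in\mathbb R\setminus\{0\}$. For the eta invariant, write $\eta_{cA}(s)=\sum_{\mu\neq0}\operatorname{sign}(c\mu)|c\mu|^{-s}=\operatorname{sign}(c)|c|^{-s}\eta_A(s)$. Evaluating at $s=0$ gives $\eta(cA)=\operatorname{sign}(c)\,\eta(A)$. Summing over $j$ produces $\sum_j\operatorname{sign}(\lambda_j)\,\eta(A)=\sigma(K)\,\eta(A)$. Here I would note that the factor $|c|^{-s}$ is entire, so regularity of $\eta_A$ at $0$ transfers to $\eta_{cA}$.

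For the determinant, use the convention (as in Ray--Singer) that $\zeta_A(s)=\sum_{\mu\neq 0}|\mu|^{-s}$ and $\det'_\zeta(A)=\exp(-\zeta_A'(0))$, with the phase handled separately by the eta invariant. Then $\zeta_{cA}(s)=|c|^{-s}\zeta_A(s)$, and differentiating at $0$ gives $\zeta'_{cA}(0)=-\log|c|\,\zeta_A(0)+\zeta_A'(0)$. Hence $\det'_\zeta(cA)=|c|^{\zeta_A(0)}\det'_\zeta(A)$. Taking the product over $j$ gives
\[
\prod_j|\lambda_j|^{\zeta_A(0)}\det\nolimits'_\zeta(A)=|\det K|^{\zeta_A(0)}\det\nolimits'_\zeta(A)^n,
\]
since the determinant is multiplicative under orthogonal direct sums: $\zeta'$ adds, because $\zeta$ adds.

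The main obstacle is conventions rather than analysis. If $\det'_\zeta$ is meant to include the phase $e^{\mp i\pi\eta/2}$, as it does for indefinite operators like $*d$, then the scaling must be split into modulus and phase. In that case the modulus obeys the formula above, while the phase transforms through $\eta\mapsto\sigma(K)\eta$, and the displayed determinant identity should be read as a statement about $|\det'_\zeta|$. A second point to check is that $K$ is self-adjoint with respect to the chosen inner product on $\mathfrak t$. If instead $K$ is symmetric only as a bilinear form, I would choose the inner product on $\mathfrak t$ (which is auxiliary data) so that $K$ is diagonal in an orthonormal basis. Finally, one should verify that $\zeta_A$ is regular at $0$; this is the standard Seeley result for elliptic operators \cite{Seeley67}, so the value $\zeta_A(0)$ in the exponent makes sense.
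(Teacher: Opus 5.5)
Your proposal is correct and follows the paper's proof: you diagonalize $K$ in an orthonormal eigenbasis, write $K\otimes A\cong\bigoplus_j\lambda_j A$, and then apply the scaling rules $\eta(\lambda A)=\operatorname{sgn}(\lambda)\,\eta(A)$ and $\det\nolimits'_\zeta(\lambda A)=|\lambda|^{\zeta_A(0)}\det\nolimits'_\zeta(A)$ factor by factor. Your explicit convention $\zeta_A(s)=\sum_{\mu\neq0}|\mu|^{-s}$, with the phase carried separately by $\eta$, is what makes the scaling step valid for negative $\lambda_j$; the paper's footnote, which writes $\zeta_{cA}(s)=c^{-s}\zeta_A(s)$, does not address this point.
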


\begin{proof}
Choose an orthonormal basis of \(\mathfrak t\) diagonalizing \(K\). Then $K\otimes A\cong \bigoplus_{j=1}^n \lambda_j A.$ Hence
\[
\eta(K\otimes A)=\sum_{j=1}^n \eta(\lambda_j A)
=\sum_{j=1}^n \operatorname{sgn}(\lambda_j)\,\eta(A)
=\sigma(K)\,\eta(A).
\]
Similarly, using the scaling identity
\[
\det\nolimits'_\zeta(\lambda A)=|\lambda|^{\zeta_A(0)}\det\nolimits'_\zeta(A),
\qquad \lambda\neq 0,
\]
we obtain\footnote{The value \(\zeta_A(0)\) appears because the zeta-regularized determinant is defined by
\(\det\nolimits'_{\zeta}(A)=\exp(-\zeta_A'(0))\), where \(\zeta_A(s)\) is the spectral zeta function of \(A\) formed from its nonzero eigenvalues. Since \(\zeta_{cA}(s)=c^{-s}\zeta_A(s)\), one gets
\(\det\nolimits'_{\zeta}(cA)=c^{\zeta_A(0)}\det\nolimits'_{\zeta}(A)\). \cite{Seeley67}.}
\[
\det\nolimits'_\zeta(K\otimes A)
=
\prod_{j=1}^n |\lambda_j|^{\zeta_A(0)}\det\nolimits'_\zeta(A)
=
|\det K|^{\zeta_A(0)}\det\nolimits'_\zeta(A)^n.
\]
\end{proof}
Now, we want to isolate the finite-dimensional oscillatory integral
which models the gauge-reduced Abelian Chern--Simons functional.  Let \(N\) be a finite-dimensional real inner-product space, and let
\(A\colon N\to N\) be a self-adjoint invertible endomorphism.

\begin{definition}
\label{def:fresnel-factor}
The \emph{normalized oscillatory Gaussian factor} associated to \(A\) is
\[
\operatorname{Fres}(A)
:=
\int_N \exp\!\left(\frac{i}{2}\langle An,n\rangle\right)\,dn,
\]
where \(dn\) is the Lebesgue measure induced by the inner product on \(N\) \cite[Section 3.1 and 3.2]{Mnev:2017oko}.
\end{definition}

\begin{prop}
\label{prop:fresnel-factor}
For every self-adjoint invertible endomorphism \(A\colon N\to N\),
\[
\operatorname{Fres}(A)
=
e^{\frac{\pi i}{4}\operatorname{sgn}(A)}\,
(2\pi)^{\dim N/2}\,
|\det A|^{-1/2},
\]
where \(\operatorname{sgn}(A)\) denotes the signature of \(A\).
\end{prop}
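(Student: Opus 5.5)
The plan is to diagonalize \(A\) and reduce to the one-dimensional Fresnel integral. Since \(A\) is self-adjoint on the finite-dimensional inner-product space \(N\), the spectral theorem gives an orthonormal basis \(e_1,\dots,e_d\) of \(N\), with \(d=\dim N\), and nonzero real eigenvalues \(\mu_1,\dots,\mu_d\). In the coordinates \(n=\sum_j x_je_j\), the Lebesgue measure \(dn\) induced by the inner product is the standard measure \(dx_1\cdots dx_d\), because orthonormal changes of basis have Jacobian \(1\). The phase then separates:
\[
\tfrac{i}{2}\langle An,n\rangle=\sum_{j=1}^d \tfrac{i}{2}\mu_j x_j^2 .
\]
So, once the integral is given a meaning, it factors as \(\prod_j\int_{\mathbb R}e^{\frac{i}{2}\mu_jx^2}\,dx\).

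The only analytic point, and the main obstacle, is that \(e^{\frac i2\langle An,n\rangle}\) is not Lebesgue integrable. I would therefore interpret \(\operatorname{Fres}(A)\) in the standard oscillatory sense used in \cite{Mnev:2017oko}: insert a damping factor \(e^{-\frac{\varepsilon}{2}|n|^2}\) and take the limit \(\varepsilon\to 0^+\). For \(\varepsilon>0\) the integrand is absolutely integrable, so Fubini justifies the factorization above. The one-dimensional Gaussian formula, extended analytically to \(\operatorname{Re}c>0\) with the principal branch, gives
\[
\int_{\mathbb R}e^{-\frac12(\varepsilon-i\mu)x^2}dx=\sqrt{2\pi}\,(\varepsilon-i\mu)^{-1/2}.
\]
As \(\varepsilon\to0^+\), the number \(\varepsilon-i\mu\) tends to \(|\mu|e^{-\frac{\pi i}{2}\operatorname{sgn}\mu}\) from within the right half-plane, so the principal root converges:
\[
(\varepsilon-i\mu)^{-1/2}\to|\mu|^{-1/2}e^{\frac{\pi i}{4}\operatorname{sgn}\mu}.
\]
This is where the branch choice has to be checked carefully. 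The result is independent of the regularization: one could equally use contour rotation \(x\mapsto e^{\pm\pi i/4}x\) with Jordan's lemma on each factor, and this reproduces the same value.

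Taking the product over \(j\) gives
\[
(2\pi)^{d/2}\prod_j|\mu_j|^{-1/2}\,e^{\frac{\pi i}{4}\sum_j\operatorname{sgn}\mu_j}.
\]
Finally, \(\prod_j|\mu_j|=|\det A|\), and \(\sum_j\operatorname{sgn}\mu_j\) is the number of positive eigenvalues minus the number of negative ones, which is \(\operatorname{sgn}(A)\). This yields the stated formula.
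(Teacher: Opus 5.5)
Your proof is correct and follows the paper's route: diagonalize $A$ in an orthonormal eigenbasis, factor the integral into one-dimensional Fresnel integrals, and multiply, identifying $\prod_j|\mu_j|$ with $|\det A|$ and $\sum_j\operatorname{sgn}\mu_j$ with $\operatorname{sgn}(A)$. Your damping factor $e^{-\frac{\varepsilon}{2}|n|^2}$ and the branch check for $(\varepsilon-i\mu)^{-1/2}$ justify the Fubini step and the one-dimensional formula for this non-absolutely-integrable integral, which the paper uses without comment.
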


\begin{proof}
Choose an orthonormal basis of \(N\) diagonalizing \(A\), say
\[
A=\operatorname{diag}(\lambda_1,\dots,\lambda_m),
\qquad \lambda_j\in \mathbb R^\times.
\]
Then the integral factors as a product of one-dimensional Fresnel integrals:
\[
\operatorname{Fres}(A)
=
\prod_{j=1}^m
\int_{\mathbb R}
\exp\!\left(\frac{i}{2}\lambda_j x_j^2\right)\,dx_j.
\]
For \(\lambda\neq 0\),
\[
\int_{\mathbb R} e^{\frac{i}{2}\lambda x^2}\,dx
=
e^{\frac{\pi i}{4}\operatorname{sgn}(\lambda)}\,
(2\pi)^{1/2}\,
|\lambda|^{-1/2}.
\]
Multiplying these identities over \(j=1,\dots,m\) yields
\[
\operatorname{Fres}(A)
=
e^{\frac{\pi i}{4}\sum_j \operatorname{sgn}(\lambda_j)}\,
(2\pi)^{m/2}\,
\prod_{j=1}^m |\lambda_j|^{-1/2}
=
e^{\frac{\pi i}{4}\operatorname{sgn}(A)}\,
(2\pi)^{m/2}\,
|\det A|^{-1/2}.
\]
\end{proof}

We now record the linear quotient model that will be used after Hodge decomposition.

\begin{prop}
\label{prop:linear-quotient-factorization}
Let \(V\) be a finite-dimensional Euclidean space and suppose that
\[
V=H\oplus E\oplus N
\]
is an orthogonal decomposition. Let \(\Gamma\subset H\) be a full lattice, and let
\(A\colon N\to N\) be self-adjoint and invertible. Define a function \(q\colon V\to \mathbb C\) by
\[
q(h+e+n)=q_0+\frac12\langle An,n\rangle,
\qquad
h\in H,\ e\in E,\ n\in N,
\]
for some constant \(q_0\in \mathbb C\). Then the oscillatory integral on the quotient by
\(E\), with periodicity lattice \(\Gamma\) in the \(H\)-direction, factors as
\[
\int_{(H/\Gamma)\times N} e^{iq(h+n)}\,d(h+\Gamma)\,dn
=
e^{iq_0}\,\operatorname{vol}(H/\Gamma)\,\operatorname{Fres}(A).
\]
\end{prop}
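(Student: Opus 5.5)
The plan is to make the meaning of the quotient integral explicit, check that the integrand descends to $(H/\Gamma)\times N$, and then apply Fubini together with Proposition~\ref{prop:fresnel-factor}.

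\textbf{Descent of the integrand.} Since $q(h+e+n)=q_0+\tfrac12\langle An,n\rangle$ does not depend on $e\in E$, the function $e^{iq}$ is constant along the $E$-fibres of the projection $V\to V/E\cong H\oplus N$. Here the isomorphism uses that the decomposition is orthogonal, and we equip $H\oplus N$ with the restricted inner product. The integrand also does not depend on $h$, so it is in particular $\Gamma$-periodic in the $H$-direction. Hence $e^{iq}$ descends to a well-defined smooth function on $(H/\Gamma)\times N$. We take the measure $d(h+\Gamma)\,dn$ to be the product of the quotient measure on $H/\Gamma$, induced from Lebesgue measure on $H$, and Lebesgue measure on $N$. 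Orthogonality of $H$ and $N$ guarantees that this is the measure induced from $V/E$.

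\textbf{Fubini.} The only analytic subtlety is that $e^{\frac i2\langle An,n\rangle}$ is not absolutely integrable on $N$. Consequently the $N$-integral must be read in the same conditionally convergent sense as in Definition~\ref{def:fresnel-factor}: either as an iterated one-dimensional Fresnel integral in an eigenbasis of $A$, or as the limit of the damped integrals with factor $e^{-\varepsilon|n|^2}$ as $\varepsilon\to0^+$. I would adopt the damped version, since Fubini then applies honestly. For each $\varepsilon>0$ the integrand $e^{iq_0}e^{\frac i2\langle An,n\rangle-\varepsilon|n|^2}$ is absolutely integrable on the product of the compact torus $H/\Gamma$ and $N$. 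It factors as a constant in $h$ times a function of $n$, so
\[
\int_{(H/\Gamma)\times N} e^{iq_0}e^{\frac i2\langle An,n\rangle-\varepsilon|n|^2}
= e^{iq_0}\operatorname{vol}(H/\Gamma)\int_N e^{\frac i2\langle An,n\rangle-\varepsilon|n|^2}\,dn .
\]

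\textbf{Passing to the limit.} As $\varepsilon\to0^+$ the last integral converges to $\operatorname{Fres}(A)$. In an eigenbasis of $A$ it factors into the one-dimensional integrals $\int e^{(\frac i2\lambda_j-\varepsilon)x^2}dx=\sqrt{\pi/(\varepsilon-\tfrac i2\lambda_j)}$, each of which converges to the Fresnel value used in the proof of Proposition~\ref{prop:fresnel-factor}. This limit yields the stated identity. The main obstacle is purely this convention issue of giving the oscillatory integral a precise meaning compatible with Definition~\ref{def:fresnel-factor}. Once the regularization is fixed, the rest is bookkeeping.
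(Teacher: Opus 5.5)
Your proposal is correct and follows the same route as the paper. The phase is independent of $h$ and $e$, so the integral splits as $e^{iq_0}\operatorname{vol}(H/\Gamma)\operatorname{Fres}(A)$. The only addition is that you make the conditionally convergent $N$-integral precise: you insert the damping factor $e^{-\varepsilon|n|^2}$ before applying Fubini and then let $\varepsilon\to0^+$, a justification the paper's proof leaves implicit.
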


\begin{proof}
By construction, the phase is constant in the \(H\)- and \(E\)-directions and depends only
on the \(N\)-variable. Since the quotient by \(E\) removes the \(E\)-direction altogether, the
integral over the remaining variables factors as
\[
\left(\int_{H/\Gamma} 1\,d(h+\Gamma)\right)
\left(\int_N \exp\!\left(\frac{i}{2}\langle An,n\rangle\right)\,dn\right)
=
\operatorname{vol}(H/\Gamma)\,\operatorname{Fres}(A),
\]
with the constant phase \(e^{iq_0}\) pulled out. This proves the formula.
\end{proof}

\begin{remark}
\label{rem:finite-dimensional-model}
In the Chern--Simons application, the role of \(H\) is played by the harmonic sector,
the role of \(E\) by the gauge directions, and the role of \(N\) by the coexact sector.
The proposition therefore gives the exact model for the gauge-reduced Abelian Gaussian
integral once Hodge decomposition has been performed.
\end{remark}

We now carry out the closed-sector gauge reduction explicitly. Let \(X\) be a closed connected oriented \(3\)-manifold, let $p\in \operatorname{Tor}H^2(X;\Lambda),$ let \(P\to X\) be a principal \(\mathbb T\)-bundle with \(c(P)=p\), and choose a flat
connection \(\Theta_p\in \mathcal A_P\). Recall that every connection on \(P\) may be written uniquely in the form
\[
\Theta=\Theta_p+a,
\qquad
a\in \Omega^1(X;\mathfrak t),
\]
and by Lemma~\ref{lem:quadratic-reduction-toral},
\[
e^{2\pi i\,\operatorname{CS}_{X,P,K}(\Theta_p+a)}
=
e^{2\pi i\,\operatorname{CS}_{X,P,K}(\Theta_p)}
\exp\!\left(\frac{i}{4\pi}\int_X K(a\wedge da)\right).
\]

Let $\Delta_{0,\mathfrak t}:=\operatorname{Id}_{\mathfrak t}\otimes (d^*d)$ on \(\Omega^0(X;\mathfrak t)\), and let
\[
B_X:=(*d)\big|_{d^*\Omega^2(X;\mathbb R)},
\qquad
B_{X,K}:=K\otimes B_X
\]
on the coexact sector \(d^*\Omega^2(X;\mathfrak t)\).

\begin{definition}
\label{def:zeta-gaussian-factor}
The \emph{zeta-regularized Gaussian factor} associated to a self-adjoint elliptic operator
\(B\) with discrete spectrum and finite-dimensional kernel is
\[
\mathcal I_\zeta(B)
:=
\exp\!\left(\frac{\pi i}{4}\eta(B)\right)\,
\bigl|\det\nolimits'_{\zeta}(B)\bigr|^{-1/2}.
\]
\end{definition}

\begin{prop}
\label{prop:closed-gauge-reduction}
With the normalization conventions of  rank-one functional-integral construction,
the exact Gaussian evaluation of the formal quotient integral in the torsion class \(p\) is
\begin{equation}
\label{eq:closed-gauge-reduction}
\mathcal Z^{\mathrm{gauss}}_{X,p}(\mathbb T,K)
=
e^{2\pi i\,\operatorname{CS}_{X,P,K}(\Theta_p)}\,
\det\nolimits'_{\zeta}(\Delta_{0,\mathfrak t})^{1/2}\,
\mathcal I_\zeta(B_{X,K})\,
\int_{\mathcal M_{X,p}(\mathbb T)} d\mu_{\mathrm{har}},
\end{equation}
where \(d\mu_{\mathrm{har}}\) denotes the translation-invariant density on the harmonic torus
\[
\mathcal M_{X,p}(\mathbb T)
\cong
H^1(X;\mathfrak t)/H^1(X;\Lambda).
\]
\end{prop}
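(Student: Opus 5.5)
The plan is to realize the formal quotient integral as the infinite-dimensional analogue of Proposition~\ref{prop:linear-quotient-factorization}. I would evaluate it sector by sector, taking $H=\mathcal H^1(X;\mathfrak t)$, $E=d\Omega^0(X;\mathfrak t)$ and $N=d^*\Omega^2(X;\mathfrak t)$ from the closed Hodge decomposition \eqref{eq:closed-hodge-toral}.

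First, I would write $\Theta=\Theta_p+a$ and invoke Lemma~\ref{lem:quadratic-reduction-toral} to pull out the constant phase $e^{2\pi i\CS_{X,P,K}(\Theta_p)}$, which plays the role of $e^{iq_0}$. Decomposing $a=h+d\phi+n$, Stokes' theorem on closed $X$ gives
\[
\int_X K(a\wedge da)=\int_X K(n\wedge dn)=\bigl\langle (K\otimes *d)n,n\bigr\rangle_{L^2}.
\]
This holds because $dh=0$, $d\,d\phi=0$, and the cross terms $\int K(h\wedge dn)$ and $\int K(d\phi\wedge dn)$ vanish. Hence the phase depends only on the coexact component, and the quadratic form is
\[
\tfrac{1}{4\pi}\langle B_{X,K}n,n\rangle=\tfrac12\bigl\langle \tfrac{1}{2\pi}B_{X,K}n,n\bigr\rangle .
\]
Since $B_X$ is invertible on the coexact sector and $K$ is nondegenerate, $B_{X,K}$ is invertible, so this sector is a nondegenerate Gaussian.

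Second, I would handle the gauge directions. The quotient by $\mathcal G_P$ is modeled by quotienting by $E$ together with the residual large-gauge lattice. The identity component of $\mathcal G_P$ acts by $a\mapsto a+d\phi$ with $\phi\in\Omega^0(X;\mathfrak t)$ modulo constants. Following the rank-one normalization, the Jacobian of the map $\Omega^0/\text{const}\to d\Omega^0$, $\phi\mapsto d\phi$, contributes $\det'_\zeta(d^*d\otimes\mathrm{Id}_{\mathfrak t})^{1/2}=\det'_\zeta(\Delta_{0,\mathfrak t})^{1/2}$ (the Faddeev--Popov factor). The constant stabilizer $\mathbb T$ volume is absorbed into the normalization conventions that the statement refers to. The large gauge transformations $\pi_0\mathcal G_P\cong H^1(X;\Lambda)$ act by translation on harmonic forms. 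The harmonic integral therefore becomes an integral over $H^1(X;\mathfrak t)/H^1(X;\Lambda)$, which by Proposition~\ref{prop:moduli-torsor-toral}(i) is identified with $\mathcal M_{X,p}(\mathbb T)$. Since the integrand is constant there, it yields $\int_{\mathcal M_{X,p}}d\mu_{\mathrm{har}}$.

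Third, the coexact Gaussian is defined by regularizing Proposition~\ref{prop:fresnel-factor}. The signature is replaced by $\eta(B_{X,K})$ and $|\det|$ by $|\det'_\zeta(B_{X,K})|$. The powers of $2\pi$ and the rescaling by $1/2\pi$ are absorbed using the scaling identity $\det'_\zeta(cA)=c^{\zeta_A(0)}\det'_\zeta(A)$, as fixed in the rank-one convention. This produces $\mathcal I_\zeta(B_{X,K})$. Assembling the three factors gives \eqref{eq:closed-gauge-reduction}.

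The main obstacle is the second step. One must justify the precise Faddeev--Popov normalization with $\mathfrak t$-valued fields, and make sure the regularized constants (the $2\pi$ factors, the constant-gauge stabilizer, and $\zeta_{B}(0)$ contributions) are exactly those declared in the rank-one conventions. I would handle this by reducing to the rank-one case. Choose an orthonormal basis of $\mathfrak t$, so that $\Delta_{0,\mathfrak t}$ splits as $n$ copies of $d^*d$. Then check that the only genuinely non-diagonal object is $B_{X,K}$, whose treatment is deferred to Lemma~\ref{lem:spectral-factorization-toral} later.
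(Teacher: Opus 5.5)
Your proposal is correct and follows essentially the same route as the paper's proof. You translate by $\Theta_p$, Hodge-decompose $a=a_{\mathrm{har}}+d\phi+a_{\mathrm{coex}}$, and remove the cross terms by Stokes. You then take the Faddeev--Popov Jacobian $\det\nolimits'_{\zeta}(\Delta_{0,\mathfrak t})^{1/2}$ from the exact directions, quotient the harmonic sector by $H^1(X;\Lambda)$ to get $\mathcal M_{X,p}(\mathbb T)$, and replace the coexact Fresnel integral by $\mathcal I_\zeta(B_{X,K})$ via Proposition~\ref{prop:linear-quotient-factorization}.

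Your extra remarks on the constant-gauge stabilizer and the $2\pi$ factors are bookkeeping that the paper silently absorbs into the same rank-one normalization conventions. Note that the scaling identity does not eliminate the $2\pi$ factors; it turns them into topological constants, which those conventions then absorb.
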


\begin{proof}
Choose a Riemannian metric on \(X\). By Hodge decomposition,
\[
\Omega^1(X;\mathfrak t)
=
H^1(X;\mathfrak t)
\oplus
d\Omega^0(X;\mathfrak t)
\oplus
d^*\Omega^2(X;\mathfrak t).
\]
Accordingly, every \(a\in \Omega^1(X;\mathfrak t)\) can be written uniquely as
\[
a=a_{\mathrm{har}}+d\phi+a_{\mathrm{coex}},
\qquad
a_{\mathrm{har}}\in H^1(X;\mathfrak t),\quad
\phi\in \Omega^0(X;\mathfrak t),\quad
a_{\mathrm{coex}}\in d^*\Omega^2(X;\mathfrak t).
\]

Let
\[
q(a):=\frac{i}{4\pi}\int_X K(a\wedge da).
\]
Since \(da_{\mathrm{har}}=0\) and \(d^2\phi=0\), we have $da=da_{\mathrm{coex}}.$ Expanding \(q(a)\) gives
\[
q(a)
=
\frac{i}{4\pi}\int_X
\Bigl(
K(a_{\mathrm{har}}\wedge da_{\mathrm{coex}})
+
K(d\phi\wedge da_{\mathrm{coex}})
+
K(a_{\mathrm{coex}}\wedge da_{\mathrm{coex}})
\Bigr).
\]
The first two terms vanish. Indeed, since \(X\) is closed,
\[
d\,K(a_{\mathrm{har}}\wedge a_{\mathrm{coex}})
=
K(da_{\mathrm{har}}\wedge a_{\mathrm{coex}})
-
K(a_{\mathrm{har}}\wedge da_{\mathrm{coex}})
=
-
K(a_{\mathrm{har}}\wedge da_{\mathrm{coex}}),
\]
so
\[
\int_X K(a_{\mathrm{har}}\wedge da_{\mathrm{coex}})=0.
\]
Similarly,
\[
d\,K(d\phi\wedge a_{\mathrm{coex}})
=
K(d^2\phi\wedge a_{\mathrm{coex}})
-
K(d\phi\wedge da_{\mathrm{coex}})
=
-
K(d\phi\wedge da_{\mathrm{coex}}),
\]
hence
\[
\int_X K(d\phi\wedge da_{\mathrm{coex}})=0.
\]
Therefore
\[
q(a)=\frac{i}{4\pi}\int_X K(a_{\mathrm{coex}}\wedge da_{\mathrm{coex}}).
\]

Equivalently, if \(B_{X,K}\) denotes the self-adjoint operator on
\(d^*\Omega^2(X;\mathfrak t)\) defined earlier, then
\[
q(a)=\frac{i}{4\pi}\,\langle B_{X,K}a_{\mathrm{coex}},a_{\mathrm{coex}}\rangle.
\]
Moreover, \(B_{X,K}\) is nondegenerate on \(d^*\Omega^2(X;\mathfrak t)\): if
\(B_{X,K}a_{\mathrm{coex}}=0\), then \(da_{\mathrm{coex}}=0\); since also
\(d^*a_{\mathrm{coex}}=0\), the form \(a_{\mathrm{coex}}\) is harmonic, and being
coexact it must vanish.

The connected component of the Abelian gauge group acts by translation along the exact
summand:
\[
a\longmapsto a+d\phi,
\qquad
\phi\in \Omega^0(X;\mathfrak t).
\]
Thus quotienting by small gauge transformations removes the
\(d\Omega^0(X;\mathfrak t)\)-direction. If one parametrizes this direction by
\(\Omega^0(X;\mathfrak t)/\ker d\), then the map
\[
d:\Omega^0(X;\mathfrak t)/\ker d \longrightarrow d\Omega^0(X;\mathfrak t)
\]
has zeta-regularized Jacobian
\[
\det\nolimits'_{\zeta}(d^*d)^{1/2}
=
\det\nolimits'_{\zeta}(\Delta_{0,\mathfrak t})^{1/2}.
\]
This is the Faddeev--Popov factor. By Proposition~\ref{prop:moduli-torsor-toral},  the residual quotient by large gauge transformations identifies the harmonic
sector with the compact torus
\[
H^1(X;\mathfrak t)\big/ H^1(X;\Lambda)
\cong
\mathcal M_{X,p}(\mathbb T).
\]
Finally, the coexact sector contributes precisely the zeta-regularized oscillatory Gaussian
factor $\mathcal I_\zeta(B_{X,K}).$ Applying Proposition~\ref{prop:linear-quotient-factorization} to this Hodge splitting,
in the zeta-regularized Gaussian sense fixed above, we obtain
\[
\det\nolimits'_{\zeta}(\Delta_{0,\mathfrak t})^{1/2}\,
\mathcal I_\zeta(B_{X,K})\int_{\mathcal M_{X,p}(\mathbb T)} d\mu_{\mathrm{har}}\,,
\]
for the gauge-reduced integral over the fluctuation \(a\). Pulling out the constant factor $e^{2\pi i\,\operatorname{CS}_{X,P,K}(\Theta_p)}$
from the shift formula for the Chern--Simons action then yields
\eqref{eq:closed-gauge-reduction}.
\end{proof}

\begin{remark}
\label{rem:closed-gauge-reduction}
In equation~\eqref{eq:closed-gauge-reduction}, the formal quotient integral has been reduced
to a harmonic torus integral multiplied by a coexact Gaussian determinant.
\end{remark}

Now, we make explicit the determinant bookkeeping that produces the factor
\(|\det K|^{m_X}\). For a compact connected oriented \(3\)-manifold \(X\), possibly with nonempty boundary,
write
\[
b_q^{\mathrm{abs}}(X):=\dim H^q(X;\mathbb R),
\qquad
b_q^{\mathrm{rel}}(X):=\dim H^q(X,\partial X;\mathbb R).
\]
Recall that
\[
m_X
=
\frac14\Bigl(
b_1^{\mathrm{abs}}(X)+b_1^{\mathrm{rel}}(X)
-b_0^{\mathrm{abs}}(X)-b_0^{\mathrm{rel}}(X)
\Bigr).
\]

If \(\partial X\neq\varnothing\), let $\Delta_q^{\mathrm{abs}},\ \Delta_q^{\mathrm{rel}}$ denote the scalar Hodge Laplacians on \(q\)-forms with absolute and relative boundary
conditions, respectively. We also write
\[
\Delta^{\mathrm{abs}}_{q,\mathfrak t}:=
\operatorname{Id}_{\mathfrak t}\otimes \Delta_q^{\mathrm{abs}},
\qquad
\Delta^{\mathrm{rel}}_{q,\mathfrak t}:=
\operatorname{Id}_{\mathfrak t}\otimes \Delta_q^{\mathrm{rel}}.
\]
On the coexact relative \(1\)-forms define
\[
B_{X,\mathrm{rel},K}:=
K\otimes (*d)
:
d^*\Omega^2_{\mathrm{abs}}(X;\mathfrak t)
\longrightarrow
d^*\Omega^2_{\mathrm{abs}}(X;\mathfrak t).
\]

\begin{prop}
\label{prop:det-B-relative}
Assume \(\partial X\neq\varnothing\). Then
\begin{equation}
\label{eq:det-B-relative}
\bigl|\det\nolimits'_{\zeta}(B_{X,\mathrm{rel},K})\bigr|
=
\left(
\frac{
\det\nolimits'_{\zeta}(K^2\otimes \Delta_1^{\mathrm{rel}})
\det\nolimits'_{\zeta}(K^2\otimes \Delta_1^{\mathrm{abs}})
}{
\det\nolimits'_{\zeta}(K^2\otimes \Delta_0^{\mathrm{rel}})
\det\nolimits'_{\zeta}(K^2\otimes \Delta_0^{\mathrm{abs}})
}
\right)^{1/4}.
\end{equation}
\end{prop}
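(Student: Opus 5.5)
The plan is to reduce \eqref{eq:det-B-relative} to the scalar identity obtained by squaring $B$. This identity relates $\det\nolimits'_\zeta$ of the curl operator on coexact relative $1$-forms to the Laplacians with relative and absolute boundary conditions. I would first establish
\[
\bigl|\det\nolimits'_\zeta(B_{X,\mathrm{rel},K})\bigr|^2=\det\nolimits'_\zeta\bigl(B_{X,\mathrm{rel},K}^2\bigr),
\qquad
B_{X,\mathrm{rel},K}^2=K^2\otimes (*d)^2 = K^2\otimes d^*d\big|_{d^*\Omega^2_{\mathrm{abs}}}.
\]
The first equality holds for zeta determinants of self-adjoint operators: the eta-phase cancels in the absolute value, and $\zeta_{B^2}(s)=\sum|\lambda|^{-2s}$ gives $\det(B^2)=|\det B|^2$ after accounting for the sign convention. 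The second uses $(*d)^2=*d*d=d^*d$ on $1$-forms in dimension $3$, up to sign. Thus it suffices to show that $\det\nolimits'_\zeta$ of $K^2\otimes\Delta_1$ restricted to the coexact relative sector equals the square root of the ratio in the statement.

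Next I would use the Hodge decomposition of $\Omega^\bullet$ with boundary conditions to split the nonzero spectrum of $\Delta_q$ into exact and coexact parts. In the absolute and relative settings, $d$ intertwines coexact $(q-1)$-forms with exact $q$-forms isospectrally. Writing $\det_q^{\mathrm{c}}$ for the coexact part, this gives
\[
\det\nolimits'(\Delta_q^{\mathrm{rel}})=\det{}^{\mathrm c}_{q}\,\det{}^{\mathrm c}_{q-1},
\]
and similarly in the absolute case. Here $d^*$ intertwines $\Delta^{\mathrm{rel}}_q$ and $\Delta^{\mathrm{abs}}$ on the complementary degrees via Hodge star, which swaps relative and absolute conditions. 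Using $\Delta_0$ (coexact part is everything nonzero) and Hodge duality $*\colon\Omega^q_{\mathrm{rel}}\to\Omega^{3-q}_{\mathrm{abs}}$, I would solve for the coexact relative $1$-form determinant $D$. The expected result is $\det\nolimits'(\Delta_1^{\mathrm{rel}})=D\cdot\det\nolimits'(\Delta_0^{\mathrm{rel}})$. Duality should also give $\det\nolimits'(\Delta_1^{\mathrm{abs}})=D\cdot \det\nolimits'(\Delta_0^{\mathrm{abs}})$: the exact absolute $1$-forms match $\Delta_0^{\mathrm{abs}}$, and the coexact absolute $1$-forms are $*$ of exact relative $2$-forms, which are isospectral to coexact relative $1$-forms. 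Multiplying the two relations gives $D^2$ as the ratio, so $D=(\cdots)^{1/2}$ and $|\det B|=D^{1/2}$, which is the claimed fourth root.

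All of this is done with $K^2\otimes$ inserted throughout. Since $K^2$ is a constant positive endomorphism of $\mathfrak t$ commuting with every operator above, the same block decomposition along an eigenbasis of $K$ (as in Lemma~\ref{lem:spectral-factorization-toral}) carries every isospectrality through verbatim. Hence there is no need to extract $|\det K|$ powers at this stage.

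The main obstacle is the boundary bookkeeping in the second step. Specifically, I must check that the coexact part of $\Delta_1^{\mathrm{rel}}$ coincides with the domain $d^*\Omega^2_{\mathrm{abs}}$ of $B_{X,\mathrm{rel},K}$. I must also check that the isospectralities really hold with these boundary conditions under the product metric near $\Sigma$, citing the standard relative/absolute Hodge theory \cite{BruningMa}. A second delicate point is the multiplicativity $\det(AB)$-type splitting for zeta determinants over orthogonal spectral decompositions. This is fine because the sectors are invariant and the zeta functions add. I would first verify these facts on the scalar level and then tensor with $K^2$.
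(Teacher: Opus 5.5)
Your computation is correct and arrives at \eqref{eq:det-B-relative}, but it is organized differently from the paper's proof.

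The paper never squares $B_{X,\mathrm{rel},K}$ inside a single sector. It builds the off-diagonal block operator $\widetilde S_{X,K}$ on (relative $\oplus$ absolute) coexact $1$-forms and $\widetilde T_{X,K}=K\otimes d$ on (relative $\oplus$ absolute) $0$-forms. It then applies the determinant identity for $V^0\to V^1\to V^1$ to write $|\det\nolimits'_\zeta\widetilde S_{X,K}|^2$ as $\det\nolimits'_\zeta(\widetilde S^2+\widetilde T\widetilde T^*)/\det\nolimits'_\zeta(\widetilde T^*\widetilde T)$. Finally it sets $|\det\nolimits'_\zeta B_{X,\mathrm{rel},K}|=|\det\nolimits'_\zeta\widetilde S_{X,K}|^{1/2}$, so the symmetric relative/absolute product is built in from the start.

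You instead treat the two boundary conditions separately, strip off the exact parts via $d$, and prove that the relative and absolute coexact $1$-form determinants coincide. The paper's identification $|\det B|=|\det\widetilde S|^{1/2}$ relies on exactly this isospectrality without saying so. Your route is therefore more elementary and makes that fact explicit. It also shows that the one-sided formula $|\det\nolimits'_\zeta B_{X,\mathrm{rel},K}|^2=\det\nolimits'_\zeta(K^2\otimes\Delta_1^{\mathrm{rel}})/\det\nolimits'_\zeta(K^2\otimes\Delta_0^{\mathrm{rel}})$ is equivalent to the symmetric one. The symmetrized form is the one that feeds into Lemma~\ref{lem:boundary-zeta-sum}.

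What the block operator buys is self-adjointness, and this is the one step of yours that needs rewording. The curl does not preserve the relative coexact sector. If $i^*\omega=0$, then $*(*d\omega)=d\omega$ has zero pullback to $\Sigma$, so $*d\omega$ satisfies the absolute condition and generally not the relative one. For example, on $[0,1]_u\times T^2$ the relative coexact eigenform $\sin(\pi u)\,dy_1$ is sent to $\pi\cos(\pi u)\,dy_2$, whose pullback at $u=0$ is $\pi\,dy_2$. So $B_{X,\mathrm{rel},K}$ maps relative coexact forms to absolute ones, and ``$|\det B|^2=\det(B^2)$ for self-adjoint $B$'' does not apply as stated.

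There are two ways to repair this:
\begin{itemize}
\item Read $|\det\nolimits'_\zeta B_{X,\mathrm{rel},K}|$ as $\det\nolimits'_\zeta(B^*B)^{1/2}$. Here $B^*B$ and $BB^*$ are $K^2\otimes\Delta_1^{\mathrm{rel}}$ and $K^2\otimes\Delta_1^{\mathrm{abs}}$ on the respective coexact sectors, and your second step shows they have equal determinants.
\item Pass to $\widetilde S_{X,K}$, which is essentially $K\otimes{*d}$ on the double $\widehat X$, written in the even/odd splitting of Lemma~\ref{lem:boundary-zeta-sum}.
\end{itemize}

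For the same reason, the domain check you list would fail if taken literally. The coexact part of $\Delta_1^{\mathrm{rel}}$ consists of $d^*$ of relative $2$-forms, whereas $d^*\Omega^2_{\mathrm{abs}}$ spans the absolute coexact sector. This mismatch is harmless precisely because of your isospectrality.
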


\begin{proof}
The proof is the boundary analogue of the standard block-operator argument used in the
rank-one case; compare \cite[Eq.~(IV.2''-IV.3)]{Manoliu3}.

Let
\[
V^{1,\mathrm{rel}}_{\perp}:=d^*\Omega^2_{\mathrm{abs}}(X;\mathbb R),
\qquad
V^{1,\mathrm{abs}}_{\perp}:=d^*\Omega^2_{\mathrm{rel}}(X;\mathbb R),
\]
and let
\[
V^{0,\mathrm{rel}}_{\perp}:=\Omega^0_0(X;\mathbb R)\cap \mathcal H^0_{\mathrm{rel}}(X)^\perp,
\qquad
V^{0,\mathrm{abs}}_{\perp}:=\Omega^0(X;\mathbb R)\cap \mathcal H^0_{\mathrm{abs}}(X)^\perp.
\]
Define block operators
\[
\widetilde S_{X,K}
:=
\begin{pmatrix}
0 & K\otimes(*d)\\
K\otimes(*d) & 0
\end{pmatrix}
\]
on
\[
\bigl(V^{1,\mathrm{rel}}_{\perp}\otimes\mathfrak t\bigr)
\oplus
\bigl(V^{1,\mathrm{abs}}_{\perp}\otimes\mathfrak t\bigr),
\]
and
\[
\widetilde T_{X,K}
:=
\begin{pmatrix}
K\otimes d & 0\\
0 & K\otimes d
\end{pmatrix}
\]
from
\[
\bigl(V^{0,\mathrm{rel}}_{\perp}\otimes\mathfrak t\bigr)
\oplus
\bigl(V^{0,\mathrm{abs}}_{\perp}\otimes\mathfrak t\bigr)
\]
to the above \(1\)-form space. On the orthogonal complements of harmonic forms, the Hodge identities imply
\[
\widetilde S_{X,K}^{\,2}
+\widetilde T_{X,K}\widetilde T_{X,K}^{\,*}
=
K^2\otimes
\bigl(
\Delta_1^{\mathrm{rel}}\oplus \Delta_1^{\mathrm{abs}}
\bigr),
\]
and
\[
\widetilde T_{X,K}^{\,*}\widetilde T_{X,K}
=
K^2\otimes
\bigl(
\Delta_0^{\mathrm{rel}}\oplus \Delta_0^{\mathrm{abs}}
\bigr).
\]
By the determinant identity for an elliptic complex of the form
\[
0\longrightarrow V^0\xrightarrow{\widetilde T}V^1\xrightarrow{\widetilde S}V^1\longrightarrow 0,
\]
one has
\[
\bigl|\det\nolimits'_{\zeta}(B_{X,\mathrm{rel},K})\bigr|
=
\bigl|\det\nolimits'_{\zeta}(\widetilde S_{X,K})\bigr|^{1/2}
=
\left(
\frac{
\det\nolimits'_{\zeta}(\widetilde S_{X,K}^{\,2}+\widetilde T_{X,K}\widetilde T_{X,K}^{\,*})
}{
\det\nolimits'_{\zeta}(\widetilde T_{X,K}^{\,*}\widetilde T_{X,K})
}
\right)^{1/4},
\]
which is exactly \eqref{eq:det-B-relative}.
\end{proof}

\begin{lemma}\label{lem:boundary-zeta-sum}
Assume the metric is a product near $\partial X$, and let $\widehat X:=X\cup_{\partial X}(-X)$. Then for each $q$,
\[
\zeta_{\Delta_q^{\mathrm{abs}}}(s)+\zeta_{\Delta_q^{\mathrm{rel}}}(s)
=
\zeta_{\Delta_q^{\widehat X}}(s),
\]
hence, at $s=0$,
\[
\zeta_{\Delta_q^{\mathrm{abs}}}(0)+\zeta_{\Delta_q^{\mathrm{rel}}}(0)
=
-b_q(\widehat X)
=
-b_q^{\mathrm{abs}}(X)-b_q^{\mathrm{rel}}(X).
\]
In particular,
\[
\frac14\Bigl(
\zeta_{\Delta_0^{\mathrm{rel}}}(0)+\zeta_{\Delta_0^{\mathrm{abs}}}(0)
-
\zeta_{\Delta_1^{\mathrm{rel}}}(0)-\zeta_{\Delta_1^{\mathrm{abs}}}(0)
\Bigr)
=
m_X.
\]
\end{lemma}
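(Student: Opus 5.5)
The plan is to use the doubling construction together with the reflection involution. Because the metric is a product near $\partial X$, the double $\widehat X=X\cup_{\partial X}(-X)$ carries a smooth metric. The reflection $\tau\colon\widehat X\to\widehat X$ exchanging the two copies of $X$ is then an isometry. It therefore commutes with the Hodge Laplacian $\Delta_q^{\widehat X}$ and acts on $\Omega^q(\widehat X)$ as an involution.

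\textbf{Step 1: splitting the spectrum.} Split $\Omega^q(\widehat X)=\Omega^q_+\oplus\Omega^q_-$ into the $\tau$-even and $\tau$-odd parts; $\Delta_q^{\widehat X}$ preserves this splitting. Restriction to $X$ should identify even eigenforms with eigenforms of $\Delta_q$ satisfying absolute boundary conditions. Indeed, the normal component of an even form and the normal derivative of its tangential part vanish on $\partial X$. Likewise, odd eigenforms should correspond to eigenforms with relative boundary conditions, since their tangential part vanishes. Conversely, every absolute (resp.\ relative) eigenform on $X$ extends by even (resp.\ odd) reflection to a smooth eigenform on $\widehat X$. The product structure near the boundary, combined with elliptic regularity, gives smoothness across $\partial X$. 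Hence the nonzero spectrum of $\Delta_q^{\widehat X}$, with multiplicities, is the union of the nonzero spectra of $\Delta_q^{\mathrm{abs}}$ and $\Delta_q^{\mathrm{rel}}$. For $\operatorname{Re}s$ large this gives $\zeta_{\Delta_q^{\mathrm{abs}}}+\zeta_{\Delta_q^{\mathrm{rel}}}=\zeta_{\Delta_q^{\widehat X}}$, and the identity extends to all $s$ by meromorphic continuation. The same splitting applied to kernels gives $b_q(\widehat X)=b_q^{\mathrm{abs}}(X)+b_q^{\mathrm{rel}}(X)$, via the Hodge identifications recalled at the start of the section.

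\textbf{Step 2: the value at $s=0$.} On the closed odd-dimensional manifold $\widehat X$, the heat trace $\operatorname{Tr}e^{-t\Delta_q^{\widehat X}}$ has an asymptotic expansion containing only half-integer powers $t^{(k-3)/2}$ with $k$ even, that is, $t^{-3/2},t^{-1/2},t^{1/2},\dots$. In particular there is no $t^0$ coefficient. Writing $\zeta(s)=\Gamma(s)^{-1}\int_0^\infty t^{s-1}(\operatorname{Tr}e^{-t\Delta}-\dim\ker\Delta)\,dt$, the value at $s=0$ equals the $t^0$ heat coefficient minus $\dim\ker\Delta$. This gives $\zeta_{\Delta_q^{\widehat X}}(0)=-b_q(\widehat X)$. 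Combining with Step 1 yields the displayed identity at $s=0$.

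\textbf{Step 3: the final formula.} Substitute the $q=0,1$ cases into the left-hand side of the last display. It becomes $\tfrac14\bigl(-b_0^{\mathrm{rel}}-b_0^{\mathrm{abs}}+b_1^{\mathrm{rel}}+b_1^{\mathrm{abs}}\bigr)$, which is exactly $m_X$ by Definition~\ref{def:mX-toral}.

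The main obstacle is Step 1, specifically verifying that reflected eigenforms are genuinely smooth and that the boundary conditions match exactly. This requires writing a form near $\partial X$ as $\alpha+du\wedge\beta$ in collar coordinates $(u,y)$. Under $\tau$ one has $u\mapsto -u$, so $du\mapsto -du$. Evenness therefore forces $\beta$ odd in $u$ and $\alpha$ even in $u$, which is the absolute condition; the odd case gives the relative condition. The product metric is what guarantees that $\Delta$ commutes with this reflection on the collar.
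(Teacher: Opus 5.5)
Your proposal is correct and matches the paper's proof step for step: the $\tau$-even/odd splitting of $\Omega^q(\widehat X)$, the collar computation with $\omega=\alpha+du\wedge\beta$ identifying even forms with absolute and odd forms with relative boundary conditions, meromorphic continuation of the zeta identity, and the vanishing of the constant heat coefficient on the closed $3$-manifold $\widehat X$. Your explicit treatment of the converse direction, where an absolute or relative eigenform is reflected to a smooth eigenform on $\widehat X$ by elliptic regularity, fills in a point the paper leaves implicit.
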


\begin{proof}
Choose a collar neighborhood $(-\varepsilon,\varepsilon)_u\times \partial X$ on the double
$\widehat X$, with $X=\{u\ge 0\}$. Since the metric is a product near $\partial X$, the
Hodge Laplacian on $\widehat X$ commutes with the reflection
\[
\tau(u,y)=(-u,y).
\]
Thus
\[
\Omega^q(\widehat X)
=
\Omega^q(\widehat X)^+
\oplus
\Omega^q(\widehat X)^-,
\qquad
\Omega^q(\widehat X)^\pm
:=
\{\omega\in\Omega^q(\widehat X)\mid \tau^*\omega=\pm\omega\},
\]
and this decomposition is preserved by $\Delta_q^{\widehat X}$. Write a form near the boundary as
\[
\omega(u,y)=\alpha(u,y)+du\wedge \beta(u,y),
\]
with $\alpha(u,\cdot)$ tangential of degree $q$ and $\beta(u,\cdot)$ tangential of degree
$q-1$. If $\tau^*\omega=\omega$, then
\[
\alpha(-u,y)=\alpha(u,y),
\qquad
\beta(-u,y)=-\beta(u,y),
\]
so at $u=0$ one has
\[
\beta(0,y)=0,
\qquad
\partial_u\alpha(0,y)=0.
\]
These are exactly the absolute boundary conditions on the restriction of $\omega$ to $X$.
Similarly, if $\tau^*\omega=-\omega$, then
\[
\alpha(-u,y)=-\alpha(u,y),
\qquad
\beta(-u,y)=\beta(u,y),
\]
hence
\[
\alpha(0,y)=0,
\qquad
\partial_u\beta(0,y)=0,
\]
which are exactly the relative boundary conditions. Therefore restriction to $X$ identifies the $+$ eigenspace of
$\Delta_q^{\widehat X}$ with $\Delta_q^{\mathrm{abs}}$ on $X$, and the $-$ eigenspace with
$\Delta_q^{\mathrm{rel}}$ on $X$. It follows that, for $\Re(s)\gg 0$,
\[
\zeta_{\Delta_q^{\widehat X}}(s)
=
\zeta_{\Delta_q^{\mathrm{abs}}}(s)
+
\zeta_{\Delta_q^{\mathrm{rel}}}(s),
\]
and hence everywhere by meromorphic continuation. The same even/odd decomposition identifies harmonic forms on $\widehat X$ with the direct
sum of harmonic forms on $X$ satisfying absolute and relative boundary conditions, so
\[
b_q(\widehat X)=b_q^{\mathrm{abs}}(X)+b_q^{\mathrm{rel}}(X).
\]
Since $\widehat X$ is closed and $3$-dimensional, the constant heat coefficient vanishes, and
therefore
\[
\zeta_{\Delta_q^{\widehat X}}(0)
=
-\dim\ker \Delta_q^{\widehat X}
=
-b_q(\widehat X).
\]
Combining these identities gives
\[
\zeta_{\Delta_q^{\mathrm{abs}}}(0)+\zeta_{\Delta_q^{\mathrm{rel}}}(0)
=
-b_q^{\mathrm{abs}}(X)-b_q^{\mathrm{rel}}(X).
\]
Taking $q=0,1$ yields
\[
\frac14\Bigl(
\zeta_{\Delta_0^{\mathrm{rel}}}(0)+\zeta_{\Delta_0^{\mathrm{abs}}}(0)
-
\zeta_{\Delta_1^{\mathrm{rel}}}(0)-\zeta_{\Delta_1^{\mathrm{abs}}}(0)
\Bigr)
=
\frac14\bigl(
b_1^{\mathrm{abs}}+b_1^{\mathrm{rel}}-b_0^{\mathrm{abs}}-b_0^{\mathrm{rel}}
\bigr)
=
m_X.
\]
\end{proof}
\begin{theorem}
\label{thm:mx-relative}
Assume \(\partial X\neq\varnothing\). Then the full determinant ratio appearing in the relative
Gaussian functional integral satisfies
\begin{equation}
\label{eq:mx-relative}
\det\nolimits'_{\zeta}(\Delta^{\mathrm{rel}}_{0,\mathfrak t})^{1/2}
\,
\bigl|\det\nolimits'_{\zeta}(B_{X,\mathrm{rel},K})\bigr|^{-1/2}
=
|\det K|^{m_X}
\left(
\frac{
\det\nolimits'_{\zeta}(\Delta_0^{\mathrm{rel}})^{5/8}
\det\nolimits'_{\zeta}(\Delta_0^{\mathrm{abs}})^{1/8}
}{
\det\nolimits'_{\zeta}(\Delta_1^{\mathrm{rel}})^{1/8}
\det\nolimits'_{\zeta}(\Delta_1^{\mathrm{abs}})^{1/8}
}
\right)^{\rk \mathfrak t}.
\end{equation}
In particular, the power of \(|\det K|\) is exactly
\[
m_X
=
\frac14\Bigl(
b_1^{\mathrm{abs}}(X)+b_1^{\mathrm{rel}}(X)
-b_0^{\mathrm{abs}}(X)-b_0^{\mathrm{rel}}(X)
\Bigr).
\]
\end{theorem}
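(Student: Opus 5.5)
Combine Proposition~\ref{prop:det-B-relative} with the spectral factorization of Lemma~\ref{lem:spectral-factorization-toral}, applied to $K^2$ rather than $K$. Then read off the exponent of $|\det K|$ from Lemma~\ref{lem:boundary-zeta-sum}.

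\textbf{Step 1: the coexact determinant.} Start from \eqref{eq:det-B-relative}. Each factor has the form $\det'_\zeta(K^2\otimes\Delta_q^{\bullet})$ with $\bullet\in\{\mathrm{rel},\mathrm{abs}\}$ and $q\in\{0,1\}$. Since $K^2$ is positive definite with determinant $(\det K)^2$, Lemma~\ref{lem:spectral-factorization-toral} gives
\[
\det\nolimits'_\zeta(K^2\otimes\Delta_q^\bullet)=|\det K|^{2\zeta_{\Delta_q^\bullet}(0)}\det\nolimits'_\zeta(\Delta_q^\bullet)^{n},\qquad n=\rk\mathfrak t .
\]
The scaling identity is the one in the footnote there. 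It applies because each $\Delta_q^\bullet$ is a nonnegative elliptic boundary problem with a well-defined zeta function regular at $0$. Substituting into \eqref{eq:det-B-relative} and taking the power $-1/2$, the $|\det K|$-exponent of $|\det'_\zeta(B_{X,\mathrm{rel},K})|^{-1/2}$ is
\[
-\tfrac14\bigl(\zeta_{\Delta_1^{\mathrm{rel}}}(0)+\zeta_{\Delta_1^{\mathrm{abs}}}(0)-\zeta_{\Delta_0^{\mathrm{rel}}}(0)-\zeta_{\Delta_0^{\mathrm{abs}}}(0)\bigr).
\]
The remaining scalar part is $\bigl(\det'_\zeta(\Delta_0^{\mathrm{rel}})\det'_\zeta(\Delta_0^{\mathrm{abs}})/\det'_\zeta(\Delta_1^{\mathrm{rel}})\det'_\zeta(\Delta_1^{\mathrm{abs}})\bigr)^{n/8}$.

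\textbf{Step 2: the Faddeev--Popov factor.} The factor $\det'_\zeta(\Delta^{\mathrm{rel}}_{0,\mathfrak t})^{1/2}$ involves $\mathrm{Id}_{\mathfrak t}$ rather than $K$. Lemma~\ref{lem:spectral-factorization-toral} with all eigenvalues equal to $1$ gives $\det'_\zeta(\Delta_0^{\mathrm{rel}})^{n/2}$, with no $K$-dependence. Multiplying by the $\det'_\zeta(\Delta_0^{\mathrm{rel}})^{n/8}$ from Step 1 yields the exponent $5/8$. The exponents $1/8$ on $\det'_\zeta(\Delta_0^{\mathrm{abs}})$ and $-1/8$ on the two $\Delta_1$ determinants come straight from Step 1. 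This reproduces the bracket in \eqref{eq:mx-relative}.

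\textbf{Step 3: identifying the exponent.} By the last display of Lemma~\ref{lem:boundary-zeta-sum}, the exponent from Step 1 equals exactly $m_X$. That lemma rests on the doubling $\widehat X$ and the vanishing of the constant heat coefficient in odd dimension. This gives the claimed $|\det K|^{m_X}$ and the stated formula for the power.

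\textbf{Main obstacle.} The delicate point is Step 1. We must justify applying the scaling identity $\det'_\zeta(cA)=c^{\zeta_A(0)}\det'_\zeta(A)$ to the boundary-value Laplacians and use their $\zeta(0)$ values, not closed-manifold ones. For a general boundary problem $\zeta_A(0)$ has a local boundary contribution. The plan is to avoid computing these contributions individually: only the sums $\zeta^{\mathrm{abs}}_{\Delta_q}(0)+\zeta^{\mathrm{rel}}_{\Delta_q}(0)$ enter, and Lemma~\ref{lem:boundary-zeta-sum} controls exactly these via the reflection decomposition on the double. I would also check the sign bookkeeping: $K^2$ contributes $|\det K|^2$, which the fourth root in \eqref{eq:det-B-relative} and the outer $-1/2$ reduce to an exponent $-\tfrac14\cdot 2\cdot\tfrac12\cdot 2=$ the stated coefficient.
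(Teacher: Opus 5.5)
Your proposal is correct and follows the paper's proof essentially step for step. You start from Proposition~\ref{prop:det-B-relative}, apply the $K^2$ scaling law $\det\nolimits'_\zeta(K^2\otimes A)=|\det K|^{2\zeta_A(0)}\det\nolimits'_\zeta(A)^{n}$, identify the resulting exponent as $m_X$ via Lemma~\ref{lem:boundary-zeta-sum}, and multiply by the Faddeev--Popov factor to obtain the exponent $5/8$. The one blemish is the sanity-check product $-\tfrac14\cdot 2\cdot\tfrac12\cdot 2$ in your last paragraph: it equals $-\tfrac12$, whereas the correct per-zeta coefficient is $2\cdot\tfrac14\cdot(-\tfrac12)=-\tfrac14$, which is what you correctly used in Step~1.
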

\begin{proof}
From Proposition~\ref{prop:det-B-relative},
\[
\bigl|\det\nolimits'_{\zeta}(B_{X,\mathrm{rel},K})\bigr|^{-1/2}
=
\left(
\frac{
\det\nolimits'_{\zeta}(K^2\otimes\Delta_0^{\mathrm{rel}})
\det\nolimits'_{\zeta}(K^2\otimes\Delta_0^{\mathrm{abs}})
}{
\det\nolimits'_{\zeta}(K^2\otimes\Delta_1^{\mathrm{rel}})
\det\nolimits'_{\zeta}(K^2\otimes\Delta_1^{\mathrm{abs}})
}
\right)^{1/8}.
\]
Using the scaling law
\[
\det\nolimits'_{\zeta}(K^2\otimes A)
=
|\det K|^{2\zeta_A(0)}\det\nolimits'_{\zeta}(A)^{\operatorname{rk}\mathfrak t},
\]
we obtain
\[
\bigl|\det\nolimits'_{\zeta}(B_{X,\mathrm{rel},K})\bigr|^{-1/2}
=
|\det K|^{\frac14\left(
\zeta_{\Delta_0^{\mathrm{rel}}}(0)+\zeta_{\Delta_0^{\mathrm{abs}}}(0)
-\zeta_{\Delta_1^{\mathrm{rel}}}(0)-\zeta_{\Delta_1^{\mathrm{abs}}}(0)
\right)}
\left(
\frac{
\det\nolimits'_{\zeta}(\Delta_0^{\mathrm{rel}})
\det\nolimits'_{\zeta}(\Delta_0^{\mathrm{abs}})
}{
\det\nolimits'_{\zeta}(\Delta_1^{\mathrm{rel}})
\det\nolimits'_{\zeta}(\Delta_1^{\mathrm{abs}})
}
\right)^{\operatorname{rk}\mathfrak t/8}.
\]
By the previous lemma, the exponent of $|\det K|$ is exactly $m_X$. Since
\[
\det\nolimits'_{\zeta}(\Delta_{0,\mathfrak t}^{\mathrm{rel}})^{1/2}
=
\det\nolimits'_{\zeta}(\Delta_0^{\mathrm{rel}})^{\operatorname{rk}\mathfrak t/2},
\]
multiplying by \(\det\nolimits'_{\zeta}(\Delta_{0,\mathfrak t}^{\mathrm{rel}})^{1/2}\) gives
\[
\det\nolimits'_{\zeta}(\Delta_{0,\mathfrak t}^{\mathrm{rel}})^{1/2}
\bigl|\det\nolimits'_{\zeta}(B_{X,\mathrm{rel},K})\bigr|^{-1/2}
=
|\det K|^{m_X}
\left(
\frac{
\det\nolimits'_{\zeta}(\Delta_0^{\mathrm{rel}})^{5/8}
\det\nolimits'_{\zeta}(\Delta_0^{\mathrm{abs}})^{1/8}
}{
\det\nolimits'_{\zeta}(\Delta_1^{\mathrm{rel}})^{1/8}
\det\nolimits'_{\zeta}(\Delta_1^{\mathrm{abs}})^{1/8}
}
\right)^{\operatorname{rk}\mathfrak t}.
\]
which simplifies to \eqref{eq:mx-relative}.
\end{proof}

\begin{corollary}
\label{cor:mx-closed}
If \(\partial X=\varnothing\), then
\begin{equation}
\label{eq:mx-closed}
\det\nolimits'_{\zeta}(\Delta_{0,\mathfrak t})^{1/2}
\,
\bigl|\det\nolimits'_{\zeta}(B_{X,K})\bigr|^{-1/2}
=
|\det K|^{m_X}
\left(
\frac{
\det\nolimits'_{\zeta}(\Delta_0)^{3/4}
}{
\det\nolimits'_{\zeta}(\Delta_1)^{1/4}
}
\right)^{\rk \mathfrak t}.
\end{equation}
Equivalently, for closed connected \(X\),
\[
m_X=\frac12\bigl(b_1(X)-b_0(X)\bigr)=\frac12\bigl(b_1(X)-1\bigr).
\]
\end{corollary}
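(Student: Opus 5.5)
The plan is to repeat the proof of Theorem~\ref{thm:mx-relative} in the closed setting, where there is only one Laplacian in each degree. I would first establish the closed analogue of Proposition~\ref{prop:det-B-relative}, namely
\[
\bigl|\det\nolimits'_{\zeta}(B_{X,K})\bigr|
=
\left(
\frac{\det\nolimits'_{\zeta}(K^2\otimes\Delta_1)}{\det\nolimits'_{\zeta}(K^2\otimes\Delta_0)}
\right)^{1/2}.
\]
It comes from the same block-operator argument, now without doubling. On the coexact $1$-forms one has $B_{X,K}^2=K^2\otimes d^*d$. On the co-closed complement, $\Delta_1$ restricted to exact forms is isospectral to $\Delta_0$ on nonconstant functions, via $\phi\mapsto d\phi$. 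Hence
\[
\det\nolimits'_{\zeta}(\Delta_1)=\det\nolimits'_{\zeta}(d^*d|_{\mathrm{coex}})\,\det\nolimits'_{\zeta}(\Delta_0),
\]
and the same splitting holds after tensoring with $K^2$. This splitting uses the standard multiplicativity of zeta determinants on invariant orthogonal summands, which is exact here because the zeta functions simply add. Combining it with $|\det' B|=\det'(B^2)^{1/2}$ gives the displayed identity.

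Next I would apply the scaling law from Lemma~\ref{lem:spectral-factorization-toral}, in the form $\det'_\zeta(K^2\otimes A)=|\det K|^{2\zeta_A(0)}\det'_\zeta(A)^{\rk\mathfrak t}$. This yields
\[
\bigl|\det\nolimits'_{\zeta}(B_{X,K})\bigr|^{-1/2}
=
|\det K|^{\frac12(\zeta_{\Delta_0}(0)-\zeta_{\Delta_1}(0))}
\left(\frac{\det'_\zeta(\Delta_0)}{\det'_\zeta(\Delta_1)}\right)^{\rk\mathfrak t/4}.
\]
Multiplying by $\det'_\zeta(\Delta_{0,\mathfrak t})^{1/2}=\det'_\zeta(\Delta_0)^{\rk\mathfrak t/2}$ gives the exponents $3/4$ and $1/4$ in \eqref{eq:mx-closed}.

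The step that needs care is identifying the exponent of $|\det K|$. On a closed odd-dimensional manifold the heat coefficient $a_{3/2}$ vanishes, so $\zeta_{\Delta_q}(0)=-\dim\ker\Delta_q=-b_q(X)$. This is exactly the ingredient used at the end of the proof of Lemma~\ref{lem:boundary-zeta-sum}, applied to $X$ itself instead of the double. The exponent is then $\tfrac12(b_1(X)-b_0(X))=\tfrac12(b_1(X)-1)$ for connected $X$. This agrees with Definition~\ref{def:mX-toral}, since when $\partial X=\varnothing$ the absolute and relative Betti numbers coincide, so the $\tfrac14$ of twice the difference is $\tfrac12$ of the difference.

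The main obstacle I anticipate is the first step: justifying the factorization $\det'(\Delta_1)=\det'(d^*d|_{\mathrm{coex}})\det'(\Delta_0)$. This requires splitting the zeta functions of $\Delta_1$ over the exact and coexact pieces, with the harmonic forms excluded. I would handle it by noting that spectral zeta functions are additive over orthogonal invariant decompositions for $\Re s\gg0$, and then extending by meromorphic continuation, in the same way as in the proof of Lemma~\ref{lem:boundary-zeta-sum}.
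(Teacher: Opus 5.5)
Your proposal is correct and follows the paper's proof. The steps are the same: the closed analogue of Proposition~\ref{prop:det-B-relative}, $|\det'_\zeta(B_{X,K})|=(\det'_\zeta(K^2\otimes\Delta_1)/\det'_\zeta(K^2\otimes\Delta_0))^{1/2}$; then the scaling law with $\zeta_{\Delta_q}(0)=-b_q(X)$; then multiplication by $\det'_\zeta(\Delta_{0,\mathfrak t})^{1/2}=\det'_\zeta(\Delta_0)^{\rk\mathfrak t/2}$. The only difference is that you justify two inputs the paper just asserts: the determinant ratio, via isospectrality of $\Delta_1$ on exact forms with $\Delta_0$ on nonconstant functions and additivity of zeta functions over orthogonal summands, and $\zeta_{\Delta_q}(0)=-b_q$, via the vanishing constant heat coefficient in odd dimension.
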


\begin{proof}
In the closed case there are no boundary conditions, and the analogue of
\eqref{eq:det-B-relative} reduces to
\[
\bigl|\det\nolimits'_{\zeta}(B_{X,K})\bigr|
=
\left(
\frac{
\det\nolimits'_{\zeta}(K^2\otimes \Delta_1)
}{
\det\nolimits'_{\zeta}(K^2\otimes \Delta_0)
}
\right)^{1/2}.
\]
Using the scaling law
\[
\det\nolimits'_{\zeta}(K^2\otimes \Delta_q)
=
|\det K|^{-2b_q(X)}\,
\det\nolimits'_{\zeta}(\Delta_q)^{\rk \mathfrak t},
\]
we obtain
\[
\bigl|\det\nolimits'_{\zeta}(B_{X,K})\bigr|^{-1/2}
=
|\det K|^{\frac12(b_1-b_0)}
\left(
\frac{
\det\nolimits'_{\zeta}(\Delta_0)
}{
\det\nolimits'_{\zeta}(\Delta_1)
}
\right)^{\rk \mathfrak t/4}.
\]
Multiplying by
\[
\det\nolimits'_{\zeta}(\Delta_{0,\mathfrak t})^{1/2}
=
\det\nolimits'_{\zeta}(\Delta_0)^{\rk \mathfrak t/2}
\]
gives \eqref{eq:mx-closed}.
\end{proof}

\begin{remark}
Theorem~\ref{thm:mx-relative} and Corollary~\ref{cor:mx-closed} are the precise toral
analogues of the rank-one determinant bookkeeping that produces the factor \(k^{m_X}\) in
Manoliu's formulas. In the present higher-rank setting, the single level \(k\) is replaced by
the lattice form \(K\), and the same bookkeeping yields the factor \(|\det K|^{m_X}\).
\end{remark}

\subsection{Closed manifolds}

We now rewrite the determinant expression of
Proposition~\ref{prop:closed-gauge-reduction} in terms of Ray--Singer torsion and isolate
the metric anomaly carried by the eta-invariant. Here \(\eta(B_X)\) denotes the Atiyah--Patodi--Singer eta-invariant \cite{Atiyah:1975jf}.

\begin{prop}\label{prop:toral-determinant-factor}
The coexact Gaussian factor satisfies
\[
\eta(B_{X,K})=\sigma(K)\eta(B_X),
\]
and
\[
\det\nolimits'_{\zeta}(B_{X,K})
=
|\det K|^{\zeta_{B_X}(0)}\det\nolimits'_{\zeta}(B_X)^{\operatorname{rk}\mathfrak t}.
\]
\end{prop}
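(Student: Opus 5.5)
This is a direct application of Lemma~\ref{lem:spectral-factorization-toral} with $A=B_X$ acting on the real coexact $1$-forms $d^*\Omega^2(X;\mathbb R)$. The only work is to check that $B_X$ meets the hypotheses of that lemma, and that the identification of $B_{X,K}$ with $K\otimes B_X$ matches its setting.

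First, I will make the identification precise. Fix an inner product on $\mathfrak t$, for instance the one coming from a $\Lambda$-basis. Regard $K$ as the self-adjoint invertible endomorphism of $\mathfrak t$ with $K(u,v)=\langle Ku,v\rangle$; it is invertible because $K$ is nondegenerate. Then
\[
d^*\Omega^2(X;\mathfrak t)=\mathfrak t\otimes d^*\Omega^2(X;\mathbb R),
\]
and $B_{X,K}=K\otimes B_X$ as operators. Choose an orthonormal eigenbasis $e_1,\dots,e_n$ of $K$ with eigenvalues $\lambda_j$. This gives the orthogonal splitting $B_{X,K}\cong\bigoplus_j \lambda_j B_X$. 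By construction the number of negative $\lambda_j$ equals the number of negative directions of the form $K$, so $\sum_j\operatorname{sgn}\lambda_j=\sigma(K)$.

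Second, I will check the analytic hypotheses on $B_X=*d$ restricted to coexact forms. It is formally self-adjoint on a closed $3$-manifold, since
\[
\int_X \alpha\wedge d\beta=\int_X d\alpha\wedge\beta .
\]
It preserves $d^*\Omega^2$, because $d^*{*d}=\pm{*d}{*}{*d}=\pm{*dd}=0$. Its square is the Laplacian $\Delta_1=d^*d$ on that subspace. The full odd signature operator on $\Omega^0\oplus\Omega^1$ (or on all even forms) is elliptic, and $B_X$ is one of its invariant summands. Hence $B_X$ has discrete real spectrum with eigenvalues of growth $|\mu_k|\sim ck^{1/3}$. Its kernel on the coexact sector is trivial, as shown in the proof of Proposition~\ref{prop:closed-gauge-reduction}. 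Therefore $\eta_{B_X}(s)$ and $\zeta_{B_X}(s)=\sum|\mu|^{-s}$ are defined for $\Re s\gg0$ and continue meromorphically, with $\eta$ regular at $0$ by Atiyah--Patodi--Singer.

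Third, the computation itself. For each $j$ I use the scaling identities
\[
\eta_{\lambda B}(s)=\operatorname{sgn}(\lambda)|\lambda|^{-s}\eta_B(s),
\qquad
\zeta_{|\lambda B|}(s)=|\lambda|^{-s}\zeta_{|B|}(s).
\]
At $s=0$ these give $\eta(\lambda_jB_X)=\operatorname{sgn}(\lambda_j)\eta(B_X)$ and
\[
\det\nolimits'_\zeta(\lambda_jB_X)=|\lambda_j|^{\zeta_{B_X}(0)}\det\nolimits'_\zeta(B_X),
\]
where the determinant is taken in modulus, consistent with $\mathcal I_\zeta$. Summing the $\eta$ over $j$ gives $\sigma(K)\eta(B_X)$. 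Taking the product of the determinants gives
\[
\prod_j|\lambda_j|^{\zeta_{B_X}(0)}\det\nolimits'_\zeta(B_X)^{n}=|\det K|^{\zeta_{B_X}(0)}\det\nolimits'_\zeta(B_X)^{\operatorname{rk}\mathfrak t}.
\]
This is exactly the argument of Lemma~\ref{lem:spectral-factorization-toral}, which I will invoke directly.

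The main obstacle is a subtlety in the second step. The determinant of the non-positive operator $B_X$ must be understood as $\det'_\zeta|B_X|$ up to a phase, and that phase is absorbed into $\eta$ by Definition~\ref{def:zeta-gaussian-factor}. The scaling identity also needs $\zeta_{|B_X|}$ to be regular at $s=0$. I will handle this by noting $\zeta_{|B_X|}(s)=\zeta_{\Delta_1|_{\mathrm{coex}}}(s/2)$. The coexact Laplacian zeta function is regular at $0$, because it is a difference of the full $\Delta_1$ and $\Delta_0$ zeta functions on a closed odd-dimensional manifold.
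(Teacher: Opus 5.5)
Your proposal is correct and follows the paper's proof, which consists only of applying Lemma~\ref{lem:spectral-factorization-toral} with $A=B_X$. Your additional checks supply hypotheses the paper leaves implicit: that $B_X$ is a self-adjoint invariant summand of the odd signature operator with trivial kernel, that $\zeta_{|B_X|}$ is regular at $0$, that $\det\nolimits'_{\zeta}$ is taken in modulus, and that an inner product on $\mathfrak t$ is fixed so that $|\det K|$ is the lattice discriminant.
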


\begin{proof}
This is exactly Lemma~\ref{lem:spectral-factorization-toral} applied to \(A=B_X\).
\end{proof}

\begin{prop}
\label{prop:torsion-rewrite}
There exists a canonical translation-invariant density \(d\mu_{\mathrm{har}}\) on
\(\mathcal M_{X,p}(\mathbb T)\) such that
\begin{equation}
\label{eq:torsion}
\det\nolimits'_{\zeta}(\Delta_{0,\mathfrak t})^{1/2}\,
\bigl|\det\nolimits'_{\zeta}(B_{X,K})\bigr|^{-1/2}\,
d\mu_{\mathrm{har}}
=
|\det K|^{m_X}\,
T_X(\mathfrak t)^{1/2}.
\end{equation}
\end{prop}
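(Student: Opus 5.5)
Our plan is to combine Corollary~\ref{cor:mx-closed} with the Ray--Singer description of the torsion as a density on the harmonic torus. By \eqref{eq:mx-closed}, the left-hand side of \eqref{eq:torsion} equals
\[
|\det K|^{m_X}\left(\frac{\det\nolimits'_{\zeta}(\Delta_0)^{3/4}}{\det\nolimits'_{\zeta}(\Delta_1)^{1/4}}\right)^{\rk\mathfrak t} d\mu_{\mathrm{har}}.
\]
So it suffices to find a translation-invariant density $d\mu_{\mathrm{har}}$ on $H^1(X;\mathfrak t)/H^1(X;\Lambda)$ for which the scalar prefactor times $d\mu_{\mathrm{har}}$ equals $T_X(\mathfrak t)^{1/2}$. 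Here $T_X(\mathfrak t)^{1/2}$ is viewed as a density on $\mathcal M_{X,p}(\mathbb T)$, as fixed in the setup.

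\textbf{Step 1 (analytic torsion).} Recall that for a closed $3$-manifold the Ray--Singer torsion of the trivial real line system is the element of $\Det H^\bullet(X;\mathbb R)$ given by
\[
\tau_X=\prod_q \det\nolimits'_{\zeta}(\Delta_q)^{(-1)^{q+1}q/2}\cdot \nu_{\mathrm{har}},
\]
where $\nu_{\mathrm{har}}$ is the element induced by the $L^2$ metric on harmonic forms. Using Hodge $*$-duality, $\det'\Delta_3=\det'\Delta_0$ and $\det'\Delta_2=\det'\Delta_1$. The exponent combination then collapses to $\det'(\Delta_0)^{3/2}\det'(\Delta_1)^{-1/2}$. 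Its square root is exactly $\det'(\Delta_0)^{3/4}\det'(\Delta_1)^{-1/4}$, the factor in \eqref{eq:mx-closed}. For the trivial $\mathfrak t$-system, one tensors with $\mathfrak t\cong\mathbb R^{\rk\mathfrak t}$ using an inner product on $\mathfrak t$; this raises the scalar factor to the power $\rk\mathfrak t$.

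\textbf{Step 2 (the density).} Poincaré duality gives $H^3\cong (H^0)^*$ and $H^2\cong (H^1)^*$. Hence $\Det H^\bullet(X;\mathfrak t)^{1/2}$ is canonically a density on $H^1(X;\mathfrak t)$, after a normalization of the $H^0$ factor. We take $d\mu_{\mathrm{har}}$ to be the density induced by the $L^2$ metric on $\mathcal H^1(X;\mathfrak t)$, corrected by the volume of $X$ coming from the $H^0$, $H^3$ factors. It descends to the torus because it is translation invariant.

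\textbf{Step 3 (independence of the metric on $\mathfrak t$).} Since $T_X(\mathfrak t)^{1/2}$ is metric independent, the combination is canonical. We will also check that the auxiliary inner product on $\mathfrak t$ cancels. A rescaling changes the $L^2$ density by a power compensated by the identical factors in the determinants, because the Laplacians are $\operatorname{Id}_{\mathfrak t}\otimes\Delta_q$. The $K$-dependence lives entirely in $|\det K|^{m_X}$ by Corollary~\ref{cor:mx-closed}.

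\textbf{Main obstacle.} The hard part is Step~2: matching the normalizations of the harmonic density exactly. This means tracking the $H^0$ and $H^3$ contributions (constants, volume of $X$) so that no stray factors of $\operatorname{vol}(X)$ or $2\pi$ remain. We would first follow Manoliu's rank-one conventions in \cite{Manoliu3}, where this identification is made for $U(1)$, and then tensor with $\mathfrak t$. The statement only asserts the existence of a canonical density, so any leftover constant can be absorbed into $d\mu_{\mathrm{har}}$, provided it is metric independent. We verify this last condition via the metric independence of $T_X$ on the closed $3$-manifold.
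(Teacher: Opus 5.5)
Your proposal follows the paper's own argument: Corollary~\ref{cor:mx-closed} reduces the left-hand side to $|\det K|^{m_X}\bigl(\det\nolimits'_{\zeta}(\Delta_0)^{3/4}\det\nolimits'_{\zeta}(\Delta_1)^{-1/4}\bigr)^{\rk\mathfrak t}\,d\mu_{\mathrm{har}}$. The paper then identifies the remaining rank-one ratio with the square root of Ray--Singer torsion via Manoliu's normalization \cite{Manoliu3} and takes the $\rk\mathfrak t$-fold tensor power; your explicit exponent check through Hodge duality supplies exactly what the paper leaves to that citation.

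One small correction to your optional Step~3: the operators $\operatorname{Id}_{\mathfrak t}\otimes\Delta_q$ are unchanged when the inner product on $\mathfrak t$ is rescaled, so they cannot compensate the change of the $L^2$ density on $H^1(X;\mathfrak t)$; that compensation has to come from the $H^0$ normalization built into $d\mu_{\mathrm{har}}$. Likewise, the $\operatorname{vol}(X)$ factors you absorb into $d\mu_{\mathrm{har}}$ are metric dependent, so the ``provided it is metric independent'' proviso should be dropped.
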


\begin{proof}
By Corollary~\ref{cor:mx-closed},
\[
\det\nolimits'_{\zeta}(\Delta_{0,\mathfrak t})^{1/2}
\,
\bigl|\det\nolimits'_{\zeta}(B_{X,K})\bigr|^{-1/2}
=
|\det K|^{m_X}
\left(
\frac{\det\nolimits'_{\zeta}(\Delta_0)^{3/4}}
{\det\nolimits'_{\zeta}(\Delta_1)^{1/4}}
\right)^{\operatorname{rk}\mathfrak t}.
\]
The remaining determinant ratio is exactly the rank-one determinant expression identified
by Manoliu with the square root of analytic torsion; see \cite[Eqs.~(III.14)--(III.15)]{Manoliu3}.
Taking the \(\operatorname{rk}\mathfrak t\)-fold tensor power yields the square root of the
Ray--Singer torsion density of the trivial \(\mathfrak t\)-system. This proves \eqref{eq:torsion}.
\end{proof}

\begin{theorem}
\label{thm:closed-torsion-eta}
For every torsion class \(p\in \operatorname{Tor}H^2(X;\Lambda)\),
\begin{equation}
\label{eq:closed-torsion-eta}
e^{-\frac{\pi i}{4}\sigma(K)\eta(B_X)}\,
\mathcal Z^{\mathrm{gauss}}_{X,p}(\mathbb T,K)
=
|\det K|^{m_X}\,
e^{2\pi i\,\operatorname{CS}_{X,P,K}(\Theta_p)}
\int_{\mathcal M_{X,p}(\mathbb T)}
T_X(\mathfrak t)^{1/2}.
\end{equation}
In particular, the right-hand side is independent of the metric, of the choice of flat connection \(\Theta_p\), and of the choice of representative bundle \(P\) with
\(c(P)=p\).
\end{theorem}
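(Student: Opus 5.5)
The plan is to start from the gauge-reduced formula \eqref{eq:closed-gauge-reduction} of Proposition~\ref{prop:closed-gauge-reduction} and treat separately the phase, the modulus, and the independence claims. First I will expand the Gaussian factor using Definition~\ref{def:zeta-gaussian-factor}:
\[
\mathcal I_\zeta(B_{X,K})=\exp\!\left(\tfrac{\pi i}{4}\eta(B_{X,K})\right)\bigl|\det\nolimits'_\zeta(B_{X,K})\bigr|^{-1/2}.
\]
By Proposition~\ref{prop:toral-determinant-factor}, $\eta(B_{X,K})=\sigma(K)\eta(B_X)$. Multiplying by $e^{-\frac{\pi i}{4}\sigma(K)\eta(B_X)}$ therefore cancels the phase exactly. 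What remains is
\[
e^{2\pi i\,\CS_{X,P,K}(\Theta_p)}\,\det\nolimits'_\zeta(\Delta_{0,\mathfrak t})^{1/2}\bigl|\det\nolimits'_\zeta(B_{X,K})\bigr|^{-1/2}\int_{\mathcal M_{X,p}(\mathbb T)}d\mu_{\mathrm{har}}.
\]
Proposition~\ref{prop:torsion-rewrite} then identifies the determinant factor times $d\mu_{\mathrm{har}}$ with $|\det K|^{m_X}T_X(\mathfrak t)^{1/2}$. This identification uses Corollary~\ref{cor:mx-closed} for the power of $|\det K|$. Integrating over the torus $\mathcal M_{X,p}(\mathbb T)$ gives \eqref{eq:closed-torsion-eta}.

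For the independence statements I will argue factor by factor on the right-hand side.

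\emph{Choice of $\Theta_p$.} By Proposition~\ref{prop:moduli-torsor-toral}, any other flat connection is $\Theta_p+a$ with $da=0$. Lemma~\ref{lem:quadratic-reduction-toral} gives a change in the phase of $\exp(\frac{i}{4\pi}\int_X K(a\wedge da))=1$, so $e^{2\pi i\CS(\Theta_p)}$ depends only on the flat connection, and in fact is constant on the moduli space. The integral over $\mathcal M_{X,p}(\mathbb T)$ is invariant under translating the base point, since the density is translation invariant.

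\emph{Choice of $P$.} Two bundles with the same class $c(P)=p$ are isomorphic. The filling definition (Proposition~\ref{prop:well-defined-exp-action-toral}) is natural under bundle isomorphisms, because one transports the filling along the isomorphism.

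\emph{Metric independence.} The phase $e^{2\pi i\CS}$ is metric-free by construction, and $|\det K|^{m_X}$ is topological. The integral $\int T_X(\mathfrak t)^{1/2}$ is metric independent by the Ray--Singer/Cheeger--M\"uller theorem. That theorem says the Ray--Singer torsion, viewed as an element of $\Det H^\bullet(X;\mathfrak t)$, i.e. as a density on $H^1$ after trivializing $H^0,H^3$ via the lattice $\Lambda$ and Poincaré duality, is a topological invariant, namely the Reidemeister torsion. Evaluating its square root on the torus $H^1(X;\mathfrak t)/H^1(X;\Lambda)$ then yields a metric-independent number.

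The main obstacle is precisely this last point. The harmonic density $d\mu_{\mathrm{har}}$ depends on the metric through the $L^2$ inner product on harmonic forms, and the zeta determinants vary too. One must check that the canonical density produced in Proposition~\ref{prop:torsion-rewrite} is exactly the torsion density including the $H^0$ and $H^3$ normalizations, so that the metric variations cancel. I would first try to reduce this to the rank-one statement of Manoliu by taking the $\rk\mathfrak t$-fold tensor power, as in the proof of Proposition~\ref{prop:torsion-rewrite}. The only genuinely toral input is that the lattice $H^1(X;\Lambda)\cong H^1(X;\mathbb Z)\otimes\Lambda$ is compatible with this product decomposition up to the unimodular choice of basis of $\Lambda$, which does not affect volumes.

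The metric dependence of the Gaussian factor thus sits entirely in $\eta(B_X)$, which has been stripped off. In a fuller treatment one would note that $\frac{\pi i}{4}\sigma(K)\eta(B_X)$ is the standard metric anomaly, curable by the gravitational Chern--Simons counterterm; this is not needed for the statement as written.
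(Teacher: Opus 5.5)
Your proposal is correct and follows essentially the same route as the paper. You substitute the gauge-reduction formula, cancel the eta-phase using $\eta(B_{X,K})=\sigma(K)\eta(B_X)$, and invoke Proposition~\ref{prop:torsion-rewrite} (with Corollary~\ref{cor:mx-closed} supplying the power $|\det K|^{m_X}$). You then argue the independence claims factor by factor, which is a more explicit version of the paper's brief remarks: constancy of the Chern--Simons phase via the quadratic shift with $da=0$, naturality of fillings under bundle isomorphism, and Ray--Singer invariance of the torsion density.
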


\begin{proof}
Substituting the gauge-reduction formula
\eqref{eq:closed-gauge-reduction} into the left-hand side gives
\[
e^{-\frac{\pi i}{4}\sigma(K)\eta(B_X)}\,
e^{2\pi i\,\operatorname{CS}_{X,P,K}(\Theta_p)}\,
\det\nolimits'_{\zeta}(\Delta_{0,\mathfrak t})^{1/2}\,
\mathcal I_\zeta(B_{X,K})\,
\int_{\mathcal M_{X,p}(\mathbb T)} d\mu_{\mathrm{har}}.
\]
By Definition~\ref{def:zeta-gaussian-factor},
\[
\mathcal I_\zeta(B_{X,K})
=
\exp\!\left(\frac{\pi i}{4}\eta(B_{X,K})\right)\,
\bigl|\det\nolimits'_{\zeta}(B_{X,K})\bigr|^{-1/2}.
\]
Using \(\eta(B_{X,K})=\sigma(K)\eta(B_X)\), the eta-phase cancels exactly, leaving
\[
e^{2\pi i\,\operatorname{CS}_{X,P,K}(\Theta_p)}\,
\det\nolimits'_{\zeta}(\Delta_{0,\mathfrak t})^{1/2}\,
\bigl|\det\nolimits'_{\zeta}(B_{X,K})\bigr|^{-1/2}
\int_{\mathcal M_{X,p}(\mathbb T)} d\mu_{\mathrm{har}}.
\]
Proposition~\ref{prop:torsion-rewrite} now gives
\[
|\det K|^{m_X}\,
e^{2\pi i\,\operatorname{CS}_{X,P,K}(\Theta_p)}
\int_{\mathcal M_{X,p}(\mathbb T)}
T_X(\mathfrak t)^{1/2},
\]
which is \eqref{eq:closed-torsion-eta}.

Independence of the choice of \(\Theta_p\) follows because any two flat connections in the same connected component differ by a harmonic \(1\)-form, hence
induce the same translated Gaussian integral. Independence of the choice of \(P\) in the
class \(p\) follows from the classification of principal \(\mathbb T\)-bundles by
\(H^2(X;\Lambda)\). Metric independence follows because the residual metric anomaly is
precisely the eta-phase already removed on the left-hand side, while the torsion density
is topological.
\end{proof}

\begin{corollary}
\label{cor:closed-total-partition}
The total closed partition function
\[\label{closed-total-partition}
Z^{CS}_X(\mathbb T,K)
:=
\frac{1}{\#\operatorname{Tor}H^2(X;\Lambda)}
\sum_{p\in \operatorname{Tor}H^2(X;\Lambda)}
|\det K|^{m_X}\,
e^{2\pi i\,\operatorname{CS}_{X,P,K}(\Theta_p)}
\int_{\mathcal M_{X,p}(\mathbb T)}
T_X(\mathfrak t)^{1/2}
\]
is a topological invariant of \(X\).
\end{corollary}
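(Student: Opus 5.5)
The plan is to reduce Corollary~\ref{cor:closed-total-partition} to Theorem~\ref{thm:closed-torsion-eta}, applied one torsion class at a time. The total partition function is a finite sum over $p\in\Tor H^2(X;\Lambda)$, with the prefactor $1/\#\Tor H^2(X;\Lambda)$. Each summand is exactly the right-hand side of \eqref{eq:closed-torsion-eta}, and that theorem already asserts it is independent of three choices: the Riemannian metric, the flat connection $\Theta_p$, and the representative bundle $P$ with $c(P)=p$. So the summands are well defined, and so is their sum.

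To get topological invariance, take an orientation-preserving diffeomorphism $f\colon X'\to X$ and transport all the data along it.
\begin{enumerate}\renewcommand{\labelenumi}{(\arabic{enumi})}
\item The map $f^*$ induces a group isomorphism $\Tor H^2(X;\Lambda)\to\Tor H^2(X';\Lambda)$. In particular the prefactor $1/\#\Tor H^2$ is preserved, and the index sets correspond bijectively.
\item The pullback bundle $f^*P$ has class $f^*p$, and $f^*\Theta_p$ is a flat connection on it.
\item The exponentiated action is natural: $e^{2\pi i\CS_{X',f^*P,K}(f^*\Theta_p)}=e^{2\pi i\CS_{X,P,K}(\Theta_p)}$. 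To see this, glue the mapping cylinder of $f$ to a filling $W$ of $X$, or simply pull a filling back along an extension of $f$; either way Proposition~\ref{prop:well-defined-exp-action-toral} gives equality of the phases.
\item The map $f^*$ identifies $\mathcal M_{X,p}(\mathbb T)$ with $\mathcal M_{X',f^*p}(\mathbb T)$ as torsors, using Proposition~\ref{prop:moduli-torsor-toral}.
\item Computing $T_{X'}(\mathfrak t)$ with the pulled-back metric $f^*g$ matches the density $T_X(\mathfrak t)^{1/2}$. The factor $m_X$ depends only on $b_1$, so it is unchanged.
\end{enumerate}
Putting these together, the sum over $p$ on $X$ is carried term by term to the sum over $f^*p$ on $X'$.

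The main obstacle is the invariance of the torsion integral $\int_{\mathcal M_{X,p}}T_X(\mathfrak t)^{1/2}$. Naturality under $f$ is immediate once metric independence is known, so the real content is that this quantity does not depend on the metric. Theorem~\ref{thm:closed-torsion-eta} asserts this, but I would make it explicit rather than rely on the assertion alone. I would invoke the Ray--Singer metric-variation formula, which says the torsion density on $\Det H^\bullet(X;\mathfrak t)$ is metric independent in odd dimensions (Ray--Singer, Cheeger--Müller). The integral over the harmonic torus then becomes a pairing of this density with the lattice volume determined by $H^1(X;\Lambda)$ and $H^0,H^3$ integrality. Both of these are topological, so the integral is a combinatorial invariant.

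I would also flag a separate point: the eta-phase $e^{-\frac{\pi i}{4}\sigma(K)\eta(B_X)}$ has already been stripped off in \eqref{eq:closed-torsion-eta}. Consequently the invariant obtained here is the canonically framed, or eta-corrected, version, and no framing choice enters the statement.
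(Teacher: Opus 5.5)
Your proposal is correct and follows the paper's route. The paper's proof is the one-line observation that each summand is the right-hand side of Theorem~\ref{thm:closed-torsion-eta}, which is already independent of the metric, of $\Theta_p$ and of $P$, so the finite average over $\Tor H^2(X;\Lambda)$ is as well; your transport of the data along an orientation-preserving diffeomorphism and your explicit appeal to Ray--Singer/Cheeger--M\"uller metric independence of the torsion only spell out what the paper leaves implicit in the word ``topological'' and in the theorem's assertion that the torsion density is topological.
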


\begin{proof}
This is immediate from Theorem~\ref{thm:closed-torsion-eta}.
\end{proof}

\section{Boundary Theory}

\subsection{Boundary phase space and functional integrals}
Let \(\Sigma\) be a closed oriented surface and let $L\subset H^1(\Sigma;\mathbb R)$ 
be a rational Lagrangian subspace. Then
\[
L\otimes\mathfrak t\subset H^1(\Sigma;\mathfrak t)
\]
defines a real polarization of \(\mathcal M_\Sigma(\mathbb T)\). The corresponding
Bohr--Sommerfeld leaves form a torsor for the discriminant group $G_K:=\Lambda^*/K\Lambda,$
and in genus \(g\) one has
\[
\#\mathcal{BS}(\Sigma,L)=|G_K|^g=|\det K|^g.
\]
The quantum Hilbert space in real polarization is
\begin{equation}
\label{eq:toral-Hilbert-space-unified}
\mathcal H_{\mathbb T,K}(\Sigma,L)
:=
\bigoplus_{\ell\in \mathcal{BS}(\Sigma,L)}
\Gamma_{\mathrm{flat}}
\bigl(\ell;\,L_{\Sigma,K}\otimes |\Det (L\otimes\mathfrak t)^*|^{1/2}\bigr).
\end{equation}

Now let \(X\) be a compact connected oriented \(3\)-manifold with boundary \(\Sigma\). Define
\[
L_X^{\mathbb R}
:=
\operatorname{Im}\bigl(H^1(X;\mathbb R)\to H^1(\Sigma;\mathbb R)\bigr).
\]
Then \(L_X^{\mathbb R}\) is a rational Lagrangian subspace of \(H^1(\Sigma;\mathbb R)\), and
the induced toral polarization is \(L_X=L_X^{\mathbb R}\otimes\mathfrak t\subset H^1(\Sigma;\mathfrak t)\). For each torsion class \(p\in \operatorname{Tor}H^2(X;\Lambda)\), the set of extendable flat
boundary values in component \(p\) forms a Bohr--Sommerfeld leaf
\[
\Lambda_{X,p}\subset \mathcal M_\Sigma(\mathbb T).
\]
The classical exponentiated Chern--Simons action defines a flat section
\[
\sigma_{X,p}\in
\Gamma_{\mathrm{flat}}(\Lambda_{X,p};L_{\Sigma,K}|_{\Lambda_{X,p}}),
\]
and the relative Ray--Singer torsion defines a canonical flat half-density
\[
\mu_{X,p}\in
\Gamma_{\mathrm{flat}}\bigl(\Lambda_{X,p};
|\Det L_X^*|^{1/2}|_{\Lambda_{X,p}}\bigr).
\]

See \cite[Secs.~2--3]{Galviz2} for the proof of these statements.

Next, we carry out the boundary calculation explicitly. From the metric-dependent relative functional integral to the metric-independent boundary vector obtained by multiplication
with a torsion half-density.

Let \(X\) be a compact connected oriented \(3\)-manifold with nonempty boundary
\(\Sigma=\partial X\). Fix $p\in \operatorname{Tor}H^2(X;\Lambda),$ let \(P\to X\) be a principal \(\mathbb T\)-bundle with \(c(P)=p\), and let
\(\eta\in \Lambda_{X,p}\subset \mathcal M_\Sigma(\mathbb T)\) be an extendable flat boundary value.
Choose a flat connection \(\Theta_{p,\eta}\in \mathcal A_P\) restricting to \(\eta\).

Let $\Delta^{\mathrm{rel}}_{0,\mathfrak t}
:=
\operatorname{Id}_{\mathfrak t}\otimes (d^*d)$ on \(\Omega^0_0(X;\mathfrak t)\), and let
\[
B_{X,\mathrm{rel}}
:=
(*d)\big|_{d^*\Omega^2_{\mathrm{abs}}(X;\mathbb R)},
\qquad
B_{X,\mathrm{rel},K}
:=
K\otimes B_{X,\mathrm{rel}}.
\]

\begin{definition}
\label{def:relative-gaussian-sector}
The \emph{relative Gaussian sector value} in torsion class \(p\) and boundary value \(\eta\) is
\[
\mathcal Z^{\mathrm{gauss}}_{X,p}(\eta;\mathbb T,K)
:=
\operatorname{Gauss}\!\left(
\int_{\mathcal A_P(\eta)/\mathcal G_P^\partial}
e^{2\pi i\,\operatorname{CS}_{X,P,K}(\Theta)}[D\Theta]
\right).
\]
\end{definition}

\begin{prop}
\label{prop:relative-gauge-reduction}
With the normalization conventions of  rank-one relative functional integral,
\begin{equation}
\label{eq:relative-gauge-reduction}
\mathcal Z^{\mathrm{gauss}}_{X,p}(\eta;\mathbb T,K)
=
e^{2\pi i\,\operatorname{CS}_{X,P,K}(\Theta_{p,\eta})}\,
\det\nolimits'_{\zeta}(\Delta^{\mathrm{rel}}_{0,\mathfrak t})^{1/2}\,
\mathcal I_\zeta(B_{X,\mathrm{rel},K})\,
\int_{\mathcal M_{X,p}(\eta;\mathbb T)} d\mu_{\mathrm{rel}},
\end{equation}
where \(d\mu_{\mathrm{rel}}\) denotes the translation-invariant density on the relative harmonic torus
\[
\mathcal M_{X,p}(\eta;\mathbb T)
\cong
H^1(X,\Sigma;\mathfrak t)/H^1(X,\Sigma;\Lambda).
\]
\end{prop}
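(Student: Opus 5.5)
The plan is to mirror the proof of Proposition~\ref{prop:closed-gauge-reduction}, replacing the closed Hodge decomposition \eqref{eq:closed-hodge-toral} with the relative one \eqref{eq:relative-hodge-toral}. Every connection in \(\mathcal A_P(\eta)\) will be written as \(\Theta=\Theta_{p,\eta}+a\) with \(a\in\Omega^1_{\mathrm{rel}}(X;\mathfrak t)\), which encodes the fixed boundary value. By the boundary case of Lemma~\ref{lem:quadratic-reduction-toral}, the integrand then factors as \(e^{2\pi i\,\CS_{X,P,K}(\Theta_{p,\eta})}\exp\bigl(\frac{i}{4\pi}\int_X K(a\wedge da)\bigr)\). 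I would then decompose
\[
a=a_{\mathrm{har}}+d\phi+a_{\mathrm{coex}},\qquad a_{\mathrm{har}}\in\mathcal H^1_{\mathrm{rel}}(X;\mathfrak t),\ \phi\in\Omega^0_0(X;\mathfrak t),\ a_{\mathrm{coex}}\in d^*\Omega^2_{\mathrm{abs}}(X;\mathfrak t).
\]

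The first real step is to show that the quadratic form sees only the coexact part, so that \(q(a)=\frac{i}{4\pi}\langle B_{X,\mathrm{rel},K}a_{\mathrm{coex}},a_{\mathrm{coex}}\rangle\). As in the closed case, \(da=da_{\mathrm{coex}}\), and the cross terms \(\int_X K(a_{\mathrm{har}}\wedge da_{\mathrm{coex}})\) and \(\int_X K(d\phi\wedge da_{\mathrm{coex}})\) are exact. Here Stokes' theorem now produces boundary integrals over \(\Sigma\), such as \(\int_\Sigma K(a_{\mathrm{har}}\wedge a_{\mathrm{coex}})\). These vanish because \(a_{\mathrm{har}}\) satisfies the relative boundary condition, so its pullback to \(\Sigma\) is zero, and \(\phi|_\Sigma=0\).

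I would also need \(B_{X,\mathrm{rel},K}\) to be nondegenerate on \(d^*\Omega^2_{\mathrm{abs}}\). The argument is the one from the closed case: if \(*da_{\mathrm{coex}}=0\), then \(a_{\mathrm{coex}}\) is closed and coclosed. Being of the form \(d^*\beta\) with \(\beta\) absolute, it has vanishing normal component, so it lies in \(\mathcal H^1_{\mathrm{abs}}\). It is orthogonal to \(\mathcal H^1_{\mathrm{abs}}\) by Green's formula with the absolute boundary condition on \(\beta\), hence \(a_{\mathrm{coex}}=0\).

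Next comes the gauge reduction. The identity component of \(\mathcal G_P^\partial\) acts by \(a\mapsto a+d\phi\) with \(\phi\in\Omega^0_0\). Dividing out this direction, parametrized by \(d\colon\Omega^0_0(X;\mathfrak t)\to d\Omega^0_0(X;\mathfrak t)\), gives the Faddeev--Popov Jacobian \(\det'_\zeta(d^*d)^{1/2}=\det'_\zeta(\Delta^{\mathrm{rel}}_{0,\mathfrak t})^{1/2}\) with Dirichlet conditions. Under the Dirichlet condition \(d\) is injective, so the prime is harmless. The residual large gauge transformations trivial on \(\Sigma\) act on the harmonic sector by the lattice \(H^1(X,\Sigma;\Lambda)\). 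Part (ii) of Proposition~\ref{prop:moduli-torsor-toral} then identifies the harmonic quotient with the torus \(\mathcal M_{X,p}(\eta;\mathbb T)\), carrying the translation-invariant density \(d\mu_{\mathrm{rel}}\) induced by the \(L^2\) metric.

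Finally I would apply the finite-dimensional model, Proposition~\ref{prop:linear-quotient-factorization}, with \(H=\mathcal H^1_{\mathrm{rel}}\), \(E=d\Omega^0_0\) and \(N=d^*\Omega^2_{\mathrm{abs}}\). I would replace \(\operatorname{Fres}(A)\) by its zeta-regularized analogue \(\mathcal I_\zeta(B_{X,\mathrm{rel},K})\) of Definition~\ref{def:zeta-gaussian-factor}, with the \((2\pi)^{\dim N/2}\) factors and the \(\frac{1}{4\pi}\) versus \(\frac12\) normalization absorbed into the rank-one relative normalization conventions the statement invokes. Collecting the constant phase, the Jacobian, the Gaussian factor and the harmonic volume gives \eqref{eq:relative-gauge-reduction}.

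The main obstacle is the boundary bookkeeping. One must check that the relative decomposition \eqref{eq:relative-hodge-toral} is genuinely orthogonal and compatible with the product-metric collar. One must also check that \(*d\) preserves \(d^*\Omega^2_{\mathrm{abs}}\) and is self-adjoint there, so that \(\eta\) and \(\det'_\zeta\) are defined. For this I would first try to invoke the corresponding rank-one statements of \cite{Manoliu3} and tensor with \(\mathfrak t\), using Lemma~\ref{lem:spectral-factorization-toral} to handle \(K\).
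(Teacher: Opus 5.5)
Your proposal is correct and takes the same route as the paper. You translate by $\Theta_{p,\eta}$, apply the boundary case of Lemma~\ref{lem:quadratic-reduction-toral}, split $a$ via \eqref{eq:relative-hodge-toral}, and feed the harmonic, based-gauge and coexact sectors into Proposition~\ref{prop:linear-quotient-factorization}, with Faddeev--Popov factor $\det\nolimits'_{\zeta}(\Delta^{\mathrm{rel}}_{0,\mathfrak t})^{1/2}$ and lattice $H^1(X,\Sigma;\Lambda)$; the paper only says these steps go ``exactly as in the closed case,'' so your explicit Stokes boundary terms and nondegeneracy argument fill in what it omits, and the points you flag (orthogonality of the splitting, self-adjointness of $*d$ on the coexact summand) are left to Manoliu's rank-one conventions there too.
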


\begin{proof}
Write
\[
\Theta=\Theta_{p,\eta}+a,
\qquad
a\in \Omega^1_{\mathrm{rel}}(X;\mathfrak t).
\]
By Lemma~\ref{lem:quadratic-reduction-toral},
\[
e^{2\pi i\,\operatorname{CS}_{X,P,K}(\Theta_{p,\eta}+a)}
=
e^{2\pi i\,\operatorname{CS}_{X,P,K}(\Theta_{p,\eta})}
\exp\!\left(\frac{i}{4\pi}\int_X K(a\wedge da)\right).
\]
Now use the relative Hodge decomposition
\[
\Omega^1_{\mathrm{rel}}(X;\mathfrak t)
=
\mathcal H^1_{\mathrm{rel}}(X;\mathfrak t)
\oplus
d\Omega^0_0(X;\mathfrak t)
\oplus
d^*\Omega^2_{\mathrm{abs}}(X;\mathfrak t).
\]
Exactly as in the closed case, the quadratic phase vanishes on the relative harmonic
sector, is constant along the based gauge directions \(d\Omega^0_0(X;\mathfrak t)\), and
is nondegenerate on the coexact sector \(d^*\Omega^2_{\mathrm{abs}}(X;\mathfrak t)\).
Applying Proposition~\ref{prop:linear-quotient-factorization} to this orthogonal splitting
gives the factorization \eqref{eq:relative-gauge-reduction}, with Faddeev--Popov Jacobian
\(\det\nolimits'_{\zeta}(\Delta^{\mathrm{rel}}_{0,\mathfrak t})^{1/2}\), coexact Gaussian factor
\(\mathcal I_\zeta(B_{X,\mathrm{rel},K})\), and relative harmonic torus integral $\int_{\mathcal M_{X,p}(\eta;\mathbb T)} d\mu_{\mathrm{rel}}.$
\end{proof}

\subsection{Boundary half-densities and analytic boundary states}
\begin{definition}
\label{def:analytic-boundary-state}
For each torsion class \(p\in \Tor H^2(X;\Lambda)\), define
\[
\Xi_{X,p}
\in
\Gamma\!\bigl(
\Lambda_{X,p};\,L_{\Sigma,K}\otimes |\det L_X^*|^{1/2}
\bigr)
\]
to be the section obtained as follows:

\begin{enumerate}[label=(\roman*)]
\item the classical factor $\eta\longmapsto e^{2\pi i\,\CS_{X,P,K}(\Theta_{p,\eta})}$ defines a section of the restriction \(L_{\Sigma,K}|_{\Lambda_{X,p}}\);

\item the regularized determinant expression in the relative Gaussian sector defines a
leafwise half-density on \(\Lambda_{X,p}\), obtained from the square root of the
Ray--Singer torsion of the pair \((X,\Sigma)\);

\item \(\Xi_{X,p}\) is the tensor product of these two factors, multiplied by the scalar
\(|\det K|^{m_X}\).
\end{enumerate}
\end{definition}

\begin{theorem}
\label{cor:boundary-half-density-analytic}
Let \(X\) be a compact connected oriented \(3\)-manifold with nonempty boundary
\(\Sigma=\partial X\), and let \(p\in \Tor H^2(X;\Lambda)\). Then the relative Gaussian
sector value satisfies
\begin{equation}
\label{eq:boundary-half-density-analytic}
e^{-\frac{\pi i}{4}\sigma(K)\eta(B_{X,\mathrm{rel}})}
\,Z^{\mathrm{gauss}}_{X,p}(\eta;\mathbb T,K)
=
\Xi_{X,p}
\in
\Gamma\!\bigl(
\Lambda_{X,p};\,L_{\Sigma,K}\otimes |\det L_X^*|^{1/2}
\bigr).
\end{equation}
Moreover, \(\Xi_{X,p}\) is independent of the chosen product metric near \(\Sigma\) and is
covariantly constant along the leaf \(\Lambda_{X,p}\).
\end{theorem}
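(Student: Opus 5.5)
The argument mirrors the closed case (Theorem~\ref{thm:closed-torsion-eta}), now organised leafwise over \(\Lambda_{X,p}\).

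\emph{Step 1: cancel the eta-phase.} Substitute the relative gauge reduction \eqref{eq:relative-gauge-reduction} into the left-hand side of \eqref{eq:boundary-half-density-analytic}. Expand \(\mathcal I_\zeta(B_{X,\mathrm{rel},K})\) by Definition~\ref{def:zeta-gaussian-factor}, and apply Lemma~\ref{lem:spectral-factorization-toral} with \(A=B_{X,\mathrm{rel}}\). This gives \(\eta(B_{X,\mathrm{rel},K})=\sigma(K)\eta(B_{X,\mathrm{rel}})\), so the prefactor \(e^{-\frac{\pi i}{4}\sigma(K)\eta(B_{X,\mathrm{rel}})}\) cancels the phase exactly. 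What remains is
\[
e^{2\pi i\,\CS_{X,P,K}(\Theta_{p,\eta})}\,
\det\nolimits'_{\zeta}(\Delta^{\mathrm{rel}}_{0,\mathfrak t})^{1/2}\,
\bigl|\det\nolimits'_{\zeta}(B_{X,\mathrm{rel},K})\bigr|^{-1/2}
\int_{\mathcal M_{X,p}(\eta;\mathbb T)} d\mu_{\mathrm{rel}}.
\]

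\emph{Step 2: extract \(|\det K|^{m_X}\).} Theorem~\ref{thm:mx-relative} rewrites the determinant product as \(|\det K|^{m_X}\) times the \(\rk\mathfrak t\)-th power of the rank-one ratio built from \(\Delta_q^{\mathrm{rel}}\) and \(\Delta_q^{\mathrm{abs}}\), \(q=0,1\).

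\emph{Step 3: identify the torsion half-density.} This is the analogue of Proposition~\ref{prop:torsion-rewrite}. I would show that the rank-one ratio, combined with the harmonic volume \(\int d\mu_{\mathrm{rel}}\) on the relative harmonic torus and the Hodge-theoretic metric on \(L_X^{\mathbb R}\cong\operatorname{Im}(H^1(X)\to H^1(\Sigma))\), is the square root of the Ray--Singer torsion of the pair \((X,\Sigma)\). That square root is a half-density on \(\Det H^\bullet\). Using the long exact sequence of \((X,\Sigma)\), it becomes a half-density on \(L_X\) tensored with the boundary torsion of \(\Sigma\); the boundary torsion is canonically trivialised by the symplectic volume \(\omega_{\Sigma,K}\) on \(H^1(\Sigma)\).

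For rank one this is exactly Manoliu's relative computation \cite{Manoliu3}. Tensoring \(\rk\mathfrak t\) copies, with the trivial \(\mathfrak t\)-system, yields \(\mu_{X,p}\). The classical factor is the section \(\sigma_{X,p}\) of \(L_{\Sigma,K}|_{\Lambda_{X,p}}\) by Definition~\ref{def:exp-CS-toral}(ii) and Proposition~\ref{prop:well-defined-exp-action-toral}. So the product is \(\Xi_{X,p}\) by Definition~\ref{def:analytic-boundary-state}.

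\emph{Step 4: metric independence and flatness.} Metric independence follows from the Cheeger--M\"uller/Ray--Singer invariance of relative torsion for metrics that are products near the boundary \cite{Ray1971,Cheeger1979}. Once the eta-phase has been removed, the only metric dependence left sits in the torsion factor, which is invariant.

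Covariant constancy along \(\Lambda_{X,p}\) has two parts.
\begin{enumerate}[label=(\roman*)]
\item The half-density factor does not depend on \(\eta\), since the operators involved are independent of the flat background, and it is translation invariant.
\item The classical factor is flat: moving \(\eta\) along the leaf by \(r_X(a)\), with \(a\in H^1(X;\mathfrak t)\), changes \(\Theta_{p,\eta}\) by a closed form. The transgression formula then shows that \(e^{2\pi i\CS}\) varies by parallel transport in \(L_{\Sigma,K}\). The curvature term \(\omega_{\Sigma,K}(r_Xa,r_Xb)\) vanishes because \(L_X\) is Lagrangian (Proposition~\ref{prop:lagrangian-boundary-data-toral}).
\end{enumerate}

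The main obstacle is Step 3. Matching the determinant ratio with fractional exponents \(5/8,1/8\) in Theorem~\ref{thm:mx-relative} to the relative torsion requires care with the absolute and relative doubling. I would try Lemma~\ref{lem:boundary-zeta-sum}'s reflection argument on \(\widehat X\) to express the product \(\det\Delta^{\mathrm{abs}}\det\Delta^{\mathrm{rel}}\) through the closed double. Alongside that I would track the harmonic volume normalisations so that they assemble into \(|\Det L_X^*|^{1/2}\) rather than a density on all of \(H^1(X)\).
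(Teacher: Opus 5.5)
Your Steps 1, 2 and 4 follow the paper's proof. The eta-phase is removed via Lemma~\ref{lem:spectral-factorization-toral}. The factor $|\det K|^{m_X}$ comes from Theorem~\ref{thm:mx-relative}. Metric independence is attributed to the torsion, and covariant constancy to horizontality of the Chern--Simons section plus translation invariance of the half-density. You also supply the Lagrangian/curvature reason, which the paper leaves implicit.

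The real divergence is in Step 3. The paper views the torsion as a half-density on the relative torus $H^1(X,\Sigma;\mathfrak t)/H^1(X,\Sigma;\Lambda)$ and then appeals to a ``canonical identification'' of that torus with $\Lambda_{X,p}$. You instead push the half-density on $\Det H^\bullet$ through the long exact sequence of $(X,\Sigma)$. Your route is the one that can actually produce a section of $|\Det L_X^*|^{1/2}$. The relative torus is the fibre over $\eta$ of the map to boundary values, not the leaf, and the two generally have different dimensions: for a genus-$g$ handlebody, $H^1(X,\Sigma;\mathbb R)\cong H_2(X;\mathbb R)=0$ while $\dim L_X^{\mathbb R}=g$.

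The step you flag as the main obstacle is, however, a genuine gap, and the paper's proof has the same gap: it simply asserts that, after extracting $|\det K|^{m_X}$, the scalar ratio is $T(X,\Sigma)^{1/2}$. This fails as a scalar identity, and no doubling is needed to see it.
\begin{enumerate}[label=(\roman*)]
\item $*d$ intertwines the nonzero coexact spectra of $\Delta_1^{\mathrm{rel}}$ and $\Delta_1^{\mathrm{abs}}$.
\item Hodge $*$ identifies the coexact relative $2$-form spectrum with the nonzero spectrum of $\Delta_0^{\mathrm{abs}}$.
\item Write $c_0^{\mathrm{rel}},c_0^{\mathrm{abs}},c_1$ for $\log\det\nolimits'_\zeta$ of $\Delta_0^{\mathrm{rel}}$, $\Delta_0^{\mathrm{abs}}$, and of $d^*d$ on coexact $1$-forms, so that $\log\det\nolimits'_\zeta\Delta_1^{\mathrm{rel/abs}}=c_0^{\mathrm{rel/abs}}+c_1$.
\item The rank-one ratio in Theorem~\ref{thm:mx-relative} is then $\exp\bigl(\tfrac12 c_0^{\mathrm{rel}}-\tfrac14 c_1\bigr)$.
\item The Ray--Singer combination, normalised as in the closed case, is $\exp\bigl(-\tfrac14(c_1-c_0^{\mathrm{rel}}-c_0^{\mathrm{abs}})\bigr)$.
\end{enumerate}
The two differ by $\bigl(\det\nolimits'_\zeta\Delta_0^{\mathrm{rel}}/\det\nolimits'_\zeta\Delta_0^{\mathrm{abs}}\bigr)^{1/4}$.

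For $X=\Sigma\times[0,\ell]$, the nonzero Neumann spectrum is the Dirichlet spectrum together with the nonzero spectrum of $\Delta_\Sigma$. So this factor is $\bigl(\det\nolimits'_\zeta\Delta_\Sigma\bigr)^{-1/4}$. By Polyakov's formula it changes under area-preserving conformal changes of the metric on $\Sigma$. Those changes leave every harmonic Gram determinant unchanged, and the torsion is invariant for product metrics; the eta-prefactor is a pure phase, so it cannot compensate a change in modulus either.

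Hence neither your Step 3 nor the metric independence in Step 4 follows from Manoliu's closed-case identity plus harmonic-volume bookkeeping. What is missing is an additional boundary normalisation of the relative Gaussian integral that absorbs this Dirichlet/Neumann determinant ratio. Neither your proposal nor the paper provides it.
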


\begin{proof}
Starting from the relative gauge-reduction formula
\[
Z^{\mathrm{gauss}}_{X,p}(\eta;\mathbb T,K)
=
e^{2\pi i\,\CS_{X,P,K}(\Theta_{p,\eta})}\,
\det\nolimits'_{\zeta}(\Delta^{\mathrm{rel}}_{0,\mathfrak t})^{1/2}\,
\mathcal I_{\zeta}(B_{X,\mathrm{rel},K})\,
\int_{\mathcal M_{X,p}(\eta;\mathbb T)} d\mu_{\mathrm{rel}},
\]
we rewrite the coexact Gaussian factor as
\[
\mathcal I_{\zeta}(B_{X,\mathrm{rel},K})
=
\exp\!\left(\frac{\pi i}{4}\eta(B_{X,\mathrm{rel},K})\right)\,
\bigl|\det\nolimits'_{\zeta}(B_{X,\mathrm{rel},K})\bigr|^{-1/2}.
\]
By Lemma~\ref{lem:spectral-factorization-toral},
\[
\eta(B_{X,\mathrm{rel},K})=\sigma(K)\eta(B_{X,\mathrm{rel}}),
\]
so multiplication by $e^{-\frac{\pi i}{4}\sigma(K)\eta(B_{X,\mathrm{rel}})}$ cancels the full eta-anomaly of the relative Gaussian integral.

It remains to analyze the determinant density. By Theorem~\ref{thm:mx-relative}, the
regularized determinant ratio in the relative Gaussian sector contributes exactly the factor $|\det K|^{m_X}.$
After removing this finite-dimensional lattice factor, the remaining scalar determinant
expression is the square root of the Ray--Singer torsion of the pair \((X,\Sigma)\), viewed as
a half-density on the relative harmonic torus
\[
H^1(X,\Sigma;\mathfrak t)/H^1(X,\Sigma;\Lambda).
\]
 Under the canonical identification of the relative harmonic
torus with the leaf of extendable flat boundary values, this defines a leafwise half-density on
\(\Lambda_{X,p}\). On the other hand, the classical factor
\[
e^{2\pi i\,\CS_{X,P,K}(\Theta_{p,\eta})}
\]
depends only on the boundary value \(\eta\) and defines a section of the restriction of the
prequantum line bundle \(L_{\Sigma,K}\) to the leaf \(\Lambda_{X,p}\). Therefore the tensor
product of this section with the torsion half-density defines a canonical section
\[
\Xi_{X,p}\in
\Gamma\!\bigl(
\Lambda_{X,p};\,L_{\Sigma,K}\otimes |\det L_X^*|^{1/2}
\bigr),
\]
and \eqref{eq:boundary-half-density-analytic} follows from Definition~\ref{def:analytic-boundary-state}.

The metric dependence of the raw relative functional integral is exactly cancelled by the
torsion half-density. Hence \(\Xi_{X,p}\) is
independent of the chosen product metric near \(\Sigma\). Since the classical factor is horizontal for the restriction of the prequantum connection
to the leaf, and the torsion half-density is flat along the leaf, the section \(\Xi_{X,p}\)
is covariantly constant along \(\Lambda_{X,p}\).
\end{proof}

\begin{corollary}
\label{cor:analytic-equals-gq}
Under the identification of the boundary phase space and half-density bundle from the toral
real-polarization construction, the analytic boundary state \(\Xi_{X,p}\) coincides with the
geometric-quantization boundary state
\[
\Xi_{X,p}=|\det K|^{m_X}\,\sigma_{X,p}\otimes \mu_{X,p}.
\]
\end{corollary}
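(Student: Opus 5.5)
The plan is to compare the two sections factor by factor. Both sides live in the same bundle
\[
\Gamma\!\bigl(\Lambda_{X,p};\,L_{\Sigma,K}\otimes|\det L_X^*|^{1/2}\bigr),
\]
once we use the identification of phase spaces and half-density bundles cited from \cite[Secs.~2--3]{Galviz2}. Definition~\ref{def:analytic-boundary-state} already writes $\Xi_{X,p}$ as $|\det K|^{m_X}$ times the tensor product of a line-bundle section and a leafwise half-density. It therefore suffices to prove two separate identities:
\[
\bigl(\eta\mapsto e^{2\pi i\,\CS_{X,P,K}(\Theta_{p,\eta})}\bigr)=\sigma_{X,p},
\qquad
\text{(torsion half-density of (ii))}=\mu_{X,p}.
\]
The scalar $|\det K|^{m_X}$ then appears on both sides by construction. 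Theorem~\ref{cor:boundary-half-density-analytic} guarantees that the analytic factors are metric independent and flat along the leaf, so the comparison may be carried out for one convenient product metric.

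\textbf{Classical factor.} I will show that the filling formula of Definition~\ref{def:exp-CS-toral}(ii), evaluated on $\Theta_{p,\eta}$, is exactly the defining formula for $\sigma_{X,p}$ in the geometric-quantization construction. Both are the value of the exponentiated action of a flat extension, read as a vector in the fibre of $L_{\Sigma,K}$ over $\eta$. Two points need checking. First, the value does not depend on which flat $\Theta_{p,\eta}$ with boundary value $\eta$ is chosen: any two differ by a closed relative form $a$, and Lemma~\ref{lem:quadratic-reduction-toral} with $da=0$ shows the phase is unchanged. Second, the value varies horizontally along $\Lambda_{X,p}$. I will deduce this from the transgression formula in Proposition~\ref{prop:well-defined-exp-action-toral}, together with the fact that $L_X$ is Lagrangian (Proposition~\ref{prop:lagrangian-boundary-data-toral}), so that the curvature $-2\pi i\,\omega_{\Sigma,K}$ restricts to zero on the leaf. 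Two flat sections that agree at one point agree everywhere, which gives equality with $\sigma_{X,p}$.

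\textbf{Half-density factor.} After the eta-phase is removed and $|\det K|^{m_X}$ is extracted via Theorem~\ref{thm:mx-relative}, the remaining scalar determinant ratio is the $\rk\mathfrak t$-th power of Manoliu's rank-one relative expression. Manoliu identifies that expression with the square root of the Ray--Singer torsion of $(X,\Sigma)$, by the same tensor-power argument used in Proposition~\ref{prop:torsion-rewrite}. Meanwhile $\mu_{X,p}$ is defined in \cite{Galviz2} from the relative Ray--Singer torsion, transported to the leaf through the long exact sequence of the pair, which identifies $H^1(X,\Sigma)$-data with $\det L_X^*$. I therefore need to check that this identification agrees with the identification of the relative harmonic torus with $\Lambda_{X,p}$ used in Theorem~\ref{cor:boundary-half-density-analytic}. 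Both come from restricting extendable flat connections to the boundary, so they should coincide.

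\textbf{Main obstacle.} The step most likely to fail is matching normalizations in the half-density comparison. This means the factors of $2\pi$ coming from $\operatorname{Fres}$ and the $\tfrac{i}{4\pi}$ in the action, the choice of lattice $H^1(X,\Sigma;\Lambda)$ versus its image in determining the volume normalization, and the torsion contributions from $H^0$ and $H^2$ in the determinant-line isomorphism. My first attempt will be to reduce to rank one by choosing a $K$-diagonalizing basis, as in Lemma~\ref{lem:spectral-factorization-toral}. Since $|\det K|^{m_X}$ has already been isolated, the remaining ratio has no $K$-dependence. I then quote Manoliu's rank-one identification of the analytic half-density with the geometric-quantization one, and take the $\rk\mathfrak t$-fold tensor power.
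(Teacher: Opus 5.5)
Your proposal follows the same route as the paper: a factor-by-factor match in which $|\det K|^{m_X}$ appears on both sides by construction, the Chern--Simons factor is identified with $\sigma_{X,p}$, and the torsion half-density is identified with $\mu_{X,p}$, with both identifications ultimately resting on \cite[Secs.~2--3]{Galviz2}; the paper treats these identifications as definitional and does none of your extra well-definedness, horizontality or normalization checks. One small correction if you do carry out the rank-one reduction: use a $\mathbb Z$-basis of $\Lambda$, not a $K$-diagonalizing basis. The remaining ratio is $K$-independent after $|\det K|^{m_X}$ is removed, and only a lattice basis splits $H^1(X,\Sigma;\Lambda)$ into rank-one pieces with integral data, which is what Manoliu's rank-one result requires.
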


\begin{proof}
The classical factor defining \(\Xi_{X,p}\) is the same leafwise Chern--Simons section, and the
torsion half-density appearing in Definition~\ref{def:analytic-boundary-state} is the same
canonical half-density on the Bohr--Sommerfeld leaf. Hence the two expressions coincide. This is precisely the boundary-state identification proved in the toral real-polarization
construction; see \cite[Secs.~2--3]{Galviz2}.
\end{proof}

\begin{remark}
The classical factor $e^{2\pi i\,\CS_{X,P,K}(\Theta_{p,\eta})}$
should be understood as the same leafwise Chern--Simons section occurring in the toral real-polarization construction. The precise identification with the geometric-quantization section \(\sigma_{X,p}\) is proved in \cite[Theorem 3.4]{Galviz2}.
\end{remark}

\section{The Topological Quantum Field Theory}
\subsection{Functoriality}
The full extended TQFT axioms for the toral theory were verified in the geometric-quantization construction \cite{Galviz2}. In the present paper we recover the same unweighted bordism values by functional-integral methods, and therefore it is enough to record the cylinder normalization and the unweighted gluing law. The passage to the fully extended theory is then obtained, exactly as in \cite{Galviz2}, by incorporating the \(K\)-twisted boundary-weight convention which cancels the projective BKS phase.

\begin{theorem}
\label{thm:cylinder-toral}
Let \(\Sigma\) be a closed oriented surface, let \(L\subset H^1(\Sigma;\mathbb R)\) be a rational
Lagrangian, and let \(X=\Sigma\times I\). Then
\begin{equation}
\label{eq:cylinder-normalization-toral}
Z^{CS}_{\Sigma\times I}(\mathbb T,K)
=
|\det K|^{\frac14\dim H^1(\Sigma;\mathbb R)}
\,\operatorname{Id}_{\mathcal H_{\mathbb T,K}(\Sigma,L)}.
\end{equation}
\end{theorem}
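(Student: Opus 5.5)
Since $m_{\Sigma\times I}$ is defined by Definition~\ref{def:mX-toral}, the first step is a short computation of this exponent for the cylinder. Write $X=\Sigma\times I$ with $\partial X=\Sigma\sqcup\overline{\Sigma}$. Then $H^q(X;\mathbb R)\cong H^q(\Sigma;\mathbb R)$, so $b_0^{\mathrm{abs}}=1$ and $b_1^{\mathrm{abs}}=\dim H^1(\Sigma;\mathbb R)=2g$. By Poincar\'e--Lefschetz duality, $H^q(X,\partial X;\mathbb R)\cong H_{3-q}(X;\mathbb R)$, which gives $b_0^{\mathrm{rel}}=0$ and $b_1^{\mathrm{rel}}=b_2(\Sigma)=1$. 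Therefore
\[
m_{\Sigma\times I}=\tfrac14(2g+1-1-0)=\tfrac14\dim H^1(\Sigma;\mathbb R),
\]
which is exactly the exponent of $|\det K|$ in \eqref{eq:cylinder-normalization-toral}.

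Next I would identify the operator determined by the boundary state. The Lagrangian $L_X\subset H^1(\Sigma\sqcup\overline\Sigma;\mathfrak t)$ is the diagonal. Hence the Bohr--Sommerfeld leaves $\Lambda_{X,p}$ are the diagonal copies of the leaves $\ell\times\ell$, where $\ell$ runs over $\mathcal{BS}(\Sigma,L)$. Since $X$ retracts onto $\Sigma$, every torsion class is trivial, so $\operatorname{Tor}H^2(X;\Lambda)=0$ and only one sector contributes. By Theorem~\ref{cor:boundary-half-density-analytic} and Corollary~\ref{cor:analytic-equals-gq}, the eta-corrected relative Gaussian value is
\[
\Xi_X=|\det K|^{m_X}\,\sigma_X\otimes\mu_X .
\]
I would then verify two things directly.
\begin{enumerate}[label=(\alph*)]
\item The Chern--Simons section $\sigma_X$ on the diagonal is the canonical pairing of $L_{\Sigma,K}$ with its dual. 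The fiber over $(\eta,\eta)$ is $L_\eta\otimes L_\eta^{*}$, and the action of a flat connection pulled back from $\Sigma$ is trivial, so $\sigma_X$ is the identity element of this fiber.
\item The relative torsion half-density $\mu_X$ of $(\Sigma\times I,\partial)$ is the canonical pairing of $|\Det(L\otimes\mathfrak t)^*|^{1/2}$ with its dual. This holds because the torsion of a product with an interval is trivial relative to the boundary identification.
\end{enumerate}
Under the identification \eqref{eq:toral-Hilbert-space-unified}, the state therefore becomes $|\det K|^{m_X}\sum_\ell \mathrm{Id}_\ell$, which is $|\det K|^{m_X}\operatorname{Id}_{\mathcal H_{\mathbb T,K}(\Sigma,L)}$.

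I expect step (b) to be the main obstacle: showing that the torsion half-density on the diagonal leaf is exactly the canonical pairing, with no stray normalization. Two sources of extra constants need to be ruled out. One is the covolume of the relative lattice $H^1(X,\partial X;\Lambda)$. The other is the one-dimensional relative harmonic class $dt$ coming from $b_1^{\mathrm{rel}}=1$; this class contributes the length of $I$, which must cancel against the Faddeev--Popov factor. My first approach would be to compute everything explicitly with the product metric. I would separate variables on $\Sigma\times I$ and compare the determinants in \eqref{eq:mx-relative} with those of $\Sigma$ together with the Dirichlet and Neumann Laplacians on $I$. I would then invoke Lemma~\ref{lem:boundary-zeta-sum} on the double $\Sigma\times S^1$, where the torsion is known to be trivial, to conclude that the residual scalar factor equals $1$. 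If this computation becomes unwieldy, I would instead cite the rank-one cylinder computation of Manoliu and take the $\rk\mathfrak t$-fold tensor power, exactly as in the proof of Proposition~\ref{prop:torsion-rewrite}. The only $K$-dependence then comes from Theorem~\ref{thm:mx-relative}.
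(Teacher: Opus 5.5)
\textbf{Where the argument breaks.} Your computation $m_{\Sigma\times I}=\tfrac14(2g+1-1-0)=g/2$ and your point (a) are fine. They match the paper, which also says the classical factor is the identity kernel and that the Gaussian side contributes only the scalar $|\det K|^{m_X}$. The gap is in how you turn the boundary state into an operator on $\mathcal H_{\mathbb T,K}(\Sigma,L)$.

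Since $\operatorname{Tor}H^2(\Sigma\times I;\Lambda)=0$, there is a single leaf $\Lambda_X$. It is the whole diagonal torus in $\mathcal M_{\overline\Sigma}(\mathbb T)\times\mathcal M_\Sigma(\mathbb T)$, of dimension $2g\,\rk\mathfrak t$, and it is a leaf of the diagonal polarization $L_X$. It is not a union of ``diagonal copies of $\ell\times\ell$''; those are the $g\,\rk\mathfrak t$-dimensional sets $\{(\eta,\eta):\eta\in\ell\}$. Correspondingly, $\mu_X$ is a half-density on $L_X\cong H^1(\Sigma;\mathfrak t)$, not a pairing of $|\Det(L\otimes\mathfrak t)^*|^{1/2}$ with its dual, so (b) as stated is ill-typed.

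Elements of $\operatorname{End}\mathcal H_{\mathbb T,K}(\Sigma,L)$ live in the polarization $(L\oplus L)\otimes\mathfrak t$. Passing from $L_X$ to it requires the BKS pairing of these two non-transverse polarizations, which meet in the diagonal of $L\otimes\mathfrak t$. That pairing identifies the transverse directions through $\omega_{\Sigma,K}$, so it carries its own power of $|\det K|$.

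\textbf{Why this factor matters.} It is not a harmless constant. Two half-densities on $H^1(\Sigma;\mathfrak t)$ are in play: the lattice-normalized one, which is what the $K$-independent torsion $\mu_X$ provides, and the Liouville half-density of $\omega_{\Sigma,K}$. They differ by $|\mathrm{Pf}(\omega_{\Sigma,K})|^{1/2}=|\det K|^{g/2}=|\det K|^{m_{\Sigma\times I}}$, which is exactly the factor the theorem is about. So depending on the normalization of the pairing, your leafwise identification would shift the answer by $|\det K|^{\pm g/2}$.

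Your list of stray constants (the relative lattice covolume and the $dt$ class) does not include this one. The fallback to Manoliu's rank-one computation plus a $\rk\mathfrak t$-fold tensor power does not fix it either. That argument only controls the $K$-independent torsion factor, as in Proposition~\ref{prop:torsion-rewrite}, and says nothing about a pairing built from $\omega_{\Sigma,K}$ when $K$ is not diagonal over $\mathbb Z$. Your claim that ``the only $K$-dependence comes from Theorem~\ref{thm:mx-relative}'' is therefore exactly what remains unproved.

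\textbf{How the paper closes this step.} It imports the real-polarization normalization of \cite{Galviz2}: $\dim\mathcal H_{\mathbb T,K}(\Sigma,L)=|G_K|^g$, and the transverse BKS Fourier kernel is normalized by $|G_K|^{-g/2}$, so the geometric-quantization cylinder is $|G_K|^{g/2}\operatorname{Id}$. It then matches the functional-integral scalar $|\det K|^{m_{\Sigma\times I}}=|\det K|^{g/2}$ against this. You need to fix the BKS convention in the same way, or compute the pairing explicitly, before the identity-kernel argument yields the stated formula.
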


\begin{proof}
The classical Chern--Simons section of the cylinder is the identity kernel on the prequantum line. The normalization of this kernel is fixed by the same real-polarization geometry as in \cite{Galviz2}: if \(\Sigma\) has genus \(g\), then
\[
\dim H_{\mathbb T,K}(\Sigma,L)=|G_K|^g=|\det K|^g,
\]
and the transverse BKS Fourier kernel is normalized by the factor \(|G_K|^{-g/2}\). Equivalently, the cylinder contributes
\[
|G_K|^{g/2}=|\det K|^{g/2}=|\det K|^{\frac14 \dim H^1(\Sigma;\mathbb R)}.
\]
On the functional-integral side, the classical factor is again the identity kernel, and the only remaining scalar comes from the Gaussian determinant factor. Since
\[
m_{\Sigma\times I}=\frac14 \dim H^1(\Sigma;\mathbb R),
\]
the two normalizations agree, and therefore one recovers \eqref{eq:cylinder-normalization-toral}.
\end{proof}

\begin{theorem}
\label{thm:gluing-toral}
Suppose \(X\) is obtained by gluing a compact oriented bordism \(X^{\mathrm{cut}}\) along an
oriented closed surface \(\Sigma\). Then the rigorous toral functional-integral values satisfy
\begin{equation}
\label{eq:gluing-formula-toral}
Z^{CS}_X(\mathbb T,K)
=
\operatorname{Tr}_\Sigma\bigl(
Z^{CS}_{X^{\mathrm{cut}}}(\mathbb T,K)
\bigr),
\end{equation}
where the trace contracts along the boundary Hilbert space
\(\mathcal H_{\mathbb T,K}(\Sigma,L)\) with the cylinder kernel.
\end{theorem}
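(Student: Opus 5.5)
The plan is to reduce the gluing law to three ingredients already established: the factorization of the Gaussian sector values into classical phase, torsion half-density and the scalar $|\det K|^{m}$ (Theorem~\ref{thm:closed-torsion-eta}, Theorem~\ref{cor:boundary-half-density-analytic}, Corollary~\ref{cor:analytic-equals-gq}); multiplicativity of each of these pieces under cutting; and the cylinder normalization of Theorem~\ref{thm:cylinder-toral}, which fixes how the trace over $\mathcal H_{\mathbb T,K}(\Sigma,L)$ is paired. I would treat the three factors separately and then reassemble the sum over torsion classes.

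First, the torsion classes. I would write $\partial X^{\mathrm{cut}}=\Sigma\sqcup\overline\Sigma$ and use the Mayer--Vietoris sequence to relate $\operatorname{Tor}H^2(X;\Lambda)$ to pairs $(p^{\mathrm{cut}},\ell)$. Here $p^{\mathrm{cut}}\in\operatorname{Tor}H^2(X^{\mathrm{cut}};\Lambda)$, and $\ell$ is a Bohr--Sommerfeld leaf on which the two boundary values agree. Contracting with the cylinder kernel restricts $\Xi_{X^{\mathrm{cut}},p^{\mathrm{cut}}}$ to the diagonal $\eta_\Sigma=\eta_{\overline\Sigma}$. The flat connections on $X$ in class $p$ are exactly the gluings of flat connections $\Theta_{p^{\mathrm{cut}},\eta}$ with matching boundary values. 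The sum over these matchings, weighted by $1/\#\operatorname{Tor}$, should reproduce the prefactor $1/\#\operatorname{Tor}H^2(X;\Lambda)$. The kernel sizes $|G_K|^g$ account for the discrepancy between the torsion counts of $X^{\mathrm{cut}}$ and $X$.

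Second, the classical factor. The pairing in the prequantum line $L_{\Sigma,K}\otimes L_{\overline\Sigma,K}\to\mathbb C$ sends the section $e^{2\pi i\CS_{X^{\mathrm{cut}}}(\Theta_{p^{\mathrm{cut}},\eta})}$, evaluated on the diagonal, to $e^{2\pi i\CS_X(\Theta_p)}$. This follows from Definition~\ref{def:exp-CS-toral}(ii): a filling of the glued connection is obtained by gluing fillings along $\Sigma\times I$, and the comparison-of-fillings argument of Proposition~\ref{prop:well-defined-exp-action-toral} shows that the phase is multiplicative.

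Third, the determinant and torsion part, which I expect to be the \textbf{main obstacle}. Two facts are needed.
\begin{enumerate}\renewcommand{\labelenumi}{(\roman{enumi})}
\item The half-densities must glue: the relative torsion half-density $\mu_{X^{\mathrm{cut}},p}$, integrated over the diagonal leaf against the BKS-normalized cylinder kernel, should give the torsion density $T_X(\mathfrak t)^{1/2}$ integrated over $\mathcal M_{X,p}(\mathbb T)$. I would try to deduce this from the multiplicativity of Ray--Singer torsion along the Mayer--Vietoris long exact sequence (Vishik/Lück gluing for product metrics). Each $\mathfrak t$-component is the rank-one statement, so one takes the $\operatorname{rk}\mathfrak t$-fold tensor power, exactly as in Proposition~\ref{prop:torsion-rewrite}. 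The lattice covolumes of $H^1(\cdot;\Lambda)$ in the torsion should combine with the $|G_K|^{-g/2}$ Fourier normalization.
\item The scalar factors must match. This is the identity
\[
m_X = m_{X^{\mathrm{cut}}}-\tfrac14\dim H^1(\Sigma;\mathbb R)\cdot 2+\tfrac12\dim H^1(\Sigma;\mathbb R)\cdot\tfrac12,
\]
or more precisely whatever combination the cylinder contraction produces. I would establish it by a Mayer--Vietoris Euler-characteristic count on the quantities $b_q^{\mathrm{abs}}+b_q^{\mathrm{rel}}$ entering Definition~\ref{def:mX-toral}. Equivalently, via Lemma~\ref{lem:boundary-zeta-sum}, these are Betti numbers of doubles, for which additivity under gluing is cleaner. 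The $|\det K|^{\frac14\dim H^1(\Sigma)}$ from Theorem~\ref{thm:cylinder-toral} is precisely what should absorb the defect.
\end{enumerate}

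Finally, the eta-phases need no separate treatment. They were removed sectorwise in Theorems~\ref{thm:closed-torsion-eta} and~\ref{cor:boundary-half-density-analytic}, so by metric independence no gluing formula for eta-invariants is required. Assembling the three factors and summing over $p$ yields \eqref{eq:gluing-formula-toral}.
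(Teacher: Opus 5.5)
Your outline matches the paper's term-by-term argument: multiplicativity of the classical section under gluing of fillings, the determinant-line gluing theorem for torsion, and bookkeeping of the leftover scalars through the cylinder normalization and the exponent $m_X$. However, your scalar step (ii) would fail as formulated.

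Your displayed identity simplifies to $m_X=m_{X^{\mathrm{cut}}}-\frac14\dim H^1(\Sigma;\mathbb R)$. This is already false for $X^{\mathrm{cut}}=\Sigma\times I$ glued to $X=\Sigma\times S^1$: there $m_{X^{\mathrm{cut}}}=g/2$, while $m_X=\frac12(b_1(\Sigma\times S^1)-1)=g$. More to the point, no identity of the form $m_X=m_{X^{\mathrm{cut}}}+c(\Sigma)$ exists. From $\chi(X^{\mathrm{cut}})=\frac12\chi(\partial X^{\mathrm{cut}})$, Lefschetz duality and Mayer--Vietoris one gets
\[
m_X=m_{X^{\mathrm{cut}}}-\tfrac14\dim H^1(\Sigma;\mathbb R)+\tfrac12\dim\bigl(L_{X^{\mathrm{cut}}}\cap\Gamma\bigr),
\]
where $\Gamma\subset H^1(\Sigma;\mathbb R)\oplus H^1(\overline\Sigma;\mathbb R)$ is the graph of the gluing map. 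For example, $T^2\times I$ glued by the identity (giving $T^3$) yields $m_X-m_{X^{\mathrm{cut}}}=+\frac12$, while gluing by a hyperbolic $\phi\in SL_2(\mathbb Z)$ yields $-\frac12$. Your formula is only the transverse case $L_{X^{\mathrm{cut}}}\cap\Gamma=0$.

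Consequently an Euler-characteristic count cannot produce the identity. $m_X$ is a Betti-number combination, not an Euler characteristic: $\chi=0$ for every closed $X$, while $m_X=\frac12(b_1-1)$ varies, and Betti numbers of doubles are not additive under gluing. For the same reason, the cylinder factor $|\det K|^{\frac14\dim H^1(\Sigma;\mathbb R)}$ depends only on $\Sigma$ and cannot absorb the defect.

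What is missing is an actual computation of the power of $|\det K|$ produced by the contraction. Beyond any factor depending only on $\Sigma$, the trace must contribute $|\det K|^{\frac12\dim(L_{X^{\mathrm{cut}}}\cap\Gamma)}$. This has to come out of the BKS pairing of the $L_{X^{\mathrm{cut}}}$-polarized state against the gluing polarization. That pairing couples the transverse directions through $\omega_{\Sigma,K}$ and integrates over intersection tori of dimension $\dim(L_{X^{\mathrm{cut}}}\cap\Gamma)$; the count of Bohr--Sommerfeld leaves meeting the diagonal also enters.

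A related problem affects your torsion-class step. Since $H^2(\,\cdot\,;\Lambda)\cong H^2(\,\cdot\,;\mathbb Z)\otimes\Lambda$, the numbers $\#\Tor H^2$ do not depend on $K$. So the $K$-dependent $|G_K|^g$ cannot account for their discrepancy; for instance, $\Sigma\times I$ and $\Sigma\times S^1$ are both torsion-free. That normalization must instead be matched by the lattice covolumes of $H^1(\,\cdot\,;\Lambda)$ entering the torsion half-densities.

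The paper's proof asserts the scalar matching in a single sentence. Whichever way the argument is completed, the intersection-dependent term above has to be produced explicitly.
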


\begin{proof}
We compare the two sides term by term. Contraction with the cylinder kernel identifies
the classical leafwise Chern--Simons section on the cut bordism with the corresponding
section on the glued bordism, since the exponentiated classical Chern--Simons action is
multiplicative under gluing. Likewise, the determinant-line gluing theorem for torsion
implies that contraction sends the torsion half-density of the cut bordism to the torsion
half-density of the glued bordism, up to the scalar coming from the change of normalization
between compatible torsion components.

The remaining scalar factors are exactly those already determined by the cylinder
normalization of Theorem~\ref{thm:cylinder-toral} and the  identity for the exponent \(m_X\).
Hence the total power of \(|\det K|\) after contraction is precisely \(|\det K|^{m_X}\), and the
normalizing factor coming from the torsion-component average is exactly the one appearing
in the definition of \(Z_X^{\mathrm{CS}}(\mathbb T,K)\). Therefore contraction of the cut bordism state with
the cylinder kernel reproduces the glued bordism state:
\[
Z_X^{\mathrm{CS}}(\mathbb T,K)=\operatorname{Tr}_\Sigma\!\bigl(Z_{X^{\mathrm{cut}}}^{\mathrm{CS}}(\mathbb T,K)\bigr).
\qedhere
\]
\end{proof}

\begin{corollary}
The assignments
\[
(\Sigma, L)\longmapsto H_{\mathbb T,K}(\Sigma,L), \qquad
X\longmapsto Z^{CS}_X(\mathbb T,K),
\]
define a \(2+1\)-dimensional toral Chern--Simons TQFT. For closed \(X\), its partition
function is given by Corollary~\ref{cor:closed-total-partition}; for manifolds with boundary, its state is given by
Corollary~\ref{cor:analytic-equals-gq}.
\end{corollary}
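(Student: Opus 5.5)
The plan is to deduce the TQFT structure from the results already proved for the two kinds of bordisms, then check the functor axioms one at a time.

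\textbf{Assignments.} To a surface with Lagrangian $(\Sigma,L)$ we assign the real-polarization space \eqref{eq:toral-Hilbert-space-unified}. It is finite-dimensional of dimension $|\det K|^g$ by the Bohr--Sommerfeld count, and it is well defined because the Bohr--Sommerfeld leaves and half-density bundles are those of \cite{Galviz2}. For a closed $3$-manifold $X$ the assigned number $Z^{CS}_X(\mathbb T,K)$ is the sum of Corollary~\ref{cor:closed-total-partition}. It is topological because Theorem~\ref{thm:closed-torsion-eta} makes each torsion-class summand independent of the metric, of $\Theta_p$, and of the choice of $P$. For a bordism with boundary the assigned vector is
\[
\frac{1}{\#\Tor H^2(X;\Lambda)}\sum_{p}\Xi_{X,p}.
\]
By Theorem~\ref{cor:boundary-half-density-analytic} each $\Xi_{X,p}$ is a flat leafwise section, so this sum lies in $\mathcal H_{\mathbb T,K}(\Sigma,L_X)$. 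By Corollary~\ref{cor:analytic-equals-gq} it equals the geometric-quantization vector $|\det K|^{m_X}\sigma_{X,p}\otimes\mu_{X,p}$.

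\textbf{Axioms.} The axioms are checked in the following order.
\begin{enumerate}[label=(\arabic*)]
\item \emph{Diffeomorphism invariance.} Every ingredient is natural under orientation-preserving diffeomorphisms: the Chern--Simons phase, torsion, $m_X$ and the Bohr--Sommerfeld leaves.
\item \emph{Multiplicativity under disjoint union.} Determinants, torsions and torsion groups all factor over components. One also needs $m_{X\sqcup Y}=m_X+m_Y$, which holds because $m$ is defined by additive Betti-number counts.
\item \emph{Orientation reversal.} This should give the dual, or complex conjugate, space: reversing orientation sends $K\mapsto -K$, and hence $\omega_{\Sigma,K}\mapsto-\omega_{\Sigma,K}$.
\item \emph{Identity and composition.} Theorem~\ref{thm:cylinder-toral} shows that the cylinder acts as a nonzero multiple of the identity. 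Theorem~\ref{thm:gluing-toral} gives the gluing law. Applying the gluing law to $X_1\cup_\Sigma X_2$ as a self-gluing of the cut bordism $X_1\sqcup X_2$ turns the trace formula into the composition rule $Z_{X_1\cup X_2}=Z_{X_2}\circ Z_{X_1}$.
\end{enumerate}

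\textbf{Main obstacle.} The cylinder scalar $|\det K|^{g/2}$ is not $1$, and the BKS pairing of real polarizations gives projective phases. So the raw assignment is a TQFT only after normalization. The first fix I would try is to rescale the inner product on $\mathcal H_{\mathbb T,K}(\Sigma,L)$, or equivalently to weight each boundary state by $|\det K|^{-\frac14\dim H^1(\Sigma)}$. This would make the cylinder act as the identity, and $m_X$ would transform consistently under gluing by the additivity identity used in Theorem~\ref{thm:gluing-toral}. The remaining projective phase would then be absorbed through the $K$-twisted boundary-weight convention of \cite{Galviz2}, as the paper indicates.

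With these normalizations, functoriality reduces to the statements already proved. The corollary would then follow in the sense of \cite{Atiyah:1989vu}, with the closed and boundary values as stated.
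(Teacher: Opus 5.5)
Your proposal follows essentially the same route as the paper, which also proves the corollary by assembling the classical boundary data, the closed value (Theorem~\ref{thm:closed-torsion-eta}, Corollary~\ref{cor:closed-total-partition}), the boundary state (Theorem~\ref{cor:boundary-half-density-analytic}, Corollary~\ref{cor:analytic-equals-gq}), the cylinder and gluing theorems, and the $K$-twisted boundary-weight convention of \cite{Galviz2}; your extra checks for diffeomorphisms, disjoint unions, orientation reversal, and composition via self-gluing of $X_1\sqcup X_2$ go beyond what the paper records. Two small corrections to your normalization step: first, rescaling the inner product on $\mathcal H_{\mathbb T,K}(\Sigma,L)$ by $|\det K|^{-\frac14\dim H^1(\Sigma;\mathbb R)}$ corresponds to weighting the state of $X$ by $|\det K|^{-\frac18\dim H^1(\partial X;\mathbb R)}$, i.e.\ half the cylinder exponent per boundary surface, whereas putting the full $|\det K|^{-\frac14\dim H^1}$ on each boundary surface would turn the cylinder into $|\det K|^{-g/2}\operatorname{Id}$ (the paper instead builds this factor into the contraction ``with the cylinder kernel'' in Theorem~\ref{thm:gluing-toral}); second, the exponent bookkeeping under gluing is not plain additivity of $m_X$, since two solid tori (each with $m=0$) glue to $S^3$ with $m=-\tfrac12$ or to $S^2\times S^1$ with $m=0$, so the correction depends on $L_{X_1}\cap L_{X_2}$ and must come from the BKS pairing of \cite{Galviz2}, which is also where the paper defers it.
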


\begin{proof}
Propositions~\ref{prop:phase-space-toral}, \ref{prop:lagrangian-boundary-data-toral}, and \ref{eq:exp-action-closed-toral} give the classical boundary data. Theorem~\ref{thm:closed-torsion-eta} gives the
closed-manifold value, Theorem~\ref{cor:boundary-half-density-analytic} and Corollary~\ref{cor:analytic-equals-gq} give the boundary state, and
Theorems~\ref{thm:cylinder-toral} and \ref{thm:gluing-toral} prove compatibility with cylinder and gluing together with the $K$-twisted boundary-weight convention of \cite{Galviz2}, these assignments define $2+1$-dimensional toral Chern–Simons TQFT.
\end{proof}
\begin{remark}
The analytic mechanism in this section comes from the exact stationary-phase treatment of
the Abelian Gaussian quotient integral, while the boundary prequantum line,
Bohr--Sommerfeld leaves, half-densities, and gluing normalization come from toral geometric
quantization in real polarization. In particular, The toral closed formula is
\[
\begin{aligned}
Z^{\mathrm{CS}}_{X,p}(\mathbb T,K)
&=
|\det K|^{m_X}\,
e^{2\pi i\,\operatorname{CS}_{X,P,K}(\Theta_p)}
\int_{\mathcal M_{X,p}(\mathbb T)}
\bigl(T_X(\mathfrak t)\bigr)^{1/2} \\
&=
|\det K|^{m_X}
\int_{\mathcal M_{X,p}(\mathbb T)}
\sigma_{X,p}\,\bigl(T_X(\mathfrak t)\bigr)^{1/2},
\end{aligned}
\]
where \(\sigma_{X,p}\) denotes the closed Chern--Simons section, locally given by $\sigma_{X,p}=e^{2\pi i\,\operatorname{CS}_{X,P,K}(\Theta_p)}.$ Accordingly, the corresponding boundary state is
\[
Z^{\mathrm{CS}}_{X,p}(\mathbb T,K)
=
|\det K|^{m_X}\,\sigma_{X,p}\otimes \mu_{X,p},
\]
in agreement with \cite{Galviz2}.
\end{remark}

\bibliographystyle{alpha}
\renewcommand{\refname}{References}
\bibliography{refs}
\end{document}